\documentclass[10pt,english,oneside,a4paper]{amsart}

\setlength{\textwidth}{6in}
\oddsidemargin = 20pt

\usepackage{concmath}
\usepackage{tikz-cd}

\usepackage[utf8x]{inputenc}
\usepackage[T1]{fontenc}
\usepackage{varioref}
\usepackage{amsmath}
\usepackage{amssymb}

\makeatletter
\theoremstyle{plain}
\theoremstyle{plain}
\newtheorem{theorem}{Theorem}
  \theoremstyle{plain}
  \newtheorem{lemma}[theorem]{Lemma}
  \theoremstyle{plain}
  \newtheorem{definition}[theorem]{Definition}
  \theoremstyle{plain}
  \newtheorem{proposition}[theorem]{Proposition}
  \theoremstyle{remark}
  \newtheorem*{note*}{Note}
  \theoremstyle{remark}
  \newtheorem*{conclusion*}{Conclusion}
  \theoremstyle{remark}
  \newtheorem{remark}[theorem]{Remark}
  \theoremstyle{definition}

  \theoremstyle{plain}
  \newtheorem{corollary}[theorem]{Corollary}



\usepackage{amsfonts}
\usepackage{amscd}
\usepackage[cmtip,arrow]{xy}
\usepackage{pb-diagram,pb-xy}
\usepackage{hyperref}

\usepackage[shortlabels]{enumitem}



\newcommand{\kf}{\mathfrak{k}}

\newcommand{\pf}{\mathfrak{p}}

\newcommand{\cI}{{\mathcal I}}
\newcommand{\cJ}{{\mathcal J}}
\newcommand{\cL}{{\mathcal L}}

\newcommand{\mR}{\mathbb{R}}
\newcommand{\mC}{\mathbb{C}}

\makeatother

\usepackage{babel}

\begin{document}

\title{Unified formalism for Palatini gravity}

\author{S. Capriotti}
\address{Departamento de Matem\'atica \\
Universidad Nacional del Sur, CONICET \\
 8000 Bah{\'\i}a Blanca \\
Argentina}

\begin{abstract}
  The present article is devoted to the construction of a unified formalism for Palatini and unimodular gravity. The basic idea is to employ a relationship between unified formalism for a Griffiths variational problem and its classical Lepage-equivalent variational problem. As a way to understand from an intuitive viewpoint the Griffiths variational problem approach considered here, we may say the variations of the Palatini Lagrangian are performed in such a way that the so called \emph{metricity condition}, i.e. (part of) the condition ensuring that the connection is the Levi-Civita connection for the metric specified by the vielbein, is preserved. From the same perspective, the classical Lepage-equivalent problem is a geometrical implementation of the Lagrange multipliers trick, so that the metricity condition is incorporated directly into the Palatini Lagrangian. The main geometrical tools involved in these constructions are canonical forms living on the first jet of the frame bundle for the spacetime manifold $M$. These forms play an essential r\^ole in providing a global version of the Palatini Lagrangian and expressing the metricity condition in an invariant form; with their help, it was possible to formulate an unified formalism for Palatini gravity in a geometrical fashion. Moreover, we were also able to find the associated equations of motion in invariant terms and, by using previous results from the literature, to prove their involutivity. As a bonus, we showed how this construction can be used to provide a unified formalism for the so called \emph{unimodular gravity} by employing a reduction of the structure group of the principal bundle $LM$ to the special linear group $SL\left(m\right),m=\mathop{\text{dim}}{M}$.
\end{abstract}

\keywords{Exterior differential systems, variational problems, unified formalism, Palatini gravity, unimodular gravity, connection bundle}

\subjclass[2010]{53C80,70S05,58A15,58A20,49S05,83C05,53C05,53C50}

\email{santiago.capriotti@uns.edu.ar}

\thanks{
  The author thanks CONICET for finantial support, and as a member of research projects PIP 11220090101018 and PICT 2010-2746 of the ANPCyT. 
}

\maketitle

\setcounter{tocdepth}{1}

\tableofcontents

\section{Introduction}
The search of a Hamiltonian setting for General Relativity has a long and distinguished history. Among the first works dealing with this problem, we can mention \cite{PhysRev.79.986,PhysRev.87.452,PhysRev.83.1018} and the works from Dirac \cite{10.2307/100496,Dirac1958}. A coordinate-free formulation appeared in 1962 with the groundbreaking work of Arnowitt, Deser and Misner (a reprinting of this article can be found in \cite{citeulike:820116}). For a modern account, see for example \cite{bojowald2010canonical,9780521830911,Thiemann:2007zz} and references therein. However, as these pictures require of a $3+1$-decomposition of space-time, they tend to hide the full covariance of the theory. 

From a general viewpoint, the so called \emph{multisymplectic description of field theory} was developed as a way to preserve this covariance (see \cite{dedecker1953calcul,Kijowski1973,kijowski79:_sympl_framew_field_theor,Carinena1991345,Gotay:2004ib,helein04:_hamil,9789810215873,helein:hal-00599691} and references therein). The formulation of general relativity from this geometrical viewpoint, both in the Lagrangian and the Hamiltonian realm, was carried out in several places (we can mention, for example, \cite{0264-9381-22-19-016,Vignolo2006,doi:10.1063/1.4890555,Ibort:2016xoo}); a purely multisymplectic formulation was given in \cite{0264-9381-32-9-095005}.

Nevertheless, the singular nature of the Lagrangians associated to field theory requires special techniques in order to succesfully construct the Hamiltonian counterpart of a variational problem. A way to overcome these problems is by means of the so called \emph{Skinner-Rusk or unified formalism} \cite{DeLeon:2002fc,2004JMP....45..360E,2014arXiv1402.4087P,375178}. Relevant features of this formulation are:
\begin{itemize}
\item In it, both the velocities and the momenta are present as degrees of freedom, and
\item in the underlying variational problem, the variations of the velocities are performed independently of the variations of the fields.
\end{itemize}

The present work continues the exploration of geometrical formulations of the Lagrangian and Hamiltonian versions of General Relativity (GR from now on). More specifically, it deals with a unified formalism \cite{1751-8121-40-40-005,2004JMP....45..360E,Vitagliano2010857} for Palatini version of GR, pursuing the work initiated in \cite{Gaset:2017ahy} for Einstein-Hilbert action. Also, a novel unified version of \emph{unimodular gravity} is obtained as a by-product of the geometrical tools employed in the article.

Our starting point will be the Griffiths variational problem considered in \cite{capriotti14:_differ_palat} for Palatini gravity; this variational problem can be considered as an alternative Lagrangian version for the formulation of vacuum GR equations in terms of exterior differential systems, as it is given for example in \cite{PhysRevD.71.044004}. The method chosen for the construction of the unified formalism comes from the work of Gotay \cite{GotayCartan,Gotay1991203}, and it consists into the employment of a \emph{classical Lepage-equivalent variational problem} (in the sense defined by Gotay in these works) as a replacement for the Griffiths variational problem at hands. The new problem becomes a particular instance of the $m$-phase space theory in the sense of Kijowski \cite{Kijowski1973}.

Roughly speaking, a Griffiths variational problem is a variational problem in which the variations are selected to preserve a specific relationship between the involved fields; in the particular case of Palatini gravity, the degrees of freedom are a moving frame and a connection, and they can be interpreted as forming an element in the jet bundle of the frame bundle (see Equation \eqref{eq:JetToConnectionPlusFrame} and Theorem \ref{thm:metr-constr-interpretation} below). In this setting our variations will be performed in such a way that the connection is always the Levi-Civita connection for the metric associated to the vielbein. It will be achieved here by assuming the \emph{metricity condition} as the basic constraint; this condition is equivalent to the annihilation of the so called \emph{nonmetricity tensor}, as it appeared in \cite{Friedric-1978} (for a more explicit explanation on this, see Section \ref{sec:null-trace-constr}). It should be stressed that this condition is weaker than the most common restriction found when working with variational problems on a jet space, namely, the set of constraints imposed on the fields by the \emph{contact structure} (see Section \ref{sec:geom-prel}). On the other hand, the classical Lepage-equivalent is a geometrical construction equivalent to the well-known ``Lagrange multiplier trick'', where the constraints on the fields are incorporated as terms into the Lagrangian of the theory. For articles employing this trick, see \cite{Ray1975,Safko1976,Friedric-1978}, although in these references the components of the metric tensor instead of vielbeins are used as degrees of freedom besides connection variables. Through it, we will be able to find a unified formulation for Palatini and unimodular GR.

In more geometrical terms, the definition of \emph{classical Lepage-equivalent problem} goes as follows: We begin with a bundle $p:F\rightarrow M$ with $m:=\dim{M}$, an $m$-form $\lambda$ on $F$ and a set of restrictions encoded as an exterior differential system $\cI$ on $F$ \cite{nkamran2000,BryantNine,CartanBeginners,BCG}; furthermore, it is necessary to admit that $\cI$ is locally generated by sections of a subbundle $I\subset\wedge^\bullet\left(T^*F\right)$. Under these assumptions, Gotay showed in the previously cited works how to construct another variational problem, its \emph{classical Lepage-equivalent problem}, whose underlying bundle is the affine subbundle of forms
\[
W_\lambda:=\left(\lambda+I\right)\cap\wedge^m\left(T^*F\right),
\]
and where the new Lagrangian form is calculated as the pullback of the canonical $m$-form on $\wedge^m\left(T^*F\right)$ to $W_\lambda$. The degrees of freedom along the fibers of $I$ play the r\^ole of Lagrange multipliers, and the Lagrangian of the theory is recovered through the pullback construction described above. It is possible to show that in this new variational problem, any extremal of the Lepage-equivalent problem projects onto extremal sections of the original variational problem; in the language used in \cite{GotayCartan}, it is said that the classical Lepage-equivalent problem is \emph{covariant}\footnote{We cannot avoid here the accidental lack of uniqueness in the terminology; the covariance property of a Lepage-equivalent problem does not have direct relationship with the invariance of the underlying variational problems respect to general changes of coordinates.}. However, there is no general proof of the \emph{contravariance} of a Lepage-equivalent variational problem, namely, the fact that every critical section of the original variational problem can be lifted to a critical section of the Lepage-equivalent, and so it must be proved in each case separately.

A key feature of the previous scheme is that, when $F=J^1\pi$ for some bundle $\pi:E\rightarrow M$, $\lambda$ is a Lagrangian density and the exterior differential system $\cI$ is the contact structure in $J^1\pi$, the classical Lepage-equivalent problem yield to the unified formalism as it is described in \cite{2004JMP....45..360E}.

Thus, for setting a suitable unified formalism for Palatini gravity, we will apply the Gotay scheme; this brings us to the problem of constructing a classical Lepage-equivalent problem for the Griffiths variational problem considered in \cite{capriotti14:_differ_palat}, and to show its contravariance. Part of the present article is devoted to this task. In the final part we will show how to obtain the equations of motion from the formalism, and how the existence of a fundamental volume form gives rise to a unified formalism for unimodular gravity.

The organization of the paper is the following: In Section \ref{sec:geom-prel} the geometrical tools and conventions to be employed throughout the article are introduced, as well as the relevant definitions regarding Griffiths variational problems and its classical Lepage-equivalents. The canonical forms on the jet space of the bundle of frames are also introduced in this section. The unified formalism for Palatini gravity is described in Section \ref{sec:griff-vari-probl}, where a discussion about the metricity constraint is carried out. To work with the equations of motion of the unified formalism requires the choice of a basis for the vertical vector fields of the underlying bundle. To this end is devoted Section \ref{sec:convenient-basis}: The selection of a connection on $LM$ allows us to construct a basis on $LM$, and its elements are lifted to the jet space through the canonical lifts. In the first part of Section \ref{sec:veloc-mult-space} the existence of a direct product structure on the bundle of forms belonging to the basic bundle of the unified formalism is used to find a basis of vector fields on this bundle, suitable to work with the equations of motion. The equations of motion are explicity written in the final part of this section, and its involutivity analised. Finally, an unified formalism for unimodular gravity is discussed in Section \ref{sec:mult-unim-grav}.

\section{Geometrical preliminaries}
\label{sec:geom-prel}

The spacetime manifold will be indicated with $M$, and it will have dimension $m$. Throughout the article, lower case latin indices $i,j,k,l,m,\cdots$ will refer to coordinates in the tensor products of the vector space $\mR^m$ and its dual; they will run from $1$ to $m$. With this convention in mind, the canonical basis in $\mR^m$ will be indicated as $\left\{e_i\right\}$, and its dual with $\left\{e^j\right\}$. Greek indices, on the other hand, will refer to indices associated to local coordinates in the spacetime manifold $M$; for this reason, they also will run from $1$ to $m$. Finally, upper case latin indices will be used for the representation of general fiber coordinates. Einstein convention regarding sum over repeated indices will be adopted.

The matrix
\[
\eta:=
\begin{bmatrix}
  1&\cdots&0&0\\
  \vdots&\ddots&\vdots&\vdots\\
  0&\cdots&1&0\\
  0&\cdots&0&-1
\end{bmatrix}\in GL\left(m\right)
\]
will set the metric on $\mR^m$. It should be stressed that there is nothing special in the signature chosen for $\eta$, and that the results reached in the article will work for any other signature.

The matrix $\eta$ determines a Lie algebra
\[
\mathfrak{u}\left(m-1,1\right):=\left\{A\in\mathfrak{gl}\left(m,\mC\right):A^\dagger\eta+\eta A=0\right\},
\]
which is a compact real form for the complexification $\mathfrak{gl}\left(m,\mC\right)=\mathfrak{gl}\left(m\right)\otimes_\mR\mC$. Another way to define this compact form is through the involution
\[
F:\mathfrak{gl}\left(m,\mC\right)\rightarrow\mathfrak{gl}\left(m,\mC\right):A\mapsto-\eta A^\dagger\eta;
\]
the eigenspaces of $F$, associated to the eigenvalues $\pm1$, induce the decomposition 
\[
\mathfrak{gl}\left(m,\mC\right)=\mathfrak{u}\left(m-1,1\right)\oplus\mathfrak{s}\left(m-1,1\right).
\]
This decomposition is the \emph{Cartan decomposition of $\mathfrak{gl}\left(m,\mC\right)$ associated to the compact real form $\mathfrak{u}\left(m-1,1\right)$}, and descends to $\mathfrak{gl}\left(m\right)\subset\mathfrak{gl}\left(m,\mC\right)$, namely
\[
\mathfrak{gl}\left(m\right)=\kf\oplus\pf,
\]
where
\[
\kf:=\mathfrak{u}\left(m-1,1\right)\cap\mathfrak{gl}\left(m\right),\qquad\pf:=\mathfrak{s}\left(m-1,1\right)\cap\mathfrak{gl}\left(m\right).
\]
Denoting $f:=F_{|\mathfrak{gl}\left(m\right)}$, we have that $\kf$ (resp. $\pf$) is the eigenspace corresponding to the eigenvalue $+1$ (resp. $-1$) for $f$. The projectors in every of these eigenspaces become
\[
\pi_\kf\left(A\right):=\frac{1}{2}\left(A-\eta A^T\eta\right),\qquad\pi_\pf\left(A\right):=\frac{1}{2}\left(A+\eta A^T\eta\right).
\]

There exists some facts related to this decomposition which could be useful when dealing with $\mathfrak{gl}\left(m\right)$-valued forms. First of all, given  $N$ a manifold and $\gamma\in\Omega^p\left(N,\mathfrak{gl}\left(m\right)\right)$, we will define
\[
\gamma_\kf:=\pi_\kf\circ\gamma,\qquad\gamma_\pf:=\pi_\pf\circ\gamma;
\]
if $\gamma=\gamma^i_jE^j_i$ is the expression of $\gamma$ in terms of the canonical basis of $\mathfrak{gl}\left(m\right)$, then we will have that
\begin{align*}
  &\left(\gamma_\kf\right)^i_j=\frac{1}{2}\left(\gamma^i_j-\eta_{jp}\gamma^p_q\eta^{qi}\right)\\
  &\left(\gamma_\pf\right)^i_j=\frac{1}{2}\left(\gamma^i_j+\eta_{jp}\gamma^p_q\eta^{qi}\right).
\end{align*}
Additional properties for this decomposition can be found in Appendix \ref{sec:cart-decomp-forms}.

We will make extensive use of jet bundle theory throughout the article, as it is presented in \cite{saunders89:_geomet_jet_bundl}. Thus, associated to every bundle $\pi:E\rightarrow M$ there exists an affine bundle $J^1\pi$ and maps $\pi_1:J^1\pi\rightarrow M,\pi_{10}:J^1\pi\rightarrow E$ fitting in the following commutative diagram
\begin{center}
  \begin{tikzcd}[row sep=1.2cm]
    J^1\pi
    \arrow{rr}{\pi_{10}}
    \arrow[swap]{dr}{\pi_1}
    &
    &
    E
    \arrow{dl}{\pi}
    \\
    &
    M
    &
  \end{tikzcd}
\end{center}
The elements of $J^1\pi$ are regarded as linear maps
\[
j_x^1s:T_xM\rightarrow T_eE
\]
such that $T_e\pi\circ j_x^1s=\text{id}_{T_xM}$. Every section $s:M\rightarrow E$ can be lifted to a section $j^1s:M\rightarrow J^1\pi$ through the formula
\[
j^1s\left(x\right):=T_xs.
\]
Sections of $\pi_1$ arising as lifts of section of $\pi$ are called \emph{holonomic sections}.

The \emph{contact structure on $J^1\pi$} is the ideal in $\Omega^\bullet\left(J^1\pi\right)$ locally generated by the contact forms
\[
du^A-u^A_\mu dx^\mu
\]
and its differentials. According to the next result, it fully characterizes the holonomic sections of $\pi_1$.
\begin{proposition}\label{prop:HolonomicSections}
  A section $s$ of $\pi_1$ is holonomic if and only if
  \[
  \sigma^*\left(du^A-u^A_\mu dx^\mu\right)=0.
  \]
\end{proposition}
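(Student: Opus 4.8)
The plan is to reduce the statement to a local coordinate computation, exploiting the fact that both holonomicity and the contact ideal are local notions. First I would fix adapted coordinates $\left(x^\mu,u^A\right)$ on $E$, inducing coordinates $\left(x^\mu,u^A,u^A_\mu\right)$ on $J^1\pi$, and write an arbitrary section $\sigma$ of $\pi_1$ as $\sigma\left(x\right)=\left(x^\mu,\sigma^A\left(x\right),\sigma^A_\mu\left(x\right)\right)$, where I set $\sigma^A:=u^A\circ\sigma$ and $\sigma^A_\mu:=u^A_\mu\circ\sigma$, and where $x^\mu\circ\sigma=x^\mu$ precisely because $\sigma$ is a section of $\pi_1$. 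The underlying section of $\pi$ is then $\bar\sigma:=\pi_{10}\circ\sigma$, with components $\bar\sigma\left(x\right)=\left(x^\mu,\sigma^A\left(x\right)\right)$.

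Next I would translate the two conditions into statements about these local components. From the defining formula $j^1s\left(x\right)=T_xs$ and the way the fiber coordinates $u^A_\mu$ are induced on $J^1\pi$, the lift of $\bar\sigma$ reads $j^1\bar\sigma\left(x\right)=\left(x^\mu,\sigma^A\left(x\right),\partial_\mu\sigma^A\left(x\right)\right)$. Hence $\sigma$ is holonomic, i.e. $\sigma=j^1\bar\sigma$, if and only if $\sigma^A_\mu=\partial_\mu\sigma^A$ for all $A$ and $\mu$.

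Then I would compute the pullback directly, using that pullback commutes with the exterior derivative and that $\sigma$ covers the identity on $M$:
\[
\sigma^*\left(du^A-u^A_\mu\,dx^\mu\right)=d\sigma^A-\sigma^A_\mu\,dx^\mu=\left(\partial_\mu\sigma^A-\sigma^A_\mu\right)dx^\mu.
\]
Since the $dx^\mu$ are pointwise linearly independent, this one-form vanishes if and only if $\partial_\mu\sigma^A=\sigma^A_\mu$ for every $A$ and $\mu$, which is exactly the holonomicity condition obtained in the previous step; this settles both implications at once.

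Finally, I would note that although the computation is carried out in a chart, the conclusion is chart-independent: the equality $\sigma=j^1\bar\sigma$ may be tested locally, and the contact ideal is globally well defined, so the vanishing of the pullback on each coordinate patch is equivalent to the global statement. The computation itself is entirely routine; the only point demanding care is the correct identification of the induced jet coordinates of a lift with the partial derivatives of the base section, which is where the definition $j^1s\left(x\right)=T_xs$ enters, and this is really the sole conceptual content of the argument.
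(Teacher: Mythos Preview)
Your proof is correct and is the standard argument. Note, however, that the paper does not actually supply a proof of this proposition: it is stated as a well-known fact from jet bundle theory (the paper cites Saunders' monograph for this background), so there is no proof in the paper to compare against. Your coordinate computation is exactly the expected one.
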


In the construction of the classical Lepage-equivalent problem associated to a given Griffiths variational problem it will be also necessary to have at our disposal some facts regarding spaces of forms on a fiber bundle $\pi:E\rightarrow M$. For every $k<l$, the set of $k$-horizontal $l$-forms on $E$ is defined by
\[
\left.\wedge^l_k\left(T^*E\right)\right|_e:=\left\{\alpha\in\wedge^l\left(T^*_eE\right):V_1\lrcorner\cdots\lrcorner V_k \lrcorner\alpha=0\quad\forall V_1,\cdots,V_k\in V_e\pi\right\}.
\]
We will indicate with
\[
i_k^l:\wedge^l_k\left(T^*E\right)\hookrightarrow\wedge^l\left(T^*E\right)
\]
the canonical immersions. The canonical $m$-form $\Theta$ on $\wedge^m\left(T^*E\right)$ is defined by the expression
\[
\left.\Theta\right|_{\alpha_e}\left(Z_1,\cdots,Z_m\right):=\alpha_e\left(T_{\alpha_e}\overline{\tau}\left(Z_1\right),\cdots,T_{\alpha_e}\overline{\tau}\left(Z_m\right)\right)
\]
Further properties of this form can be found in Appendix \ref{sec:canonical-k-form}. We will also set
\[
\Theta_2:=\left(i_2^m\right)^*\Theta,\qquad\Omega_2:=-d\Theta_2.
\]

In order to find local expressions let us choose adapted local coordinates $\left(x^\mu,u^A\right)$ on $U\subseteq E$. Then we have local coordinates $\left(x^\mu,u^A,p,p_A^\mu\right)$ on $\left(\overline{\tau}^m_E\right)^{-1}\left(U\right) \subseteq\wedge^m_2\left(T^*E\right)$ given by the condition $\gamma\in\left(\overline{\tau}^m_E\right)^{-1}\left(U\right)$ if and 
only if
\[
\gamma=pd^mx+p_A^\mu du^A\wedge d^{m-1}x_\mu
\]
where
\[
d^mx:=dx^1\wedge\cdots\wedge dx^m,\qquad d^{m-1}x_\mu:=\partial_\mu\lrcorner d^mx,\qquad\partial_\mu\equiv\frac{\partial}{\partial x^\mu}.
\]
In terms of these coordinates one has
\begin{align*}
  \Theta_2 \left(x^\mu,u^A,p,p_A^\mu\right) &=pd^mx+p_A^\mu du^A\wedge d^{m-1}x_\mu,\\
  \Omega_2 \left(x^\mu,u^A,p,p_A^\mu\right)&=-dp\wedge d^mx-dp_A^\mu\wedge du^A\wedge d^{m-1}x_\mu.
\end{align*}

\subsection{Griffiths variational problems}
\label{sec:teoria-lagrangiana}

These kind of variational problems were considered by Griffiths in \cite{book:852048}, and have been employed in geometry \cite{hsu92:_calcul_variat_griff,sabau_shibuya_2016,10.2307/2374654} and mathematical physics \cite{0264-9381-24-22-005,makhmali16:_differ}. The essential data for the construction of this version of variational theory is the following:

\begin{itemize}
\item A fiber bundle $p:F\rightarrow M$.
\item An $m$-form $\lambda\in\Omega^m\left(F\right)$, the \emph{Lagrangian form}.
\item An exterior differential system (EDS from now on) $\cI\in\Omega^\bullet\left(F\right)$, that is, an ideal in the exterior algebra of $F$ that is closed by exterior differentiation. 
\end{itemize}

\begin{remark}
  For first order Lagrangian field theory, the bundle $F$ is set to be the jet bundle $J^1\pi$ associated to a bundle $\pi:E\rightarrow M$, the Lagrangian form is in general a horizontal form $\lambda=L\pi_1^*\nu$, where $L\in C^\infty\left(J^1\pi\right)$ and $\nu$ is a volume form on $M$. In this case, the EDS becomes the contact structure of $J^1\pi$.
\end{remark}

\begin{definition}
  The \emph{Griffiths variational problem associated to the data $\left(F,\lambda,\cI\right)$} consists into finding a section
  $\sigma:M\rightarrow F$ stationary for the action
  \[
  S\left[\sigma\right]:=\int_M\sigma^*\lambda
  \]
  which is an integral section of $\cI$, namely, such that
  \[
  \sigma^*\alpha=0
  \]
  for every $\alpha\in\cI$. A section fulfilling these requeriments for a given variational problem is also called \emph{critical}.
\end{definition}

\begin{remark}
  We will assume the existence of these integrals as granted.
\end{remark}

\begin{remark}
  For first order Lagrangian field theory, Proposition \ref{prop:HolonomicSections} implies that a section of $J^1\pi$ will be integral for the contact structure if and only if it is holonomic. Thus the associated variational problem will translate into finding the sections $s:M\rightarrow E$ such that their lifts $j^1s:M\rightarrow J^1\pi$ are stationary for the action integral
  \[
  S\left[s\right]=\int_M\left(j^1s\right)^*\left(L\pi^*\nu\right)=\int_ML\left(j^1s\right)\nu,
  \]
  which is the Hamilton's principle \cite{Gotay:2004ib,campos10:_geomet_method_class_field_theor_contin_media} for this kind of field theories.
\end{remark}

\subsection{Unified formalism for a Griffiths variational problem}
\label{sec:form-hamilt-restr}

In order to construct a unified formalism for a given Griffiths variational problem $\left(F,\lambda,\cI\right)$, we need to assume that the EDS $\cI$ is locally generated by the set of sections of a vector subbundle $I\subset\wedge^\bullet\left(T^*F\right)$; it means that we can find an open cover $\left\{U_\alpha\right\}$ of $F$ such that the pullback of $\cI$ to each of the elements $U_\alpha$ of the cover is generated by sections of $I_{|U_\alpha}$. The following definitions are quoted from \cite{GotayCartan}.

Let us fix some integer $k$ such that
\[
\left.\lambda\right|_p\in\wedge^m_k\left(T_pF\right),\qquad I\subset\wedge^m_k\left(T^*F\right).
\]
We define the affine subbundle $W_\lambda\subset\wedge^m\left(T^*F\right)$ with the formula
\[
\left.W_\lambda\right|_p:=\left.\lambda\right|_p+I^m_p,
\]
where $I^m_p:=I\cap\wedge_k^m\left(T^*_pF\right)$ is the fiber composed of the $k$-horizontal $m$-forms of $I$ at $p\in F$. Also, we define the $m$-form $\Theta_\lambda$ as the pullback of the canonical $m$-form $\Theta\in\Omega^m\left(\wedge^m\left(T^*F\right)\right)$ to $W_\lambda$, and
\[
\Omega_\lambda:=d\Theta_\lambda.
\]

We will indicate with $\overline{\tau}_\lambda:W_\lambda\rightarrow F$ the canonical projection of this subbundle of forms.

\begin{definition}[Classical Lepage-equivalent variational problem]
  The \emph{classical Lepage-equivalent variational problem associated to $\left(F,\lambda,\cI\right)$} is the variational problem $\left(W_\lambda,\Theta_\lambda,0\right)$.
\end{definition}

The following theorem allows us to write down the equations of motion for a Lepage-equivalent variational problem.

\begin{theorem}\label{Thm:HamJac} 
  For a variational problem $\left(W_\lambda,\Theta_\lambda,0\right)$ the following statements are equivalent.
  \begin{enumerate}[label=(\roman*),ref=(\roman*)]\label{}
    \renewcommand{\theenumi}{$\text{\roman{enumi}}$}
    \renewcommand{\labelenumi}{(\theenumi)}
    \renewcommand{\theenumii}{\alph{enumii}}
  \item\label{Thm1} $\sigma:M\rightarrow W_\lambda$ is critical section for the action
    \[
    \widetilde{S}\left[\sigma\right]:=\int_M\sigma^*\Theta_\lambda.
    \]
  \item\label{Thm2} $\displaystyle\sigma^*\left(Z\lrcorner\Omega_\lambda\right)=0$ for every $Z\in\mathfrak{X}^{V\left(p\circ\overline{\tau}_\lambda\right)}\left(W_\lambda\right).$
  \item\label{Thm3} $\displaystyle\sigma^*\left(Z\lrcorner\Omega_\lambda\right)=0$ for every $Z\in\mathfrak{X}\left(W_\lambda\right).$
  \end{enumerate}
\end{theorem}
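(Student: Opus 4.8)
The plan is to establish the full equivalence by proving the two links $(i)\Leftrightarrow(ii)$ and $(ii)\Leftrightarrow(iii)$ separately; the first is the usual first-variation computation, while the second is a pointwise decomposition argument that exploits the fact that $\dim M=m$.

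For $(i)\Leftrightarrow(ii)$, write $\pi_M:=p\circ\overline{\tau}_\lambda:W_\lambda\to M$ for the total projection and let $Z\in\mathfrak{X}^{V(p\circ\overline{\tau}_\lambda)}(W_\lambda)$ be a compactly supported $\pi_M$-vertical field with flow $\phi_t$. Since $Z$ projects to zero on $M$, each $\sigma_t:=\phi_t\circ\sigma$ is again a section of $\pi_M$, so $\{\sigma_t\}$ is an admissible variation of $\sigma$. I would differentiate the action using $\frac{d}{dt}\big|_{0}\sigma_t^*\Theta_\lambda=\sigma^*\mathcal{L}_Z\Theta_\lambda$ together with Cartan's formula $\mathcal{L}_Z\Theta_\lambda=Z\lrcorner\,d\Theta_\lambda+d\left(Z\lrcorner\Theta_\lambda\right)=Z\lrcorner\Omega_\lambda+d\left(Z\lrcorner\Theta_\lambda\right)$. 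The exact term contributes $\int_M d\left(\sigma^*\left(Z\lrcorner\Theta_\lambda\right)\right)$, which vanishes by Stokes' theorem thanks to the compact support of $Z$; hence the first variation equals $\int_M\sigma^*\left(Z\lrcorner\Omega_\lambda\right)$, and criticality amounts to the vanishing of this integral for all such $Z$. To upgrade this integral identity to the pointwise statement $(ii)$ I would invoke the fundamental lemma of the calculus of variations in a $C^\infty(M)$-linear form: for $f\in C^\infty(M)$ the field $\left(f\circ\pi_M\right)Z$ is again vertical, and because $\pi_M\circ\sigma=\text{id}_M$ one has $\sigma^*\left(\left(\left(f\circ\pi_M\right)Z\right)\lrcorner\Omega_\lambda\right)=f\,\sigma^*\left(Z\lrcorner\Omega_\lambda\right)$. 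Thus $\int_M f\,\sigma^*\left(Z\lrcorner\Omega_\lambda\right)=0$ for every compactly supported $f$, which forces the $m$-form $\sigma^*\left(Z\lrcorner\Omega_\lambda\right)$ to vanish identically; a bump-function localization then removes the compact-support restriction on $Z$, yielding $(ii)$. The converse follows by running the computation backwards.

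For $(ii)\Leftrightarrow(iii)$ the implication $(iii)\Rightarrow(ii)$ is trivial, since $\pi_M$-vertical fields form a subset of $\mathfrak{X}(W_\lambda)$. For $(ii)\Rightarrow(iii)$, given an arbitrary $Z\in\mathfrak{X}(W_\lambda)$ I would decompose it along the image of $\sigma$ as $Z\circ\sigma=T\sigma\left(\xi\right)+Z^V$, where $\xi:=T\pi_M\circ Z\circ\sigma\in\mathfrak{X}(M)$ and $Z^V$ is $\pi_M$-vertical. The vertical part gives $\sigma^*\left(Z^V\lrcorner\Omega_\lambda\right)=0$ by $(ii)$. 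For the tangential part, the contraction identity $\sigma^*\left(\left(T\sigma\,\xi\right)\lrcorner\Omega_\lambda\right)=\xi\lrcorner\sigma^*\Omega_\lambda$ reduces matters to $\sigma^*\Omega_\lambda=\sigma^*d\Theta_\lambda=d\left(\sigma^*\Theta_\lambda\right)$, which is the exterior derivative of an $m$-form on the $m$-dimensional manifold $M$ and therefore vanishes. Hence $\sigma^*\left(Z\lrcorner\Omega_\lambda\right)=0$, establishing $(iii)$.

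I expect the main obstacle to be the careful bookkeeping in the $(i)\Leftrightarrow(ii)$ step: verifying that the flow of a vertical field genuinely preserves sections (so the variations are admissible), that the boundary term truly vanishes (via compact support or a prescribed condition on $\partial M$), and that the supply of vertical test fields is rich enough for the fundamental lemma to act fiberwise. Once the contraction identity $\sigma^*\left(Z\lrcorner\beta\right)=\xi\lrcorner\sigma^*\beta$ valid when $Z\circ\sigma=T\sigma\left(\xi\right)$, and the dimension count $\dim M=m$, are granted, the remainder of the argument is formal.
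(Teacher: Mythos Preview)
The paper states this theorem without proof; it is quoted as a known structural fact about classical Lepage-equivalent problems, in the spirit of Gotay's work cited just before. So there is no proof in the paper to compare against.

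Your argument is correct and is the standard one. The $(i)\Leftrightarrow(ii)$ step via Cartan's formula, Stokes, and the $C^\infty(M)$-linearity trick for the fundamental lemma is exactly how this is done in the multisymplectic literature; the $(ii)\Rightarrow(iii)$ step via the tangential/vertical splitting along the image of $\sigma$ and the degree count $\sigma^*\Omega_\lambda\in\Omega^{m+1}(M)=0$ is likewise standard. One small point worth tightening: in applying $(ii)$ to the vertical piece $Z^V$ you only have it defined along $\sigma(M)$, not as a global element of $\mathfrak{X}^{V(p\circ\overline\tau_\lambda)}(W_\lambda)$; you should remark that the pullback $\sigma^*(Z^V\lrcorner\Omega_\lambda)$ depends only on the values of $Z^V$ along $\sigma(M)$, so a local extension suffices. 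With that caveat made explicit, the proof is complete.
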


It should be noted that the equations of motion of a classical Lepage-equivalent variational problem are easier to write down than the equations of motion of the original problem, because the latter involves the EDS $\cI$, that restricts in a non trivial manner the allowed sections, whereas the former is a variational problem with this EDS set to $0$. Nevertheless, there is in principle no relationship between the critical sections of these variational problems. The following result partially fills this gap.

\begin{proposition}
  Any critical section for $\left(W_\lambda,\Theta_\lambda,0\right)$ projects onto a critical section of $\left(F,\lambda,\cI\right)$.
\end{proposition}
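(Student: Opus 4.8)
The plan is to show that if $\sigma:M\rightarrow W_\lambda$ is critical for the Lepage-equivalent problem, then its projection $\overline{\sigma}:=\overline{\tau}_\lambda\circ\sigma:M\rightarrow F$ is simultaneously an integral section of $\cI$ and stationary for the original action $S$. These are precisely the two defining conditions of a critical section of $\left(F,\lambda,\cI\right)$, so the proof splits naturally into two parts.

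First I would establish the integrability of $\overline{\sigma}$. By construction every point of $W_\lambda$ is a form of the shape $\left.\lambda\right|_p+\beta$ with $\beta\in I^m_p$, so the tautological nature of the canonical $m$-form $\Theta$ means that $\sigma^*\Theta_\lambda=\overline{\sigma}^*\lambda+\overline{\sigma}^*\beta_\sigma$, where $\beta_\sigma$ is the $I$-valued part read off from $\sigma$. The equations of motion in form \ref{Thm2} of Theorem \ref{Thm:HamJac}, when tested against vertical vector fields along the fibers of $I$ (the Lagrange-multiplier directions), should force $\overline{\sigma}^*\alpha=0$ for every generator $\alpha$ of $\cI$; intuitively, varying a multiplier costs nothing at an extremum precisely when the corresponding constraint form pulls back to zero. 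Since $\cI$ is locally generated by sections of $I$, this local vanishing on generators propagates to all of $\cI$ because pullback commutes with exterior differentiation and respects the ideal structure. Thus $\overline{\sigma}$ is an integral section of $\cI$.

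Second I would establish stationarity of $\overline{\sigma}$ for $S$. Here I would compare the first variations of the two actions. A variation of $\overline{\sigma}$ in $F$ lifts to a variation of $\sigma$ in $W_\lambda$ whose vertical part decomposes into a piece tangent to $F$ and a piece along the multiplier fibers. Using $\sigma^*\Theta_\lambda=\overline{\sigma}^*\lambda+\left(\text{multiplier terms}\right)$ together with the already-established integrality $\overline{\sigma}^*\alpha=0$, the multiplier terms drop out of $\widetilde{S}\left[\sigma\right]$, leaving $\widetilde{S}\left[\sigma\right]=S\left[\overline{\sigma}\right]$ along admissible variations. Then form \ref{Thm3} of Theorem \ref{Thm:HamJac}, applied to vector fields projecting to genuine variations on $F$, yields exactly the condition that $\overline{\sigma}$ be stationary for $S$ among sections constrained to $\cI$.

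The main obstacle I anticipate is bookkeeping the decomposition of vertical vector fields on $W_\lambda$ and confirming that the multiplier contributions vanish in the right order. One must verify that an arbitrary admissible variation of $\overline{\sigma}$ in $F$ genuinely arises as the projection of some $Z\in\mathfrak{X}\left(W_\lambda\right)$ to which \ref{Thm3} applies, and, conversely, that the vertical multiplier directions tested in \ref{Thm2} are exactly those producing the integrability constraints and nothing more. Keeping these two families of test fields separate, while showing their combined content reproduces precisely the original Griffiths variational conditions, is where care is needed; the identity $\sigma^*\Theta_\lambda=\overline{\sigma}^*\lambda$ modulo integrality is the technical lever that makes both halves close.
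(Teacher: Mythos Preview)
The paper does not give its own proof of this proposition; immediately after the statement it writes ``For a proof, see \cite{GotayCartan}.'' So there is nothing in the paper to compare your argument against line by line.

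That said, your outline is precisely the standard argument one finds in Gotay's work: split the vertical directions on $W_\lambda$ into multiplier directions (fibers of $I$) and directions projectable to $F$; contraction of $d\Theta_\lambda$ with the former yields $\overline{\sigma}^*\alpha=0$ for generators $\alpha$ of $\cI$, hence integrality; contraction with the latter, combined with the tautological identity $\sigma^*\Theta_\lambda=\overline{\sigma}^*\lambda+\overline{\sigma}^*\beta_\sigma$ and the already-established vanishing of the $\cI$-terms, reproduces the constrained Euler--Lagrange equations on $F$. Your self-identified obstacle---controlling the variation of the multiplier contribution---is real but standard: the point is that once integrality holds, the first variation of $\int_M\overline{\sigma}^*\beta_\sigma$ along an admissible deformation reduces to a term involving $\overline{\sigma}^*(\cL_X\beta_\sigma)$, which lies in $\cI$ pulled back and hence vanishes. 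Your sketch is correct and matches the cited reference.
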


For a proof, see \cite{GotayCartan}.

So, it remains to prove if every critical section of the original variational problem $\left(F,\lambda,\cI\right)$ can be lifted to a critical section of $\left(W_\lambda,\Theta_\lambda,0\right)$; there is no general result regarding this problem, so it is necessary to prove it in each case separately. This fact deserves a proper name.

\begin{definition}\label{def:contravariant-Var-prob}
  A Lepage-equivalent problem is \emph{contravariant} if every critical section of the original variational problem has a lift to a critical section.
\end{definition}


\begin{remark}
  The classical Lepage-equivalent variational problem associated to the variational problem of first order field theory $\left(J^1\pi,L\nu,\cI_{\text{con}}\right)$, where $\cI_{\text{con}}$ is the contact structure on $J^1\pi$, gives us an unified formalism for first order field theory. In fact, setting $k=2$ in the above formalism, in this case we will have that $\rho\in W_{L\eta}$ if and only if
  \[
  \rho=L\nu+p_A^\mu\left(du^A-u_\nu^A dx^\nu\right)\wedge\nu_\mu,
  \]
  for some numbers $p_A^\mu$, where $\nu_\mu:=\partial_\mu\lrcorner\nu$; therefore, the map
  \[
  \rho\mapsto\left(x^\mu,u^A,p,p_A^\mu\right)
  \]
  induces coordinates on $W_{L\eta}$. Furthermore, it can be proved that
  \[
  W_{L\eta}\simeq J^1\pi\times_{E}\wedge^m_2\left(T^*E\right)
  \]
  as bundles on $E$; it is the velocity-multimomentum bundle involved in the usual unified formalism for first order field theories \cite{2004JMP....45..360E}. In this context Theorem \ref{Thm:HamJac} provides a variational formulation for the unified formalism, in the same way as \cite{doi:10.1142/S0219887815600191} does for the case of second order field theories. This formulation has been succesfully applied to the symmetry reduction of differential equations associated to variational problems, from EDS viewpoint \cite{1751-8121-45-6-065202,Morando2012}.
\end{remark}

\subsection{Geometric structures on the jet space of the frame bundle}
\label{sec:geom-struct-jet}

Before to introduce our version for the unified formalism, it is necessary to point out some interesting geometric structures associated to the jet bundle of the frame bundle of a given manifold $M$ \cite{springerlink:10.1007/PL00004852,MR0315624,brajercic04:_variat}.

In the case of the Griffiths variational problem describing Palatini gravity, the underlying bundle is $J^1\tau$, shown in the following diagram
\begin{equation}\label{eq:DiagramJetConnectionBundles}
  \begin{diagram}
    \node[2]{J^1\tau}\arrow[2]{s,l}{\tau_1}\arrow{se,t}{\tau_{10}}\arrow{sw,t}{p^{J^1\tau}_{GL\left(m\right)}}\\
    \node{C\left(LM\right)}\arrow{se,b}{\overline\tau}\node[2]{LM}\arrow{sw,b}{\tau}\\
    \node[2]{M}
  \end{diagram}
\end{equation}
The novelty in our approach rests in the fact that it uses canonical structures of the $GL\left(m\right)$-principal bundle
\[
p^{J^1\tau}_{GL\left(m\right)}:J^1\tau\rightarrow J^1\tau/GL\left(m\right)=:C\left(LM\right)
\]
on the bundle of connections $C\left(LM\right)$ for the formulation of the underlying variational problem, and thus for the description of the unified formalism. To this end, it is convenient to recall the identification
\begin{equation}\label{eq:JetToConnectionPlusFrame}
  J^1\tau\simeq C\left(LM\right)\times_M LM
\end{equation}
induced by the map
\[
j_x^1s\mapsto\left(\left[j_x^1s\right],s\left(x\right)\right).
\]
Under this diffeomorphism, the projection $\tau_{10}$ reads
\[
\tau_{10}\left(\Gamma,u\right)=u
\]
and the $GL\left(m\right)$-action becomes
\[
\left(\Gamma,u\right)\cdot g=\left(\Gamma,u\cdot g\right).
\]
According to \cite{springerlink:10.1007/PL00004852}, the projection onto the first factor
\[
\text{pr}_1:\left(\Gamma,u\right)\mapsto\Gamma
\]
gives $J^1\tau$ a $GL\left(m\right)$-principal bundle structure, and there exists a canonical\footnote{This form is essentially the contact structure of $J^1\tau$.} form $\omega\in\Omega^1\left(J^1\tau,\mathfrak{gl}\left(m\right)\right)$, which becomes a connection form in this bundle. In fact, we have that
\begin{equation}\label{eq:CanonicalConnection}
  \left.\omega\right|_{j_x^1s}=\left[T_{j_x^1s}\tau_{10}-T_xs\circ T_{j_x^1s}\tau_1\right]_{\mathfrak{gl}\left(m\right)},
\end{equation}
where $\left[\cdot\right]_{\mathfrak{gl}\left(m\right)}$ consists into the identification $V\tau\sim LM\times\mathfrak{gl}\left(m\right)$.

Using the canonical basis $\left\{E_k^l\right\}$ of $\mathfrak{gl}\left(m\right)$, such that
\[
\left(E^k_l\right)^i_j=\delta^i_l\delta_j^k,
\]
then there exists a set of $1$-forms $\left(\omega^i_j\right)$ on $J^1\tau$ such that
\[
\omega=\omega^i_jE^j_i.
\]

Also, we can pullback the $1$-form $\theta=\theta^ie_i\in\Omega^1\left(LM,\mR^m\right)$ along $\tau_{10}:J^1\tau\rightarrow LM$, obtaining a $\mR^m$-valued $1$-form on $J^1\tau$, which will be indicated with the same symbol $\theta$. This form allows us to define the \emph{canonical or universal torsion} on $J^1\tau$, according to the usual formula
\[
T^i:=d\theta^i+\omega^i_k\wedge\theta^k
\]
for the exterior covariant derivative of a tensorial form respect to the canonical connection.

For every principal connection $\Gamma$, considered as a section $\Gamma:M\rightarrow C\left(LM\right)$ of the bundle of connections, we can associate an equivariant section $\sigma_\Gamma:LM\rightarrow J^1\tau$ by means of the Diagram \eqref{eq:DiagramJetConnectionBundles} (see also Appendix \ref{App:LocalExpressions} for the explicit definition.) It allows us to construct the forms
\[
\omega_\Gamma:=\sigma_\Gamma^*\omega,\qquad T_\Gamma:=\sigma_\Gamma^*T
\]
on $LM$, having immediate geometrical interpretation.

\begin{theorem}(\cite{springerlink:10.1007/PL00004852})
  The forms $\omega_\Gamma\in\Omega^1\left(LM,\mathfrak{gl}\left(m\right)\right)$ and $T_\Gamma\in\Omega^1\left(LM,\mR^m\right)$ are the connection forms and torsion forms respectively, for the principal connection $\Gamma$.
\end{theorem}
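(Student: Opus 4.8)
The plan is to compute $\omega_\Gamma=\sigma_\Gamma^*\omega$ directly from the defining formula \eqref{eq:CanonicalConnection} and to recognise the outcome as the connection $1$-form of $\Gamma$; the torsion statement will then follow by pulling back the structure equation $T^i=d\theta^i+\omega^i_k\wedge\theta^k$. The whole argument is therefore a matter of unwinding the identification \eqref{eq:JetToConnectionPlusFrame} and using the two elementary relations satisfied by $\sigma_\Gamma$.

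First I would record that, under \eqref{eq:JetToConnectionPlusFrame}, the equivariant section reads $\sigma_\Gamma\left(u\right)=\left(\Gamma\left(\tau\left(u\right)\right),u\right)$, whence
\[
\tau_{10}\circ\sigma_\Gamma=\id_{LM},\qquad\tau_1\circ\sigma_\Gamma=\tau .
\]
Moreover, by its very construction (made explicit in the appendix) the jet $\sigma_\Gamma\left(u\right)$ is the $1$-jet of a local section $s$ of $\tau$ with $s\left(x\right)=u$, $x=\tau\left(u\right)$, whose tangent image $T_xs\left(T_xM\right)$ is exactly the $\Gamma$-horizontal subspace $H^\Gamma_u\subset T_uLM$. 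Differentiating the two relations above gives, for $v\in T_uLM$,
\[
T_{\sigma_\Gamma\left(u\right)}\tau_{10}\left(T_u\sigma_\Gamma\left(v\right)\right)=v,\qquad T_{\sigma_\Gamma\left(u\right)}\tau_1\left(T_u\sigma_\Gamma\left(v\right)\right)=T_u\tau\left(v\right),
\]
so that substitution into \eqref{eq:CanonicalConnection} yields
\[
\left.\omega_\Gamma\right|_u\left(v\right)=\left[v-T_xs\left(T_u\tau\left(v\right)\right)\right]_{\mathfrak{gl}\left(m\right)} .
\]
Since $T_xs\left(T_u\tau\left(v\right)\right)$ lies in $H^\Gamma_u$ and projects under $T_u\tau$ onto $T_u\tau\left(v\right)$, it is precisely the $\Gamma$-horizontal part $\mathrm{hor}_\Gamma\left(v\right)$ of $v$; hence $v-T_xs\left(T_u\tau\left(v\right)\right)=\mathrm{ver}_\Gamma\left(v\right)$ is the $\Gamma$-vertical part, and $\left.\omega_\Gamma\right|_u\left(v\right)=\left[\mathrm{ver}_\Gamma\left(v\right)\right]_{\mathfrak{gl}\left(m\right)}$. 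This is exactly the defining expression of the connection $1$-form of $\Gamma$ relative to the identification $V\tau\simeq LM\times\mathfrak{gl}\left(m\right)$, so $\omega_\Gamma$ is the connection form of $\Gamma$. As an intrinsic cross-check, the equivariance $\sigma_\Gamma\circ R_g=R_g\circ\sigma_\Gamma$, together with the fact that $\omega$ is already a principal connection form on $J^1\tau\to C\left(LM\right)$, immediately gives $R_g^*\omega_\Gamma=\Ad\left(g^{-1}\right)\omega_\Gamma$ and $\omega_\Gamma\left(\widetilde{A}\right)=A$ for every fundamental field $\widetilde{A}$ of $A\in\mathfrak{gl}\left(m\right)$, recovering the two connection axioms without coordinates.

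For the torsion I would simply pull back the defining structure equation. The $\mathbb{R}^m$-valued form $\theta$ on $J^1\tau$ is $\tau_{10}^*$ of the solder form on $LM$, and $\tau_{10}\circ\sigma_\Gamma=\id_{LM}$ forces $\sigma_\Gamma^*\theta^i=\theta^i$. Combining this with $\sigma_\Gamma^*\omega^i_k=\left(\omega_\Gamma\right)^i_k$ gives
\[
T_\Gamma^i=\sigma_\Gamma^*\left(d\theta^i+\omega^i_k\wedge\theta^k\right)=d\theta^i+\left(\omega_\Gamma\right)^i_k\wedge\theta^k,
\]
which is Cartan's first structure equation for the connection $\Gamma$; thus $T_\Gamma$ is the torsion form of $\Gamma$, as claimed.

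The main obstacle is the geometric identification carried out in the first step: one must verify carefully that $\sigma_\Gamma\left(u\right)$ is genuinely the jet whose horizontal image is $H^\Gamma_u$, and that the bracket $\left[\cdot\right]_{\mathfrak{gl}\left(m\right)}$ entering \eqref{eq:CanonicalConnection} is the same isomorphism $V\tau\simeq LM\times\mathfrak{gl}\left(m\right)$ used to define the connection form of $\Gamma$. This rests on the explicit definition of $\sigma_\Gamma$ deferred to the appendix and on unwinding the diffeomorphism \eqref{eq:JetToConnectionPlusFrame}; once these identifications are pinned down, every remaining step is formal.
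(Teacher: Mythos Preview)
Your argument is correct. The paper does not actually prove this theorem in the main text; it is quoted from \cite{springerlink:10.1007/PL00004852}, and the only justification the paper itself supplies is the local coordinate verification in Appendix~\ref{App:LocalExpressions}, where the explicit formula $\tilde\sigma_\Gamma:\left(x^\mu,e^k_\nu\right)\mapsto\left(x^\mu,e^k_\nu,-e^\mu_k\Gamma_{\mu\nu}^\sigma\left(x\right)\right)$ is substituted into the local expression $\omega^k_l=e^k_\mu\left(de^\mu_l-e^\mu_{l\sigma}dx^\sigma\right)$ to recover $\left(\tilde\sigma_\Gamma^*\omega\right)^k_l=e^k_\mu\left(de^\mu_l+e^\sigma_l\Gamma^\mu_{\sigma\rho}dx^\rho\right)$, the standard local form of the connection $1$-form.

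Your approach is genuinely different and more intrinsic: rather than computing in adapted coordinates, you exploit directly the defining formula \eqref{eq:CanonicalConnection} together with the two relations $\tau_{10}\circ\sigma_\Gamma=\id_{LM}$ and $\tau_1\circ\sigma_\Gamma=\tau$, and then identify $T_xs\left(T_u\tau\left(v\right)\right)$ as the $\Gamma$-horizontal lift of $T_u\tau\left(v\right)$. This buys you a coordinate-free proof that makes transparent \emph{why} the canonical connection on $J^1\tau$ is universal: pulling it back along $\sigma_\Gamma$ literally reproduces the vertical projection of $\Gamma$. The coordinate computation in the appendix, by contrast, is quicker to write but hides this geometric content behind index manipulations. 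Your remark that the obstacle lies in checking that the jet $\sigma_\Gamma\left(u\right)$ has horizontal image $H^\Gamma_u$ is exactly right, and that is precisely what the appendix formula for $\tilde\sigma_\Gamma$ encodes; once this is granted, the rest is indeed formal, and your torsion argument via pullback of the structure equation is the natural one.
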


\section{The unified formalism for Palatini gravity}
\label{sec:griff-vari-probl}

We are now ready to define an unified formalism for vacuum GR with vielbein. From the discussion of Section \ref{sec:form-hamilt-restr}, we know that a way to do it involves a two-steps procedure, namely:
\begin{enumerate}
\item To set a Griffiths variational problem for this theory, and
\item to construct the classical Lepage-equivalent variational problem for this variational problem, proving also that it is a contravariant Lepage-equivalent in the sense of Definition \ref{def:contravariant-Var-prob}.
\end{enumerate}

As we know, a Griffiths variational problem describing vacuum Palatini gravity exists \cite{capriotti14:_differ_palat}. In short, it is the variational problem specified by the triple
\[
\left(J^1\tau\rightarrow M,\cL_{PG},\cI_{PG}\right)
\]
where $\cL_{PG}=\eta^{kl}\theta_{kp}\wedge\left(d\omega^p_l+\omega^p_q\wedge\omega^q_l\right)$ and $\cI_{PG}$ is generated by the set of forms
\begin{equation}\label{eq:MetricityGenerators}
  \mathcal{G}:=\left\{\eta^{ip}\omega^j_p+\eta^{jp}\omega^i_p\right\}.
\end{equation}

By taking the subbundle $G_{PG}$ in $\wedge^m_2\left(J^1\tau\right)$ such that
\[
\left.G_{PG}\right|_{j_x^1s}:=\mR\left<\left.\alpha\right|_{j_x^1s}:\alpha\in\mathcal{G}\right>,
\]
we see that
\[
\left.W_\cL\right|_{j_x^1s}:=\left.\cL_{PG}\right|_{j_x^1s}+\left.G_{PG}\right|_{j_x^1s}
\]
for every $j_x^1s\in J^1\tau$, will define the bundle we were looking for; we will call $\overline{\tau}_\cL:W_\cL\rightarrow J^1\tau$ to the canonical projection.

\begin{remark}
  In rigor, the Griffiths variational problem we are dealing with here is not the variational problem considered in \cite{capriotti14:_differ_palat}, because we are discarding the torsion constraints in the definition of the EDS $\cI_{PG}$ adopted here. The reason for doing it is that we will ultimately find that the new variational problem also reproduces vacuum GR equations of motion (see Equation \eqref{eq:TorsionMomentaVanish} below), and in this way we are avoiding to deal with a constraint that does not live in the exterior algebra (when considered on $J^1\tau$, these constraints are equivalent to the annihilation of a set of functions on $J^1\tau$, see Lemma \ref{lem:metricity-and-torsion} below.) 
\end{remark}

Finally, we will define the $m$-form
\[
\lambda_{PG}\in\Omega^m\left(W_\cL\right)
\]
by pulling back the canonical $m$-form $\Theta_2\in\Omega^m\left(\wedge^m_2\left(J^1\tau\right)\right)$ to this subbundle.

\begin{definition}
  The \emph{unified formalism for Palatini gravity} is the variational problem associated to the set of data
  \[
  \left(W_\cL,\lambda_{PG},0\right).
  \]
\end{definition}
Therefore, from Theorem \ref{Thm:HamJac} it results that a section $\sigma:M\rightarrow W_\cL$ is a \emph{solution for this unified formalism} if and only if
\[
\sigma^*\left(Z\lrcorner d\lambda_{PG}\right)=0
\]
for any vertical vector field $Z\in\mathfrak{X}^{V\left(\tau_1\circ\overline{\tau}_\cL\right)}\left(W_\cL\right)$; it tells us that it becomes crucial to construct a basis for this set of vertical vector fields in order to find the corresponding equations of motion. This question will be considered in Section \ref{sec:convenient-basis}.

\subsection{The metricity constraints}
\label{sec:null-trace-constr}

We want to give an interpretation of the main constraints \eqref{eq:MetricityGenerators} we are imposing on a solution of the equations of motion in terms of a coordinate chart $\tau_1^{-1}\left(U\right)$ with coordinates $\left(x^\mu,e^\rho_k,e^\mu_{k\nu}\right)$.  Some additional properties related to these coordinates are listed in Appendix \ref{App:LocalExpressions}).

First, it is necessary to recall that $C\left(LM\right)$ is a affine bundle modelled on $T^*M\otimes\text{ad}\left(LM\right)$, where $\text{ad}\left(LM\right)$ is the adjoint bundle of $LM$, which is the bundle associated to the principal bundle $LM$ through the adjoint action of $GL\left(m\right)$ on itself. Thus if $\Gamma_1,\Gamma_2$ are connections at the same point $x\in M$, then we have that
\[
\Gamma_1-\Gamma_2\in T^*M\otimes\text{ad}\left(LM\right)
\]
On a coordinate open set $U\subset M$, the set of vector fields
\[
\left(x^\mu,e^\nu_k\right)\mapsto W_\nu^\mu\left(x^\mu,e^\nu_k\right):=e^\mu_ke^l_\nu\left(E^k_l\right)_{LM}\left(x^\mu,e^\nu_k\right)=-e^l_\mu\frac{\partial}{\partial e_l^\nu}
\]
generates $\text{gau}\left(LU\right)$, the set of $\tau$-vertical $GL\left(m\right)$-invariant vector fields.

Additionally, we have that the set of local sections $\Gamma\left(U,\text{ad}\left(LM\right)\right)$ can be identified with $\text{gau}\left(LU\right)$; then, every $\Gamma\in\left(C\left(LM\right)\right)_x,x\in U$ can be written as
\[
\Gamma:\frac{\partial}{\partial x^\mu}\mapsto\frac{\partial}{\partial x^\mu}+A_{\sigma\mu}^\rho\left(\Gamma\right)W_\rho^\sigma
\]
for an uniquely determined set of real numbers $\left\{A_{\sigma\mu}^\rho\left(\Gamma\right)\right\}$. The maps
\[
A_{\sigma\mu}^\rho:\Gamma\mapsto A_{\sigma\mu}^\rho\left(\Gamma\right)
\]
induces a set of coordinates in the fibres of $\left.C\left(LM\right)\right|_U\rightarrow U$.

On the other hand, the identification $\Gamma\left(U,\text{ad}\left(LM\right)\right)\simeq\text{gau}\left(LU\right)$ implies that the canonical projection $p^{TLM}_{GL\left(m\right)}:TLM\rightarrow TLM/GL\left(m\right)$ is such that
\[
p^{TLM}_{GL\left(m\right)}\circ W_\mu^\nu=W_\mu^\nu.
\]

Then if $j_x^1s=\left(x^\mu,e^\nu_k,e^\sigma_{l\rho}\right)$ in the adapted coordinates, we have that
\[
j_x^1s:\frac{\partial}{\partial x^\mu}\mapsto\frac{\partial}{\partial x^\mu}+e^\sigma_{k\mu}\frac{\partial}{\partial e^\sigma_k}=\frac{\partial}{\partial x^\mu}-e^k_\nu e^\sigma_{k\mu}W^\nu_\sigma
\]
and so
\[
\left[j_x^1s\right]:\frac{\partial}{\partial x^\mu}\mapsto\frac{\partial}{\partial x^\mu}-e^k_\nu e^\sigma_{k\mu}W^\nu_\sigma.
\]
Therefore, the projection $p^{J^1\tau}_{GL\left(m\right)}:J^1\tau\rightarrow C\left(LM\right)$ reads
\[
p_{GL\left(m\right)}^{J^1\tau}\left(x^\mu,e^\nu_k,e^\sigma_{l\rho}\right)=\left(x^\mu,-e^k_\mu e^\sigma_{k\rho}\right).
\]

\begin{remark}
  We will indicate the functions $\left(p^{J^1\tau}_{GL\left(m\right)}\right)^*A^\mu_{\nu\sigma}$ on $J^1\tau$ with the same symbol $A^\sigma_{\mu\nu}$.
\end{remark}

\begin{proposition}
  Let us denote by $\left(E_k^l\right)_{J^1\tau}\in\mathfrak{X}\left(J^1\tau\right)$ the fundamental vector field associated to $E_k^l\in\mathfrak{gl}\left(m\right)$. Then
  \[
  \left(E_k^l\right)_{J^1\tau}=\left(\left(E_k^l\right)_{LM}\right)^1=-e_k^\mu\frac{\partial}{\partial e_l^\mu}-e^\sigma_{k\mu}\frac{\partial}{\partial e^\sigma_{l\mu}}
  \]
  where $\left(\cdot\right)^1$ indicates the complete lift of a vector field to $J^1\tau$.
\end{proposition}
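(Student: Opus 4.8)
The plan is to prove the two asserted equalities in turn: first the \emph{invariant} identity $\left(E_k^l\right)_{J^1\tau}=\left(\left(E_k^l\right)_{LM}\right)^1$, relating the fundamental vector field of the prolonged $GL\left(m\right)$-action on $J^1\tau$ to the complete lift of the fundamental vector field on $LM$, and then the coordinate formula on the right. The first equality is a general fact about jet prolongation of bundle automorphisms, whereas the second is a direct application of the standard prolongation formula to a vertical vector field.

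For the first equality, recall that the $GL\left(m\right)$-structure on $J^1\tau$ appearing in Diagram \eqref{eq:DiagramJetConnectionBundles} is obtained by prolonging the right translations $R_g:LM\rightarrow LM$, which are $\tau$-bundle automorphisms over $\mathrm{id}_M$; explicitly $g\cdot j_x^1s=j_x^1\left(R_g\circ s\right)$, and under the identification \eqref{eq:JetToConnectionPlusFrame} this recovers $\left(\Gamma,u\right)\cdot g=\left(\Gamma,u\cdot g\right)$. Fix $A=E_k^l$ and put $g_t:=\exp\left(tA\right)$. Since $A_{LM}$ is by definition $\tfrac{d}{dt}\big|_{0}\,u\cdot g_t$, its flow is exactly the one-parameter group of translations $R_{g_t}$. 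The prolongation operation on $\tau$-automorphisms commutes with composition and with taking flows, so the flow of the complete lift $\left(A_{LM}\right)^1$ is the prolonged family $\left(R_{g_t}\right)^{1}$; differentiating at $t=0$ gives precisely the generator $A_{J^1\tau}$ of the prolonged action. Hence $\left(E_k^l\right)_{J^1\tau}=\left(\left(E_k^l\right)_{LM}\right)^1$.

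For the coordinate expression I would start from the restriction of the generator to $LM$. A direct computation of the fundamental vector field of right translation on $LM$ gives $\left(E_k^l\right)_{LM}=-e_k^\mu\,\partial/\partial e_l^\mu$, a $\tau$-vertical field whose components are functions of the frame coordinates alone (this is the $LM$-part recorded in the statement, and is consistent with the formula for $W^\mu_\nu$ above). Applying the first-order prolongation formula for a vertical field $X=X^A\,\partial/\partial u^A$, namely $X^1=X^A\,\partial/\partial u^A+\left(d_\nu X^A\right)\partial/\partial u^A_\nu$ with $d_\nu$ the total derivative, and using that $d_\nu$ sends the fibre coordinate $e_k^\mu$ to the jet coordinate $e^\mu_{k\nu}$, the new term is $d_\nu\left(-e_k^\mu\right)\partial/\partial e^\mu_{l\nu}=-e^\mu_{k\nu}\,\partial/\partial e^\mu_{l\nu}$. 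Relabelling the dummy indices yields $-e_k^\mu\,\partial/\partial e_l^\mu-e^\sigma_{k\mu}\,\partial/\partial e^\sigma_{l\mu}$, as claimed.

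The main obstacle is the first equality, and specifically making precise that the $GL\left(m\right)$-action defining the fundamental vector field $\left(E_k^l\right)_{J^1\tau}$ is genuinely the jet prolongation of the translations $R_g$, i.e. that the principal structure coming from $p^{J^1\tau}_{GL\left(m\right)}$ agrees with the prolonged action. Once this is granted, the identity $\left(A_{LM}\right)^1=A_{J^1\tau}$ reduces to the standard functoriality statement that the prolongation of a flow is the flow of the complete lift, and the remaining coordinate computation is routine.
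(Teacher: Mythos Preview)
Your proposal is correct and follows essentially the same approach as the paper. The paper states this proposition without an explicit proof environment; the coordinate expression is later recovered in Section~\ref{sec:local-expr-lift} by applying the prolongation formula~\eqref{eq:ProlongationVectorFormula} to the vertical field $\left(E_k^l\right)_{LM}=-e_k^\rho\,\partial/\partial e_l^\rho$, exactly as you do, and the identity $\left(A_{LM}\right)^1=A_{J^1\tau}$ is used throughout as the standard fact that prolonging the flow $R_{\exp tA}$ yields the flow of the complete lift. Your care in checking that the principal $GL\left(m\right)$-action on $J^1\tau$ is indeed the jet prolongation of the right translations (confirmed by the identification~\eqref{eq:JetToConnectionPlusFrame} and the action $\left(\Gamma,u\right)\cdot g=\left(\Gamma,u\cdot g\right)$) is appropriate and fills in a point the paper leaves implicit.
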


Let $K\subset GL\left(m\right)$ be the compact group whose Lie algebra is $\kf$. The previous identification \eqref{eq:JetToConnectionPlusFrame} allows us to establish the bundle isomorphism
\begin{equation}\label{eq:QuotientIdentifyMetrics}
  J^1\tau/K\simeq C\left(LM\right)\times_M LM/K=C\left(LM\right)\times_M\Sigma,
\end{equation}
where $\left[\tau\right]:\Sigma\rightarrow M$ is the \emph{bundle of metrics of signature $\left(m-1,1\right)$ on $M$}; therefore, we recall the following result from \cite{capriotti14:_differ_palat}.

\begin{lemma}\label{lem:metricity-and-torsion}
  The functions
  \[
  \widetilde{g}^{\mu\nu}:=\eta^{ij}e_i^\mu e_j^\nu
  \]
  are $K$-invariant, and together with the functions $A^\mu_{\nu\sigma}$ induce coordinates $\left(x^\mu,g^{\mu\nu},A^\sigma_{\mu\nu}\right)$ on $J^1\tau/K$. In term of them, the metricity condition becomes
  \[
  \eta^{ik}\omega_k^j+\eta^{jk}\omega_k^i=e^i_\mu e^j_\nu\left[d\widetilde{g}^{\mu\nu}+\left(\widetilde{g}^{\mu\sigma}A^\nu_{\gamma\sigma}+\widetilde{g}^{\nu\sigma}A^\mu_{\gamma\sigma}\right)dx^\gamma\right],
  \]
  and the universal torsion reads
  \[
  T^i=e^i_\sigma A^\sigma_{\mu\nu}dx^\mu\wedge dx^\nu.
  \]
\end{lemma}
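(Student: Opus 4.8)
The plan is to prove Lemma \ref{lem:metricity-and-torsion} by a direct computation in the adapted coordinates $\left(x^\mu,e^\rho_k,e^\mu_{k\nu}\right)$ on $\tau_1^{-1}\left(U\right)$, using the local expressions already assembled in the excerpt. The statement has three parts — the $K$-invariance of $\widetilde{g}^{\mu\nu}$, the claim that $\left(x^\mu,\widetilde{g}^{\mu\nu},A^\sigma_{\mu\nu}\right)$ form coordinates on $J^1\tau/K$, and the two explicit formulas for the metricity generators and the universal torsion — and I would treat them in that order.

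**$K$-invariance and coordinates.** First I would verify that $\widetilde{g}^{\mu\nu}=\eta^{ij}e_i^\mu e_j^\nu$ is annihilated by every fundamental vector field $\left(E_k^l\right)_{J^1\tau}$ whose generator $E_k^l$ lies in $\kf$. Using the Proposition above, on functions depending only on the $e_i^\mu$ the action of $\left(E_k^l\right)_{J^1\tau}$ reduces to $-e_k^\mu\,\partial/\partial e_l^\mu$, so that
\[
\left(E_k^l\right)_{J^1\tau}\widetilde{g}^{\mu\nu}=-\eta^{ij}\left(\delta_i^l e_k^\mu e_j^\nu+\delta_j^l e_i^\mu e_k^\nu\right)=-\left(\eta^{lj}e_k^\mu e_j^\nu+\eta^{il}e_i^\mu e_k^\nu\right);
\]
this is symmetric in the pair $(k,l)$, hence it vanishes precisely when one contracts against the skew part of $E_k^l$, i.e. against generators of $\kf=\mathfrak u(m-1,1)\cap\gl{m}$. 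Here I would invoke the Cartan decomposition $\gl{m}=\kf\oplus\pf$ from Section \ref{sec:geom-prel} to see that the $\kf$-directions are exactly those killing $\widetilde g^{\mu\nu}$, so that $\widetilde g^{\mu\nu}$ descends to $\Sigma=LM/K$ and, together with the fibre coordinates $A^\sigma_{\mu\nu}$ on $C\left(LM\right)$, furnishes coordinates on the quotient \eqref{eq:QuotientIdentifyMetrics}. The substitution $A^\sigma_{\mu\nu}=-e^k_\mu e^\sigma_{k\nu}$ from the projection formula makes the change of variables explicit.

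**The two formulas.** For the metricity generators I would start from the right-hand side, computing $d\widetilde g^{\mu\nu}$ by the Leibniz rule and inserting $A^\sigma_{\mu\nu}=-e^k_\mu e^\sigma_{k\nu}$; after contracting with $e^i_\mu e^j_\nu$ and using $e^i_\mu e^\mu_k=\delta^i_k$ the frame coefficients should collapse so that the bracket reorganizes into $\eta^{ik}\omega_k^j+\eta^{jk}\omega_k^i$, where $\omega^i_j$ are the canonical connection forms from \eqref{eq:CanonicalConnection}. The torsion identity follows from the definition $T^i=d\theta^i+\omega^i_k\wedge\theta^k$ by substituting the local expressions for $\theta^i$ and $\omega^i_k$ on $J^1\tau$ and recognizing $e^i_\sigma A^\sigma_{\mu\nu}dx^\mu\wedge dx^\nu$ as the surviving antisymmetric combination. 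Both computations are bookkeeping with the coordinate formulas, so I would only exhibit the decisive cancellations rather than every term.

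**The main obstacle.** The hard part will be keeping the index gymnastics honest while moving between the soldering/connection forms $\left(\theta^i,\omega^i_j\right)$ on $J^1\tau$ and the metric/Christoffel-type coordinates $\left(\widetilde g^{\mu\nu},A^\sigma_{\mu\nu}\right)$ on the quotient: one must be careful that $\omega^i_j$ really is the pullback of the soldering data to $J^1\tau$ rather than a connection already evaluated on a section, and that the relation $A^\sigma_{\mu\nu}=-e^k_\mu e^\sigma_{k\nu}$ is inserted with the correct placement of the free Greek and Latin indices. Since the statement and all needed local expressions are quoted from \cite{capriotti14:_differ_palat}, the cleanest route is to reduce the verification to the two algebraic identities above and cite that reference for the underlying coordinate conventions, rather than re-deriving the canonical connection from scratch.
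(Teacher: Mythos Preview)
The paper does not prove this lemma at all: it is stated as a result ``recalled from \cite{capriotti14:_differ_palat}'' and no argument is given. Your plan---verify $K$-invariance infinitesimally, read off coordinates on $C(LM)\times_M\Sigma$ via the identification \eqref{eq:QuotientIdentifyMetrics}, and check the two displayed formulas by direct substitution of the local expressions for $\theta^i$, $\omega^i_j$ from Appendix~\ref{App:LocalExpressions}---is the natural one and is correct in outline; you are in effect supplying what the paper outsources to the reference.

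One point deserves tightening. Your claim that $\left(E_k^l\right)_{J^1\tau}\widetilde g^{\mu\nu}=-\left(\eta^{lj}e_k^\mu e_j^\nu+\eta^{il}e_i^\mu e_k^\nu\right)$ is ``symmetric in the pair $(k,l)$'' is not literally true as stated: swapping $k\leftrightarrow l$ does not return the same expression. What you actually need is that, after contracting with the coefficients $A^l_k$ of a general $A\in\mathfrak{gl}(m)$, the result is
\[
A_{J^1\tau}\,\widetilde g^{\mu\nu}=-\left(A^p_q\eta^{qj}+A^j_q\eta^{qp}\right)e_p^\mu e_j^\nu,
\]
which vanishes precisely when $A\eta$ is antisymmetric, i.e.\ when $A\in\kf$ by Lemma~\ref{lem:SymAntiSymForms}. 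Phrase it this way and the $K$-invariance step is clean. The remaining computations (metricity and torsion) are, as you say, bookkeeping with $A^\sigma_{\mu\nu}=-e^k_\mu e^\sigma_{k\nu}$ and the local form $\omega^k_l=e^k_\mu\left(de^\mu_l-e^\mu_{l\sigma}dx^\sigma\right)$; they go through without surprises.
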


This result permits us to give sense to the metricity condition.
\begin{theorem}\label{thm:metr-constr-interpretation}
  Let $s:M\rightarrow J^1\tau$ be a section of the jet bundle projection $\tau_1:J^1\tau\rightarrow M$, and such that it annihilates the forms $\eta^{ik}\omega_k^j+\eta^{jk}\omega_k^i$ and $T^i$. Then its quotient section $\left[s\right]:M\rightarrow J^1\tau/K$, viewed through the identification \eqref{eq:QuotientIdentifyMetrics}, consists into a metric of signature $\left(1,m-1\right)$ and the corresponding Levi-Civita connection.
\end{theorem}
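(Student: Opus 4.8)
The plan is to translate both hypotheses into coordinate identities on $M$ by means of Lemma \ref{lem:metricity-and-torsion}, and then to invoke the fundamental theorem of pseudo-Riemannian geometry. Writing $s$ in adapted coordinates as $s(x)=\left(x^\mu,e^\nu_k(x),e^\sigma_{l\rho}(x)\right)$, the quotient section reads, in the coordinates $\left(x^\mu,g^{\mu\nu},A^\sigma_{\mu\nu}\right)$ on $J^1\tau/K$ furnished by the Lemma, as $[s](x)=\left(x^\mu,\widetilde{g}^{\mu\nu}(s(x)),A^\sigma_{\mu\nu}(s(x))\right)$. Under the identification \eqref{eq:QuotientIdentifyMetrics}, the $\Sigma$-factor is the field $\widetilde{g}^{\mu\nu}=\eta^{ij}e^\mu_ie^\nu_j$ and the $C(LM)$-factor is the linear connection whose coefficients are the $A^\sigma_{\mu\nu}$. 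Since $\eta$ is nondegenerate of Lorentzian type and the matrix $\left(e^\mu_i\right)$ is invertible, Sylvester's law of inertia shows that $\widetilde{g}^{\mu\nu}$ is a nondegenerate symmetric contravariant tensor of signature $(1,m-1)$, hence determines a metric $g$ of that signature; this settles the $\Sigma$-component of the claim.

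First I would exploit the torsion hypothesis. By Lemma \ref{lem:metricity-and-torsion} we have $T^i=e^i_\sigma A^\sigma_{\mu\nu}\,dx^\mu\wedge dx^\nu$, so pulling back along $s$ gives $s^*T^i=e^i_\sigma(s)\,A^\sigma_{[\mu\nu]}(s)\,dx^\mu\wedge dx^\nu$, where the antisymmetrization appears because the symmetric part of $A$ annihilates $dx^\mu\wedge dx^\nu$. The hypothesis $s^*T^i=0$, combined with the invertibility of $\left(e^i_\sigma\right)$, forces $A^\sigma_{[\mu\nu]}=0$, that is $A^\sigma_{\mu\nu}=A^\sigma_{\nu\mu}$. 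Thus the connection carried by the $C(LM)$-factor of $[s]$ is symmetric, equivalently torsion-free.

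Next I would use the metricity hypothesis. Again by the Lemma, $\eta^{ik}\omega^j_k+\eta^{jk}\omega^i_k=e^i_\mu e^j_\nu\left[d\widetilde{g}^{\mu\nu}+\left(\widetilde{g}^{\mu\sigma}A^\nu_{\gamma\sigma}+\widetilde{g}^{\nu\sigma}A^\mu_{\gamma\sigma}\right)dx^\gamma\right]$. Pulling back along $s$ and using once more that $\left(e^i_\mu\right)$ is invertible, the vanishing of $s^*\left(\eta^{ik}\omega^j_k+\eta^{jk}\omega^i_k\right)$ is equivalent to the coordinate identity $\partial_\gamma\widetilde{g}^{\mu\nu}+\widetilde{g}^{\mu\sigma}A^\nu_{\gamma\sigma}+\widetilde{g}^{\nu\sigma}A^\mu_{\gamma\sigma}=0$. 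This is precisely the statement that the inverse metric is covariantly constant for the connection $A$, i.e. $\nabla\widetilde{g}=0$; and since $\nabla g^{-1}=0$ if and only if $\nabla g=0$, the metric $g$ itself is parallel, so $A$ is metric-compatible.

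Having a connection that is simultaneously torsion-free and compatible with $g$, the Koszul formula (the fundamental theorem of pseudo-Riemannian geometry) guarantees that it is unique and coincides with the Levi-Civita connection of $g$; this identifies the $C(LM)$-component of $[s]$ with the Levi-Civita connection of its $\Sigma$-component, completing the argument. I expect the main obstacle to lie not in the computation, which is routine once the Lemma is available, but in the bookkeeping needed to confirm that the fibre coordinates $A^\sigma_{\mu\nu}$ genuinely represent the Christoffel symbols of a linear connection on $M$ under the identification $C(LM)\times_M\Sigma$; only once this is secured do the two vanishing conditions acquire their classical meaning as the torsion-free and metric-compatibility equations.
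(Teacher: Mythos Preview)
Your proof is correct and is precisely the argument the paper intends: the theorem is stated immediately after Lemma~\ref{lem:metricity-and-torsion} without an explicit proof, the text indicating that it is a direct consequence of the coordinate expressions in that Lemma (both results being recalled from \cite{capriotti14:_differ_palat}). Your write-up supplies exactly those details---pulling back the Lemma's formulas along $s$, reading off torsion-freeness and metric compatibility, and invoking the Koszul formula---and even flags the one genuine bookkeeping point, namely that the fibre coordinates $A^\sigma_{\mu\nu}$ are the Christoffel symbols of the connection under the identification with $C(LM)$, which the paper establishes in Appendix~\ref{App:LocalExpressions}.
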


\begin{remark}\label{rem:Signature-independence}
  The validity of the previous theorem is independent of the signature of the matrix $\eta$; the signature in the statement was chosen for its relationship with the usual signature found in general relativity.
\end{remark}

\begin{remark}
  The metricity form $Q^{ij}:=\eta^{ik}\omega_k^j+\eta^{jk}\omega_k^i$ is the corresponding concept in our approach to what was called \emph{nonmetricity tensor} in \cite{Friedric-1978}.
\end{remark}

\subsection{Some consequences of the structure equations}
\label{sec:structure-equations}

We will gather some identities from \cite{capriotti14:_differ_palat}, useful when dealing with differentials of the canonical forms we work with here. First, we have the \emph{structure equations}
\begin{align*}
  &d\omega^i_j+\omega^i_k\wedge\omega^k_j=\Omega^i_j\\
  &d\theta^i+\omega^i_k\wedge\theta^k=T^i,
\end{align*}
and its differential consequences, the so called \emph{Bianchi identities}
\begin{align*}
  &d\Omega^p_l=\Omega^p_r\wedge\omega^r_l-\omega^p_r\wedge\Omega^r_l\\
  &d T^k=\Omega^k_l\wedge\theta^l-\omega^k_l\wedge T^l,
\end{align*}
Additionally,
\begin{align*}
  &d\theta_i=\omega^l_i\wedge\theta_l-\omega^p_p\wedge\theta_i+T^l\wedge\theta_{li}\\
  &d\theta_{kp}=\omega^q_k\wedge\theta_{qp}-\omega^q_p\wedge\theta_{qk}-\omega^s_s\wedge\theta_{kp}+T^q\wedge\theta_{kpq}\\
  &d\theta_{ipq}=\omega^k_i \wedge\theta_{kpq}+\omega^k_p \wedge\theta_{kqi}+\omega^k_q \wedge\theta_{kip}-\omega_s^s \wedge\theta_{ipq}+T^k \wedge\theta_{ipqk}.
\end{align*}

We will use these structure equations on order to calculate the differential of the Palatini Lagrangian $\cL_{PG}$. In fact, we have that
\begin{align}\label{eq:PalatiniLagrangianDiff}
  d\cL_{PG}&=d\left(\eta^{kl}\theta_{kp}\wedge\Omega^p_l\right)\cr
  &=\eta^{kl}d\theta_{kp}\wedge\Omega^p_l+\left(-1\right)^m\eta^{kl}\theta_{kp}\wedge d\Omega^p_l\cr
  &=\eta^{kl}\left(\omega^q_k\wedge\theta_{qp}-\omega^q_p\wedge\theta_{qk}-\omega^s_s\wedge\theta_{kp}+T^q\wedge\theta_{kpq}\right)\wedge\Omega^p_l+\left(-1\right)^m\eta^{kl}\theta_{kp}\wedge\left(\Omega^p_r\wedge\omega^r_l-\omega^p_r\wedge\Omega^r_l\right)\cr
  &=\big(\eta^{kl}\omega^q_k\wedge\theta_{qp}-\eta^{kl}\omega^q_p\wedge\theta_{qk}-\eta^{kl}\omega^s_s\wedge\theta_{kp}+\eta^{kl}T^q\wedge\theta_{kpq}\big)\wedge\Omega^p_l+\cr
  &\qquad\qquad\qquad\qquad\qquad\qquad+\left(-1\right)^m\left(\eta^{kr}\theta_{kp}\wedge\omega^l_r\wedge\Omega^p_l-\eta^{kl}\theta_{kr}\wedge\omega^r_p\wedge\Omega^p_l\right)\cr
  &=\big(\eta^{kl}\omega^q_k\wedge\theta_{qp}-\eta^{kl}\omega^q_p\wedge\theta_{qk}-\eta^{kl}\omega^s_s\wedge\theta_{kp}+\eta^{kl}T^q\wedge\theta_{kpq}\big)\wedge\Omega^p_l+\cr
  &\qquad\qquad\qquad\qquad\qquad\qquad+\left(\eta^{qk}\omega^l_k\wedge\theta_{qp}\wedge\Omega^p_l-\eta^{kl}\omega^q_p\wedge\theta_{kq}\wedge\Omega^p_l\right)\cr
    &=\big[\left(\eta^{kl}\omega^q_k+\eta^{qk}\omega^l_k\right)\wedge\theta_{qp}-\eta^{kl}\omega^s_s\wedge\theta_{kp}+\eta^{kl}T^q\wedge\theta_{kpq}\big]\wedge\Omega^p_l
\end{align}

Also, there is a consequence of the first Bianchi identity that will become important later.
\begin{lemma}
  For the canonical torsion and the curvature of the canonical connection on $J^1\tau$, the following identity
  \[
  \left(d T^k+\omega^k_l\wedge T^l\right)\wedge\theta_{ikp}=\Omega^k_p\wedge\theta_{ik}-\Omega^k_i\wedge\theta_{pk}-\Omega^l_l\wedge\theta_{ip}
  \]
  holds.
\end{lemma}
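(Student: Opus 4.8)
The plan is to collapse the torsion-dependent left-hand side into a pure curvature expression by means of the first Bianchi identity, and then to finish with a purely algebraic ``duality'' identity for the contracted frame forms $\theta_{ikp}$.

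First I would use the first Bianchi identity recorded above,
\[
dT^k = \Omega^k_l \wedge \theta^l - \omega^k_l \wedge T^l,
\]
which rearranges to
\[
dT^k + \omega^k_l \wedge T^l = \Omega^k_l \wedge \theta^l.
\]
Wedging with $\theta_{ikp}$ and summing over the repeated index $k$ then gives
\[
\left(dT^k + \omega^k_l \wedge T^l\right)\wedge\theta_{ikp} = \Omega^k_l \wedge \theta^l \wedge \theta_{ikp},
\]
so the entire torsion contribution has been traded for the single product $\theta^l \wedge \theta_{ikp}$ against the curvature.

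The crux is then the algebraic identity
\[
\theta^l \wedge \theta_{ikp} = \delta^l_i\,\theta_{kp} - \delta^l_k\,\theta_{ip} + \delta^l_p\,\theta_{ik},
\]
which I would justify by recalling that the multi-indexed forms $\theta_{i_1\cdots i_r}$ are the successive contractions of the frame volume form $\theta^1\wedge\cdots\wedge\theta^m$ by the vector fields dual to the $\theta^j$; in particular they are totally antisymmetric in their indices and satisfy $\theta^j\wedge\theta_i=\delta^j_i\,\theta^1\wedge\cdots\wedge\theta^m$. Iterating this base relation through the antiderivation property of the interior product (equivalently, a short induction on the number of contracted indices) produces the displayed identity. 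This is precisely the step where the sign bookkeeping must be carried out with care, and I expect it to be the only real obstacle in the argument.

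Finally I would substitute this identity into the expression from the first step. Contracting the Kronecker deltas against $\Omega^k_l$ leaves the three terms
\[
\Omega^k_p\wedge\theta_{ik} - \Omega^k_k\wedge\theta_{ip} + \Omega^k_i\wedge\theta_{kp},
\]
and it remains only to reconcile these with the stated right-hand side. The middle term is the trace $\Omega^l_l$ after relabeling the summation index, while the last term becomes $-\Omega^k_i\wedge\theta_{pk}$ upon invoking the antisymmetry $\theta_{kp}=-\theta_{pk}$. Since $\Omega^k_l$ is a $2$-form it commutes past the $1$-form $\theta^l$ in the wedge product, so no extra signs are generated by the reordering, and the three terms coincide exactly with $\Omega^k_p\wedge\theta_{ik}-\Omega^k_i\wedge\theta_{pk}-\Omega^l_l\wedge\theta_{ip}$, as claimed.
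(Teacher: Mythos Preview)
Your proof is correct and follows essentially the same route as the paper: invoke the first Bianchi identity to replace $dT^k+\omega^k_l\wedge T^l$ by $\Omega^k_l\wedge\theta^l$, then use the contraction identity $\theta^l\wedge\theta_{ikp}=\delta^l_p\theta_{ik}-\delta^l_k\theta_{ip}+\delta^l_i\theta_{kp}$ (which the paper records in its appendix) and read off the three curvature terms. The only cosmetic difference is that the paper simply cites this contraction identity while you sketch its derivation via iterated interior products.
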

\begin{proof}
  From the structure equation
  \[
  d T^k+\omega^k_l\wedge T^l=\Omega^k_l\wedge\theta^l
  \]
  and multiplying both sides by $\theta_{ikp}$, we obtain
  \[
  \left(d T^k+\omega^k_l\wedge
    T^l\right)\wedge\theta_{ikp}=\Omega^k_l\wedge\theta^l\wedge\theta_{ikp}.
  \]
  Now, we can use that
  $\theta^l\wedge\theta_{ikp}=\delta^l_p\theta_{ik}-\delta^l_k\theta_{ip}+\delta^l_i\theta_{kp}$
  and so
  \begin{align}
    \left(d T^k+\omega^k_l\wedge T^l\right)\wedge\theta_{ikp}&=\Omega^k_l\wedge\theta^l\wedge\theta_{ikp}\cr
    &=\Omega^k_l\wedge\left(\delta^l_p\theta_{ik}-\delta^l_k\theta_{ip}+\delta^l_i\theta_{kp}\right)\cr
    &=\Omega^k_p\wedge\theta_{ik}-\Omega^k_i\wedge\theta_{pk}-\Omega^l_l\wedge\theta_{ip}.\qedhere
  \end{align}
\end{proof}

The following consequence will be useful when dealing with the involutivity of the equations of motion (see Section \ref{sec:constraint-algorithm}).

\begin{corollary}\label{cor:FirstBianchiConsequence}
  If $T=0$ and $\Omega$ is $\kf$-valued, then
  \[
  \Omega^k_p\wedge\theta_{ik}=\Omega^k_i\wedge\theta_{pk}.
  \]
\end{corollary}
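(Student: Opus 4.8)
The plan is to obtain the result directly from the identity in the preceding Lemma by specializing the two hypotheses in turn. Setting $T=0$ forces $T^l=0$ for every $l$, so both $dT^k$ and $\omega^k_l\wedge T^l$ vanish; hence the left-hand side of
\[
\left(d T^k+\omega^k_l\wedge T^l\right)\wedge\theta_{ikp}=\Omega^k_p\wedge\theta_{ik}-\Omega^k_i\wedge\theta_{pk}-\Omega^l_l\wedge\theta_{ip}
\]
is identically zero, leaving the relation $\Omega^k_p\wedge\theta_{ik}-\Omega^k_i\wedge\theta_{pk}=\Omega^l_l\wedge\theta_{ip}$.

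It therefore remains only to dispose of the trace term on the right, and this is where the hypothesis that $\Omega$ is $\kf$-valued enters. The key observation is that every $\kf$-valued form is trace-free. Indeed, using the $\pf$-projector recorded in Section \ref{sec:geom-prel}, the condition that $\Omega$ be $\kf$-valued is $\left(\Omega_\pf\right)^i_j=0$, that is,
\[
\Omega^i_j=-\eta_{jp}\Omega^p_q\eta^{qi}.
\]
Taking $i=j$ and summing over the repeated index, and using $\eta_{lp}\eta^{ql}=\delta^q_p$ (which holds because $\eta^2=\id$), one obtains $\Omega^l_l=-\Omega^q_q=-\Omega^l_l$, whence $\Omega^l_l=0$.

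With the trace killed, $\Omega^l_l\wedge\theta_{ip}=0$, and the surviving relation is precisely
\[
\Omega^k_p\wedge\theta_{ik}=\Omega^k_i\wedge\theta_{pk},
\]
as claimed. I do not anticipate a genuine obstacle: the corollary is a one-line specialization of the Lemma combined with the elementary fact that $\kf$ consists of $\eta$-skew matrices and is hence traceless. The only point that warrants a little care is the index bookkeeping, namely verifying $\eta_{lp}\eta^{ql}=\delta^q_p$ and confirming that the $\pf$-projector formula delivers the trace identity $\Omega^l_l=-\Omega^l_l$ with the correct sign.
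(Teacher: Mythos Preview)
Your proof is correct and follows exactly the approach implicit in the paper: the corollary is stated there without proof, and the intended argument is precisely to specialize the preceding Lemma to $T=0$ and then use that $\kf$-valued forms are traceless to kill the $\Omega^l_l\wedge\theta_{ip}$ term. Your verification that $\Omega^l_l=0$ via the $\pf$-projector formula is the right mechanism; note that the paper records the same fact in the form $\omega^s_s=\left(\omega_\pf\right)^s_s$ in Section~\ref{sec:canonical-form}.
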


\section{A convenient basis}
\label{sec:convenient-basis}

According to Theorem \ref{Thm:HamJac}, in order to be able to write down the equations of motion for the unified formalism just constructed, it is necessary to contract arbitrary vertical vector fields with the differential of the form $\lambda_{PG}$ on $W_\cL$. In this section we will define a basis for $TJ^1\tau$ that will allow us, in Section \ref{sec:equations-motion}, to carry out this procedure. The existence of this basis is tied to the choice of a connection on the frame bundle $LM$; nevertheless, the fact that in the calculation of the equations just vertical vector fields are involved, determines that these equations are independent from this choice.

The construction of this basis goes as follows: The chosen connection is employed in the definition of a basis on $LM$. After that, using the canonical lifts and other tools at our disposal in any jet bundle \cite{saunders89:_geomet_jet_bundl}, we will be able to find a basis on $J^1\tau$. Finally, and using an identification of the velocity-multimomentum bundle with a product bundle involving $J^1\tau$ (see Section \ref{sec:veloc-mult-space-1}), it will be possible to use this basis in the calculation of the equations of motion.

A word of caution should be said here: as we said before, the equations of motion for the unified formalism can be written using only the vertical part of the basis to be built in the present section. However, we feel that the construction of a full basis on $J^1\tau$ could be of interested for the readers, even if it makes the article a little longer and harder to read.

\subsection{Basis on $LM$}
\label{sec:basis-lm}

Let us fix a linear connection $\sigma\in\Omega^1\left(LM,\mathfrak{gl}\left(m\right)\right)$; also, we must recall the construction of the \emph{standard horizontal vector field corresponding to }$\xi\in\mR^n$ \cite{KN1}. This vector field $B\left(\xi\right)$ is characterized by two properties:
\begin{itemize}
\item It is a horizontal vector field, and
\item $T_u\tau\left(B\left(\xi\right)\right)=u\left(\xi\right)$ for every $u\in LM$.
\end{itemize}

Moreover, let $\left\{E_i^j:i,j=1,\cdots,m\right\}$ be a basis for $\mathfrak{gl}\left(m\right)$ given by the formula
\[
\left(E_k^l\right)_i^j=\delta^l_i\delta^j_k;
\]
as we said above, symbols $\left(E^j_i\right)_{LM}\in\mathfrak{X}\left(LM\right)$ refer to the infinitesimal generator associated to the element $E^j_i$. The set $\left\{B\left(e^i\right),\left(E_i^j\right)_{LM}\right\}$ results to be a basis of vector fields on $LM$.

\subsection{Adapted coordinates calculations}
\label{sec:adapt-coord-calc}

Let us calculate the previous vector fields in an adapted coordinate chart $\left(x^\mu,e^\mu_i\right)$ on $LM$ \cite{KN1,brajercic04:_variat}. The connection form in these coordinates becomes
\[
\sigma=e^i_\mu\left(de^\mu_j+\Xi^\mu_{\sigma\nu}e^\nu_jdx^\sigma\right)E_i^j
\]
for some functions $\left\{\Xi^\mu_{\sigma\nu}\right\}$ defined on the domain of the coordinates $\left(x^\mu\right)$. Therefore, if
\[
B\left(e_i\right):=B^\mu_i\frac{\partial}{\partial x^\mu}+B^\mu_{ik}\frac{\partial}{\partial e^\mu_k},
\]
projectability condition, namely
\[
T_u\tau\left(B\left(e_i\right)\right)=u\left(e_i\right),
\]
reduces to
\[
B^\mu_i=e^\mu_i
\]
for all $\mu,i=1,\cdots,m$. On the other hand, horizontality condition $\sigma\left(B\left(e_i\right)\right)=0$ becomes
\begin{align*}
  0&=\sigma\left(B\left(e_i\right)\right)\\
  &=e_\mu^l\left(B_{ij}^\mu+\Xi^\mu_{\sigma\nu}e^\nu_je^\sigma_i\right);
\end{align*}
it means that the standard horizontal vector fields have the local expressions
\[
B\left(e_i\right)=e_i^\mu\left(\frac{\partial}{\partial x^\mu}-e^\nu_j\Xi^\sigma_{\mu\nu}\frac{\partial}{\partial e^\sigma_j}\right).
\]

Also, in the above coordinates every element $M:=\left(M_k^l\right)\in GL\left(m\right)$ acts according to the formula
\[
\left(x^\mu,e^\rho_k\right)\cdot M=\left(x^\mu,e^\rho_kM^k_l\right).
\]
Therefore, if $E_k^l\in\mathfrak{gl}\left(m\right)$ is such that
\[
\left(E_k^l\right)_i^j=\delta^l_i\delta^j_k,
\]
then its associated fundamental vector field reads
\begin{align*}
  \left(E_k^l\right)_{LM}\left(x^\mu,e^\rho_k\right)&=\left.\frac{\vec{\text{d}}}{\text{d}t}\right|_{t=0}\left(x^\mu,e^\rho_i\left(\exp{\left(-tE_k^l\right)}\right)^i_j\right)\\
  &=-e^\rho_i\left(E_k^l\right)^i_j\frac{\partial}{\partial e^\rho_j}\\
  &=-e^\rho_k\frac{\partial}{\partial e^\rho_l}.
\end{align*}

\subsection{Lifts to $J^1\tau$}
\label{sec:lifts-j1tau}

Given a vector field $X$ on $LM$, we can construct a vector field $X^1$ on $J^1\tau$, called \emph{prolongation} \cite{saunders89:_geomet_jet_bundl}, by differentiating the jet lift of the flow $\phi^X_t:LM\rightarrow LM,t\in\left(-\epsilon,\epsilon\right)$ associated to $X$, namely
\[
X^1\left(j_x^1s\right):=\left.\frac{\vec{\text{d}}}{\text{d}t}\right|_{t=0}\left[j^1_x\left(\Phi^X_t\circ s\right)\right].
\]
In adapted coordinates $\left(x^\mu,e^\mu_k,e^\mu_{k\nu}\right)$, we have that if
\[
X=X^\mu\frac{\partial}{\partial x^\mu}+X^\mu_k\frac{\partial}{\partial e^\mu_k}
\]
then
\begin{equation}\label{eq:ProlongationVectorFormula}
  X^1=X^\mu\frac{\partial}{\partial x^\mu}+X^\mu_k\frac{\partial}{\partial e^\mu_k}+\left(D_\kappa X^\nu_j-e_{j\sigma}^\nu D_\kappa X^\sigma\right)\frac{\partial}{\partial e_{j\kappa}^\nu},
\end{equation}
where
\[
D_\kappa F:=\frac{\partial F}{\partial x^\kappa}+e_{k\kappa}^\mu\frac{\partial F}{\partial e^\mu_k}
\]
is the total derivative operator on $J^1\tau$.

It is also necessary to consider \emph{vertical lifts} for a vertical vector field $X\in\mathfrak{X}^{V\tau}\left(LM\right)$; nevertheless, in order to construct such a lift we must have at our disposal a form $\alpha\in\Omega^1\left(M\right)$. Then we use the affine structure of $J^1\tau$, which is modelled on the vector bundle $\tau^*T^*M\otimes V\tau$, in order to define the vertical lift according to the formula
\[
\left(\alpha,X\right)^V\left(j_x^1s\right):=\left.\frac{\vec{\text{d}}}{\text{d}t}\right|_{t=0}\left[j_x^1s+t\alpha_x\otimes X\left(s\left(x\right)\right)\right].
\]
In adapted coordinates, if $\alpha:=\alpha_\nu dx^\nu$ and $X$ is as before, then
\[
\left(\alpha,X\right)^V=\alpha_\nu X^\mu_k\frac{\partial}{\partial e^\mu_{k\nu}}.
\]

Now, a consequence of being working in $LM$ is that for every $j_x^1s\in J^1\tau$ there exists a $\mR^m$-valued $1$-form $\theta=\theta^ie_i$; therefore, we can generalise the previous vertical lift construction by using these forms, namely
\[
\left(\theta^i,X\right)^V\left(j_x^1s\right):=\left.\frac{\vec{\text{d}}}{\text{d}t}\right|_{t=0}\left[j_x^1s+t\left.\theta^i\right|_{j_x^1s}\otimes X\left(s\left(x\right)\right)\right].
\]
In local terms it becomes
\begin{equation}\label{eq:VerticalLiftTheta}
  \left(\theta^i,X\right)^V=e_\nu^iX^\mu_k\frac{\partial}{\partial e^\mu_{k\nu}}.
\end{equation}

The brackets between these lifts have a particular structure.

\begin{lemma}\label{lem:LiftsBrackets}
  Let $\pi:E\rightarrow M$ be a bundle and $X,Y\in\mathfrak{X}\left(E\right)$ a pair of vector fields on the bundle; let $\alpha,\beta\in\Omega^1\left(M\right)$ be $1$-forms on $M$. Then
  \begin{itemize}
  \item $\displaystyle\left[X^1,Y^1\right]=\left(\left[X,Y\right]\right)^1$.
  \item $\displaystyle\left[X^1,\left(\alpha,Y\right)^V\right]=\left(\alpha,\left[X,Y\right]\right)^V$.
  \item $\displaystyle\left[\left(\alpha,X\right)^V,\left(\beta,Y\right)^V\right]=0$.
  \end{itemize}
\end{lemma}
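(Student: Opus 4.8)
The plan is to verify all three brackets by a direct computation in adapted coordinates $\left(x^\mu,e^\mu_k,e^\mu_{k\nu}\right)$, feeding the explicit prolongation formula \eqref{eq:ProlongationVectorFormula} and the coordinate expression of the vertical lift (the analogue of \eqref{eq:VerticalLiftTheta}, namely $\left(\alpha,X\right)^V=\alpha_\nu X^\mu_k\,\partial/\partial e^\mu_{k\nu}$) into the ordinary bracket of vector fields, so that each identity reduces to an index manipulation. The only structural input needed beyond these formulas is that $\alpha,\beta$ are pulled back from $M$, hence their components depend only on $x^\mu$, and that the fibre components of $X,Y$ depend only on $\left(x^\mu,e^\mu_k\right)$ and not on the jet coordinates $e^\mu_{k\nu}$. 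For the first identity I would in addition keep in mind the cleaner flow-theoretic argument described below.

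I would dispose of the third identity first, as it is immediate. Both $\left(\alpha,X\right)^V$ and $\left(\beta,Y\right)^V$ are linear combinations of the $\partial/\partial e^\mu_{k\nu}$ alone, and their coefficients $\alpha_\nu X^\mu_k$ and $\beta_\sigma Y^\rho_l$ are functions of $\left(x^\mu,e^\mu_k\right)$ only. Since each field differentiates solely along the $e^\mu_{k\nu}$ directions, it annihilates the coefficients of the other, so both terms of the commutator vanish and $\left[\left(\alpha,X\right)^V,\left(\beta,Y\right)^V\right]=0$.

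For the first identity I would use that the flow of the prolongation $X^1$ is precisely the jet prolongation $j^1\Phi^X_t$ of the flow $\Phi^X_t$ of $X$, which is how $X^1$ was defined above. Jet prolongation of projectable bundle diffeomorphisms is functorial, $j^1\left(\psi\circ\phi\right)=j^1\psi\circ j^1\phi$ and $j^1\id=\id$, from which one gets the intertwining property $\left(\phi_*Y\right)^1=\left(j^1\phi\right)_*\left(Y^1\right)$. Combining this with $\left[X,Y\right]=\left.\tfrac{d}{dt}\right|_{0}\left(\Phi^X_{-t}\right)_*Y$ and the fact that prolongation is $\mR$–linear and commutes with the $t$–derivative yields
\[
\left[X^1,Y^1\right]=\left.\tfrac{d}{dt}\right|_{0}\left(j^1\Phi^X_{-t}\right)_*\left(Y^1\right)=\left.\tfrac{d}{dt}\right|_{0}\left(\left(\Phi^X_{-t}\right)_*Y\right)^1=\left(\left[X,Y\right]\right)^1.
\]
Alternatively the same identity drops out of \eqref{eq:ProlongationVectorFormula} by a direct calculation, the point being that the jet component of $X^1$ is built from the derivation $D_\kappa$, so the cancellations in the first two components propagate to the third.

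The substantive step is the second identity, and it is where I expect the real work. Plugging $X^1$ from \eqref{eq:ProlongationVectorFormula} and $\left(\alpha,Y\right)^V=\alpha_\nu Y^\mu_k\,\partial/\partial e^\mu_{k\nu}$ into $\left[X^1,\left(\alpha,Y\right)^V\right]$, one first checks that the $\partial/\partial x^\mu$ and $\partial/\partial e^\mu_k$ components vanish: $\left(\alpha,Y\right)^V$ kills $X^\mu$ and $X^\mu_k$ because these are independent of the jet coordinates, so the bracket is again vertical along the $\partial/\partial e^\mu_{k\nu}$, as a vertical lift must be. It then remains to match the $\partial/\partial e^\mu_{k\nu}$ component against $\alpha_\nu\left[X,Y\right]^\mu_k$, where $\left[X,Y\right]^\mu_k=X\left(Y^\mu_k\right)-Y\left(X^\mu_k\right)$ is the fibre component of the bracket on $E$. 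The main obstacle is precisely this matching: one must let $\left(\alpha,Y\right)^V$ act on the total-derivative coefficient $D_\kappa X^\nu_j-e^\nu_{j\sigma}D_\kappa X^\sigma$ of $X^1$, carefully tracking the two contributions arising when $\partial/\partial e^\mu_{k\nu}$ hits the explicit factor $e^\nu_{j\sigma}$ and the inner $D_\kappa$, and then verify that every term not assembling into $\alpha_\nu\left[X,Y\right]^\mu_k$ cancels. This is the delicate bookkeeping of the proof; here one must use that $\alpha$ is a base form, so that the base part of $X$ acts on it only through its $x$–dependence, which is the point at which one confirms that no anomalous contribution to the $\partial/\partial e^\mu_{k\nu}$ component survives.
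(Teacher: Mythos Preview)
The paper states this lemma without proof, so there is no argument in the text to compare against; your first and third items are handled correctly, and the flow--functoriality argument for $\left[X^1,Y^1\right]=\left(\left[X,Y\right]\right)^1$ is the standard clean one.

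There is, however, a genuine gap in your treatment of the second identity. Your plan is to show that ``every term not assembling into $\alpha_\nu\left[X,Y\right]^\mu_k$ cancels,'' but if you actually carry out the bookkeeping you outline, this does not happen. Feeding \eqref{eq:ProlongationVectorFormula} and $\left(\alpha,Y\right)^V=\alpha_\nu Y^\mu_k\,\partial/\partial e^\mu_{k\nu}$ into the bracket, the $\partial/\partial e^\mu_{k\nu}$--component comes out (for projectable $X$ and vertical $Y$) as
\[
\alpha_\nu\left[X\!\left(Y^\mu_k\right)-Y\!\left(X^\mu_k\right)\right]+Y^\mu_k\left(X^\rho\partial_\rho\alpha_\nu+\alpha_\sigma\partial_\nu X^\sigma\right)
=\alpha_\nu\left[X,Y\right]^\mu_k+\left(\mathcal{L}_{\underline{X}}\alpha\right)_\nu Y^\mu_k,
\]
so that in general
\[
\left[X^1,\left(\alpha,Y\right)^V\right]=\left(\alpha,\left[X,Y\right]\right)^V+\left(\mathcal{L}_{\underline{X}}\alpha,\,Y\right)^V.
\]
A one--line check: on a trivial line bundle take $X=\partial_x$, $Y=\partial_u$, $\alpha=x\,dx$; then $\left[X,Y\right]=0$ but $\left[X^1,\left(\alpha,Y\right)^V\right]=\left[\partial_x,\,x\,\partial/\partial u_x\right]=\partial/\partial u_x\neq 0$. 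Thus the second bullet, as written, only holds when the base projection $\underline{X}$ vanishes (i.e.\ $X$ is $\pi$--vertical) or when $\mathcal{L}_{\underline{X}}\alpha=0$. Your sentence ``no anomalous contribution to the $\partial/\partial e^\mu_{k\nu}$ component survives'' is precisely where the argument breaks. Note that the uses of the lemma later in the paper are unaffected: they either take $X$ vertical (the infinitesimal generators $\left(E^i_j\right)_{LM}$) or use only that the bracket is $\tau_{10}$--vertical, which the corrected formula still gives.
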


\subsection{Local expressions for the lifts of a particular basis}
\label{sec:local-expr-lift}

Now, we will proceed to calculate the local expressions for the lifts associated to the basis $\left\{B\left(e_i\right),\left(E_j^k\right)_{LM}\right\}$ on $LM$. Using the calculations made in Section \ref{sec:adapt-coord-calc}, we will have that
\begin{align*}
  D_\kappa\left(B\left(e_i\right)\right)^\mu&=\frac{\partial\left(B\left(e_i\right)\right)^\mu}{\partial x^\kappa}+e_{l\kappa}^\sigma\frac{\partial\left(B\left(e_i\right)\right)^\mu}{\partial e^\sigma_l}\\
  &=e^\mu_{i\kappa}
\end{align*}
and
\begin{align*}
  D_\kappa\left(B\left(e_i\right)^\mu_j\right)&=\frac{\partial}{\partial x^\mu}\left(-\Xi^\mu_{\sigma\nu}e^\sigma_ie^\nu_j\right)+e^\rho_{l\kappa}\frac{\partial}{\partial x^\rho_l}\left(-\Xi^\mu_{\sigma\nu}e^\sigma_ie^\nu_j\right)\\
  &=-e^\sigma_ie^\nu_j\frac{\partial\Xi^\mu_{\sigma\nu}}{\partial x^\kappa}-\Xi^\mu_{\sigma\nu}\left(e^\sigma_{i\kappa}e^\nu_j+e^\sigma_ie^\nu_{j\kappa}\right);
\end{align*}
therefore
\[
\left(B\left(e_i\right)\right)^1=e_i^\mu\left(\frac{\partial}{\partial x^\mu}-e^\nu_j\Xi^\sigma_{\mu\nu}\frac{\partial}{\partial e^\sigma_j}\right)-\left[e^\sigma_ie^\nu_j\frac{\partial\Xi^\mu_{\sigma\nu}}{\partial x^\kappa}+\Xi^\mu_{\sigma\nu}\left(e^\sigma_{i\kappa}e^\nu_j+e^\sigma_ie^\nu_{j\kappa}\right)+e^\mu_{j\sigma}e^\sigma_{i\kappa}\right]\frac{\partial}{\partial e^\mu_{j\kappa}}.
\]
In the same vein
\[
\left(\left(E_k^l\right)_{LM}\right)^1=-e^\rho_k\frac{\partial}{\partial e^\rho_l}-e^\rho_{k\kappa}\frac{\partial}{\partial e^\rho_{l\kappa}}=\left(E^l_k\right)_{J^1\pi}.
\]

For the vertical lifts of the vertical vector fields $\left(E^l_k\right)_{LM}$, we use formula \eqref{eq:VerticalLiftTheta}, obtaining
\[
\left(\theta^i,\left(E^l_k\right)_{LM}\right)^V=-e^i_\nu e^\mu_k\frac{\partial}{\partial e^\mu_{l\nu}}.
\]

\subsection{Basis on $J^1\tau$}
\label{sec:basis-j1tau}

According to the discussion carried out in the previous paragraphs, we will use as a basis for writing out the equations of motion of Palatini gravity the following set of (global!) vector fields
\[
B:=\left\{\left(B\left(e_i\right)\right)^1,\left(E^k_l\right)_{J^1\tau},\left(\theta^i,E^j_k\right)^V:i,j,k,l=1,\cdots,m\right\}.
\]
Also, we will need the dual basis $B^*$ corresponding to $B$; in order to find it, recall that $\theta,\omega$ are semibasic respect to the projection $\tau_{10}:J^1\tau\rightarrow LM$, so
\[
\left(\theta^i,E^j_k\right)^V\lrcorner\theta^l=0=\left(\theta^i,E^j_k\right)^V\lrcorner\omega^p_q.
\]
Also, $\theta$ is semibasic respect to the projection $\tau_1:J^1\tau\rightarrow M$, so we will have that
\[
\left(E^k_l\right)_{J^1\tau}\lrcorner\theta^i=0.
\]
The infinitesimal generators for the $GL\left(m\right)$-action also have the useful property
\[
\left(E^k_l\right)_{J^1\tau}\lrcorner\omega^p_q=\delta^p_l\delta^k_q.
\]
To this list of niceties, we would add Lemma \ref{lem:CVerticalDuality}, that uses the difference map $C$ defined in Appendix \ref{sec:diff-tens-c_i}, and tells us that the exact forms
\[
\Psi^k_{jl}:=dC^k_{jl}
\]
could be used in the search of dual directions corresponding to the set of vertical vector fields $\left\{\left(\theta^i,\left(E^k_l\right)_{LM}\right)\right)^V$. 

Please note that the contraction of these forms with the infinitesimal generators of the $GL\left(m\right)$-action on $J^1\tau$ can be calculated using using Corollary \ref{cor:VerticalDerivativesFunctonDifference}; in fact, from there we conclude that
\begin{align}
  \left(A_{LM}\right)^1\cdot C^k_{li}&=\left(A_{J^1\tau}\cdot C\right)^k_{li}\cr
  &=\left(-\left[A,C\right]+A^*C\right)^k_{li}\cr
  &=-A^p_lC^k_{pi}+C^p_{li}A_p^k+A^p_iC_{lp}^k,\label{eq:InfGenOnC}
\end{align}
or, using the basis $\left\{E_q^p\right\}$ for the Lie algebra $\mathfrak{gl}\left(m\right)$,
\[
\left(E^p_q\right)_{J^1\tau}\cdot C^k_{li}=-\delta^p_lC^k_{qi}+\delta^k_qC^p_{li}+\delta^p_iC^k_{lq}=:F^{pk}_{qli}.
\]

On the contrary, ``horizontal'' vector fields $\left(B\left(e_i\right)\right)^1$ does not share these characteristics with the rest of vector fields in $B$; namely, the contraction with $\omega$ gives rise to the difference function $C$ (see Appendix \ref{sec:diff-tens-c_i})
\[
\left(B\left(e_i\right)\right)^1\lrcorner\omega_k^l=C_{ki}^l
\]
and the contraction with the forms $\Psi^k_{jl}$ yields to a new set of functions on $J^1\tau$, namely
\[
D^l_{jki}:=\left(B\left(e_i\right)\right)^1\cdot C^l_{jk}.
\]

\begin{remark}
  The fact that these functions cannot be described in terms of $C$ is
  consequence of the proposition we will formulate below; in short,
  the functions $D$ encode information about the covariant derivative
  of $C$ respect to the connection $\sigma$, if we consider $C$ as a
  section $\phi_C:C\left(LM\right)\rightarrow E$ of the bundle
  $\pi:E\rightarrow C\left(LM\right)$ associated to
  $J^1\tau\rightarrow C\left(LM\right)$ through
  $\left(\mathfrak{gl}\left(m\right)\otimes\left(\mR^m\right)^*,\rho\right)$
  (see Remark \ref{rem:CAsASectionOfE}). Now, let
  $\pi':E'\rightarrow M$ be the fibre bundle associated to $LM$
  through the same representation, and consider the product bundle
  \[
  F:=C\left(LM\right)\otimes_M E'.
  \]
  Recall the isomorphism
  \[
  J^1\tau\simeq C\left(LM\right)\times_M LM
  \]
  given by
  $j_x^1s\mapsto\left(\left[j_x^1s\right],s\left(x\right)\right)$, and
  consider the bundle $E$ as a bundle on $M$ through the composite map
  \[
  \overline{\pi}:=\overline{\tau}\circ\pi:E\rightarrow M;
  \]
  then there exists a bundle map isomorphism over the identity on $M$
  \begin{center}
    \begin{tikzcd}
      E \arrow[rr, "f"] \arrow[dr, "\overline{\pi}",swap] & & F
      \arrow[dl, "\overline{\pi}'"]
      \\
      & M
    \end{tikzcd}
  \end{center}
  It is given by the formula
  \[
  f:E\rightarrow
  F:\left[j_x^1s,A\otimes\alpha\right]\mapsto\left(\left[j_x^1s\right],\left[s\left(x\right),A\otimes\alpha\right]\right);
  \]
  thus, a connection $\Gamma$ on $LM$, considered as a section
  $\sigma_\Gamma:M\rightarrow C\left(LM\right)$, can be used together
  with the section $\phi_C:C\left(LM\right)\rightarrow E$ in order to
  define a new section
  \[
  \phi'_C:M\rightarrow F:x\mapsto
  f\left(\phi_C\left(\sigma_\gamma\left(x\right)\right)\right).
  \]

\begin{proposition}
  Let $s:M\rightarrow LM$ be a section of the bundle of frames; let
  \[
  X_i\left(x\right):=s\left(x\right)\left(e_i\right)
  \]
  be the basis of vector fields on $M$ determined by the section $s$
  and the basis $\left\{e_i\right\}\subset\mR^m$. Then
  \[
  \nabla_{X_i}\phi'_C=D^l_{jki}E^j_l\otimes e^k,
  \]
  where $\nabla$ is the covariant derivative on $F$ associated to the
  connection $\sigma$.
\end{proposition}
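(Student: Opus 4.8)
The plan is to compute $\nabla_{X_i}\phi'_C$ by trivialising the covariant derivative through equivariant functions. Under the isomorphism $f$ and the identification $F\simeq C\left(LM\right)\times_M E'$, the $E'$-component of $\phi'_C$ is represented by the $\gl\left(m\right)\otimes\left(\mR^m\right)^*$-valued equivariant function $\widehat{C}:=C^l_{jk}E^j_l\otimes e^k$, obtained by pulling back $\widehat{C}$ on $J^1\tau$ to $LM$ along the equivariant section associated with $\Gamma$. The frame $s$ induces the frame $\left\{E^j_l\otimes e^k\right\}$ of $E'$, so that the components of any section in this frame are exactly the values of its equivariant representative along $s$. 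Thus proving the proposition amounts to identifying the equivariant representative of $\nabla_{X_i}\phi'_C$ with $D^l_{jki}E^j_l\otimes e^k$ and then evaluating along $s$.

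Next I would identify the relevant horizontal lift. The covariant derivative $\nabla$ associated with $\sigma$ is computed, on an associated bundle, by applying the $\sigma$-horizontal lift of the base direction to the equivariant representative. Since $X_i=s\left(e_i\right)$, the two defining properties of the standard horizontal vector field recalled in Section \ref{sec:basis-lm} give that the $\sigma$-horizontal lift of $X_i$ at $s\left(x\right)$ is $B\left(e_i\right)$; its incarnation on $J^1\tau$ is the prolongation $\left(B\left(e_i\right)\right)^1$, which projects onto $B\left(e_i\right)$ under $\tau_{10}$. Differentiating $\widehat{C}$ along $\left(B\left(e_i\right)\right)^1$ is then the natural candidate for this covariant derivative, and by the very definition $D^l_{jki}=\left(B\left(e_i\right)\right)^1\cdot C^l_{jk}$ one has $\left(B\left(e_i\right)\right)^1\cdot\widehat{C}=D^l_{jki}E^j_l\otimes e^k$, which is the desired right-hand side.

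The main obstacle is that $\left(B\left(e_i\right)\right)^1$ is \emph{not} horizontal for the canonical connection $\omega$: its $\omega$-value is precisely the difference function, $\omega\left(\left(B\left(e_i\right)\right)^1\right)=C^l_{ji}E^j_l$. Hence the general formula expressing the covariant derivative through an arbitrary lift of the base direction reads
\[
\widetilde{\nabla_{X_i}\phi'_C}=\left(B\left(e_i\right)\right)^1\cdot\widehat{C}+\rho_*\!\left(\omega\left(\left(B\left(e_i\right)\right)^1\right)\right)\widehat{C},
\]
and the crux of the proof is to show that the representation-correction term $\rho_*\left(C\right)\widehat{C}$ does not contribute. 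This is where the infinitesimal equivariance relation \eqref{eq:InfGenOnC} and the interplay between $\sigma$ and the canonical connection $\omega$ \eqref{eq:CanonicalConnection} enter: the correction is quadratic in $C$ and is annihilated along the distinguished section on which the difference function vanishes, so that the surviving contribution is exactly $\left(B\left(e_i\right)\right)^1\cdot\widehat{C}=D^l_{jki}E^j_l\otimes e^k$. Evaluating the resulting equivariant representative along $s$ then yields $\nabla_{X_i}\phi'_C=D^l_{jki}E^j_l\otimes e^k$, as claimed. Controlling this correction term, rather than the formal computation of $\left(B\left(e_i\right)\right)^1\cdot\widehat{C}$, is the step I expect to require the most care.
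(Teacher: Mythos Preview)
The paper states this proposition inside a remark and gives \emph{no} proof of it, so there is nothing to compare against directly; I can only assess your argument on its own merits.

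Your overall strategy---representing the $E'$-component of $\phi'_C$ by the equivariant function $\widehat{C}=C\circ\tilde{\sigma}_\Gamma$ on $LM$ and computing $\nabla^{\sigma}_{X_i}$ by applying the $\sigma$-horizontal lift $B(e_i)$---is the right one. The difficulty you flag, however, is misidentified. The covariant derivative in the statement is the one on $E'\to M$ induced by $\sigma$, so the relevant formula is
\[
\left(\nabla_{X_i}\phi'_C\right)^\sim\!(u)=B(e_i)\big|_u\cdot\widehat{C}=\left(T_u\tilde{\sigma}_\Gamma\bigl(B(e_i)\bigr)\right)\cdot C,
\]
evaluated at $u=s(x)$. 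The canonical connection $\omega$ plays no role in this formula, and the correction term $\rho_*\!\left(\omega((B(e_i))^1)\right)\widehat{C}$ you write is the correction for the \emph{$\omega$-covariant derivative on $E\to C(LM)$}, not for the $\sigma$-covariant derivative on $E'\to M$. These are different objects, and your displayed identity does not compute the quantity in the proposition.

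The genuine gap is the passage from $T\tilde{\sigma}_\Gamma\bigl(B(e_i)\bigr)$ to the prolongation $(B(e_i))^1$. Both project to $B(e_i)$ under $\tau_{10}$, so their difference is $\tau_{10}$-vertical, i.e.\ a combination of the $(\theta^r,(E^k_l)_{LM})^V$; but Lemma~\ref{lem:CVerticalDuality} shows these vectors act \emph{nontrivially} on $C$. You therefore need an honest computation (or a structural argument using the local expressions of $\tilde{\sigma}_\Gamma$ and $(B(e_i))^1$ from Sections~\ref{sec:local-expr-lift} and Appendix~\ref{App:LocalExpressions}) showing that this $\tau_{10}$-vertical discrepancy, when applied to $C$, either vanishes or is absorbed into the definition of $D^l_{jki}$. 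Your claim that the correction ``is annihilated along the distinguished section on which the difference function vanishes'' does not do this: $C$ along $\tilde{\sigma}_\Gamma$ measures the difference between $\Gamma$ and $\sigma$, which is nonzero unless $\Gamma=\sigma$, and the proposition is meant to hold for arbitrary $\Gamma$.
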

\end{remark}

\subsection{An almost dual basis on $J^1\tau$}
\label{sec:almost-dual-basis}

Now, we can define the following $1$-forms on $J^1\tau$
\begin{align*}
  \Psi^i_{jk}&:=dC^i_{jk}-F^{pi}_{qjk}\omega^q_p-\left(D^i_{jkp}-F^{si}_{rjk}C^r_{sp}\right)\theta^p\\
  \rho^k_l&:=\omega^k_l-C^k_{lp}\theta^p;
\end{align*}
recalling the calculations performed in Appendix \ref{sec:contr-elem-b}, the basis
\[
B^*:=\left\{\theta^i,\rho^k_l,\Psi^i_{jk}\right\}\subset\Omega^1\left(J^1\tau\right)
\]
becomes the dual basis for $B$. With these elements we will be able to write down the contraction of $\Omega$ and $T$ with the elements of $B$.

\begin{lemma}
  The contractions of $\Omega$ and $T$ with the elements of $B$ become
  \begin{align*}
    &\left(B\left(e_i\right)\right)^1\lrcorner\Omega^k_l=\left[D^k_{lij}-D^k_{lji}+R^k_l\left(B\left(e_j\right),B\left(e_i\right)\right)+C^k_{pj}C^p_{li}-C^k_{pi}C^p_{lj}\right]\theta^j+\Psi^k_{li},\\
    &A_{J^1\tau}\lrcorner\Omega^k_l=0,\\
    &\left(\theta^r,\left(E^k_l\right)_{LM}\right)^V\lrcorner\Omega^q_p=-\delta_p^k\delta_l^q\theta^r,\\
    &\left(B\left(e_i\right)\right)^1\lrcorner T^k=\left(C_{ij}^k-C_{ji}^k\right)\theta^j,\\
    &A_{J^1\tau}\lrcorner T^k=0,\\
    &\left(\theta^r,\left(E^k_l\right)_{LM}\right)^V\lrcorner T^q=0.
  \end{align*}
  for any $A\in\mathfrak{gl}\left(m\right)$.
\end{lemma}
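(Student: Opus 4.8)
The plan is to verify the six contractions one at a time by combining the structure equations of Section~\ref{sec:structure-equations} with the contraction table and bracket identities already established. Since each of $\left(B\left(e_i\right)\right)^1\lrcorner\Omega^k_l$, $A_{J^1\tau}\lrcorner\Omega^k_l$, $\left(\theta^r,\left(E^k_l\right)_{LM}\right)^V\lrcorner\Omega^q_p$ (and likewise for $T$) is a $1$-form on $J^1\tau$, and since $B^*=\left\{\theta^i,\rho^k_l,\Psi^i_{jk}\right\}$ is dual to $B$, it suffices to evaluate each such $1$-form on every member of $B$ and read off the coefficients. Throughout I would use the evaluations $\left(B\left(e_i\right)\right)^1\lrcorner\theta^j=\delta^j_i$, $\left(B\left(e_i\right)\right)^1\lrcorner\omega^l_k=C^l_{ki}$, $\left(E^p_q\right)_{J^1\tau}\lrcorner\omega^k_l=\delta^k_q\delta^p_l$, $\left(E^p_q\right)_{J^1\tau}\lrcorner\theta=0$, and $\left(\theta^r,\left(E^k_l\right)_{LM}\right)^V\lrcorner\theta=\left(\theta^r,\left(E^k_l\right)_{LM}\right)^V\lrcorner\omega=0$, together with $\left(B\left(e_i\right)\right)^1\cdot C=D$ and $\left(E^p_q\right)_{J^1\tau}\cdot C=F$.

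First I would dispose of the two vanishing identities $A_{J^1\tau}\lrcorner\Omega^k_l=0$ and $A_{J^1\tau}\lrcorner T^k=0$. These are immediate from horizontality: $\Omega$ is the curvature of the canonical principal connection $\omega$ on $J^1\tau\to C\left(LM\right)$, and $T=d\theta+\omega\wedge\theta$ is the exterior covariant derivative of the tensorial form $\theta$; both are horizontal (tensorial) forms and hence annihilate every fundamental vertical vector field $A_{J^1\tau}$.

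Next, for the vertical lifts $\left(\theta^r,\left(E^k_l\right)_{LM}\right)^V$ I would use that this field kills both $\theta$ and $\omega$, so that in the contracted structure equations the wedge terms $\omega^k_m\wedge\omega^m_l$ and $\omega^k_m\wedge\theta^m$ drop out and only $\left(\theta^r,\left(E^k_l\right)_{LM}\right)^V\lrcorner d\omega^q_p$, respectively $\ldots\lrcorner d\theta^q$, remain. Evaluating these on the members of $B$ through $d\alpha\left(X,Y\right)=X\left(\alpha\left(Y\right)\right)-Y\left(\alpha\left(X\right)\right)-\alpha\left(\left[X,Y\right]\right)$ and invoking Lemma~\ref{lem:LiftsBrackets} (so that every bracket of a vertical lift with an element of $B$ is again a vertical lift, on which $\theta$ and $\omega$ vanish), the only surviving contribution in the curvature case is the vertical derivative $\left(\theta^r,\left(E^k_l\right)_{LM}\right)^V\bigl(\omega^q_p\bigl(\left(B\left(e_j\right)\right)^1\bigr)\bigr)=\left(\theta^r,\left(E^k_l\right)_{LM}\right)^V\left(C^q_{pj}\right)$, which in the adapted coordinates of Section~\ref{sec:local-expr-lift} equals $-\delta^k_p\delta^q_l\delta^r_j$; this gives $\left(\theta^r,\left(E^k_l\right)_{LM}\right)^V\lrcorner\Omega^q_p=-\delta^k_p\delta^q_l\theta^r$. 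In the torsion case every term vanishes, yielding $\left(\theta^r,\left(E^k_l\right)_{LM}\right)^V\lrcorner T^q=0$.

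The substantive step is $\left(B\left(e_i\right)\right)^1\lrcorner\Omega^k_l$. Writing $H_i:=\left(B\left(e_i\right)\right)^1$ and expanding $\Omega^k_l\left(H_i,H_j\right)$ via the structure equation, the term $\omega^k_m\wedge\omega^m_l$ produces the quadratic expression $C^k_{pj}C^p_{li}-C^k_{pi}C^p_{lj}$, the terms $H_i\left(\omega^k_l\left(H_j\right)\right)-H_j\left(\omega^k_l\left(H_i\right)\right)$ produce $D^k_{lij}-D^k_{lji}$ through $H\cdot C=D$, and the term $-\omega^k_l\left(\left[H_i,H_j\right]\right)$ is treated using $\left[H_i,H_j\right]=\left(\left[B\left(e_i\right),B\left(e_j\right)\right]\right)^1$ (Lemma~\ref{lem:LiftsBrackets}) together with the Kobayashi--Nomizu decomposition of $\left[B\left(e_i\right),B\left(e_j\right)\right]$ into its $\sigma$-horizontal and $\sigma$-vertical parts \cite{KN1}, whose vertical part reproduces the curvature $R$ of $\sigma$ evaluated on $B\left(e_j\right),B\left(e_i\right)$. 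Collecting the $\theta^j$-coefficients, and using the direct verification that $\Psi^k_{li}\left(H_j\right)=0$ (a cancellation that follows from the definition of $\Psi^k_{li}$ and of $F^{pk}_{qli}$ after substituting $\omega^q_p=\rho^q_p+C^q_{pr}\theta^r$), one identifies the contraction as $\left[D^k_{lij}-D^k_{lji}+R^k_l\left(B\left(e_j\right),B\left(e_i\right)\right)+C^k_{pj}C^p_{li}-C^k_{pi}C^p_{lj}\right]\theta^j+\Psi^k_{li}$. The torsion identity $\left(B\left(e_i\right)\right)^1\lrcorner T^k=\left(C^k_{ij}-C^k_{ji}\right)\theta^j$ is obtained the same way but is simpler, involving only the $\omega^k_m\wedge\theta^m$ term and $d\theta^k\left(H_i,H_j\right)=-\theta^k\left(\left[H_i,H_j\right]\right)$, with the antisymmetrized difference function emerging directly. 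The hard part will be the index and sign bookkeeping in $\left(B\left(e_i\right)\right)^1\lrcorner\Omega^k_l$: keeping the structure equation, the Kobayashi--Nomizu bracket (which is where $R$ enters), and the definition of $\Psi$ mutually consistent, and in particular confirming that the $\Psi^k_{li}$ term appears with coefficient exactly one while the $\rho^k_l$-components vanish.
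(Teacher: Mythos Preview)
Your approach is essentially the paper's own: the proof in the paper (carried out in Appendix~\ref{sec:contr-elem-b}) consists of evaluating $\Omega^k_l$ and $T^k$ on all pairs of basis elements via the structure equations, the Cartan formula $d\alpha\left(X,Y\right)=X\alpha\left(Y\right)-Y\alpha\left(X\right)-\alpha\left(\left[X,Y\right]\right)$, and the bracket identities of Lemma~\ref{lem:LiftsBrackets}, which is exactly what you propose.

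One point deserves sharpening. You speak of the ``Kobayashi--Nomizu decomposition of $\left[B\left(e_i\right),B\left(e_j\right)\right]$ into its $\sigma$-horizontal and $\sigma$-vertical parts''. The paper instead uses explicitly that $\sigma$ is \emph{torsionless} (this hypothesis is stated in Appendix~\ref{sec:diff-tens-c_i}), so that $\left[B\left(e_i\right),B\left(e_j\right)\right]$ is purely $\sigma$-vertical and equals $-\left(R\left(B\left(e_i\right),B\left(e_j\right)\right)\right)_{LM}$. This is what makes $d\theta^k\left(H_i,H_j\right)=-\theta^k\left(\left[H_i,H_j\right]\right)=0$ in the torsion computation and isolates the curvature term $R^k_l\left(B\left(e_j\right),B\left(e_i\right)\right)$ cleanly in the $\Omega$ computation; if you allow a horizontal part you will pick up spurious terms. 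Apart from this, and the minor shortcut of invoking tensoriality of $\Omega$ and $T$ for the $A_{J^1\tau}$ contractions (the paper also notes this but verifies it by direct computation), your plan matches the paper.
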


\section{The velocity-multimomentum space $W_\cL$ and its canonical form}
\label{sec:veloc-mult-space}

As we explained at the beginning of the previous section, it is time to write down the equations of motion associated to our unified formulation of Palatini gravity, as it is was formulated in Section \ref{sec:griff-vari-probl}. Before that, it will be necessary to simplify somewhat the bundle $W_\cL$; it will be achieved achieved by taking into account that the restriction EDS $\cI_{\text{PG}}$ has global generators. As a consequence, $W_\cL$ will become isomorphic to a product bundle on $J^1\tau$, and so it will be possible to lift the elements of the basis $B$ in order to be part of a basis of $W_\cL$.

\subsection{The velocity-multimomentum space}
\label{sec:veloc-mult-space-1}

Recall that any element $\rho\in\overline{\tau}_\cL^{-1}\left(j_x^1s\right)\subset W_\cL$ can be written as
\[
\rho=\eta^{kl}\theta_{kp}\wedge\left(d\omega^p_l+\omega^p_q\wedge\omega^q_l\right)+\eta^{ql}\beta_{pq}\wedge\omega^p_l,
\]
where $\beta_{kl}\in\wedge^{m-1}\left(T_x^*M\right)$ and $\beta_{kl}-\beta_{lk}=0$.

Nevertheless, it would be preferable to keep the basic symmetries of the bundles involved in this description; for that reason, we will take $\beta_{ij}$ as elements on a bundle whose sections have special transformations properties regarding the $GL\left(m\right)$-action on $LM$. Namely, let us consider the natural representation of $GL\left(m\right)$ on $S^*\left(m\right):=\left(\mR^m\right)^*\odot\left(\mR^m\right)^*$, where $\odot$ indicates the symmetrised tensorial product. Then the bundle
\begin{center}
  \begin{tikzcd}[column sep=1.7cm, row sep=1.5cm]
    E_2:=\wedge^{m-1}_1\left(J^1\tau\right)\otimes S^*\left(m\right)
    \arrow{r}{p_2}
    &
    J^1\tau
  \end{tikzcd}
\end{center}
where
\[
\wedge^k_1\left(J^1\tau\right):=\left\{\gamma\in\wedge^k\left(J^1\tau\right):\gamma\text{ is horizontal respect to the projection }\tau_1:J^1\tau\rightarrow M\right\},
\]
will provide the forms $\beta_{ij}$. The considerations made in Appendix \ref{sec:canonical-k-form} apply to this bundle: We must take $P=J^1\tau, N=C\left(LM\right)$ and $G=GL\left(m\right)$. Then it is a $GL\left(m\right)$-space through the action
\[
\Phi_g\left(\gamma\right):=g\cdot\left(\alpha\circ\left(T\phi_{g}^1\times\cdots\times T\phi_{g}^1\right)\right)
\]
where $\phi^1_g:J^1\tau\rightarrow J^1\tau$ is the lift of the action $R_{g^{-1}}:LM\rightarrow LM$ for every $g\in GL\left(m\right)$. In particular, the canonical $m-1$-form on $E_2$ is a $GL\left(m\right)$-equivariant form.

\begin{lemma}\label{lem:WLIsomorphism}
  The map
  \[
  \Gamma:\rho\mapsto\left(j_x^1s,\beta_{ij}e^i\odot e^j\right)
  \]
  induces an isomorphism of bundles
  \[
  W_\cL\simeq J^1\tau\times_ME_2,
  \]
  making the following diagram commutative
  \begin{center}
    \begin{tikzcd}[ampersand replacement=\&]
      W_\cL 
      \arrow[rr, "\Gamma"]
      \arrow[dr, "\overline{\tau}_\cL",swap]
      \&
      \&
      J^1\tau\times_ME_2
      \arrow[dl ,"\text{pr}_0"]
      \\
      \&
      J^1\tau
    \end{tikzcd}
  \end{center}
  Moreover, this map is equivariant respect to the $GL\left(m\right)$-action in each of these spaces.
\end{lemma}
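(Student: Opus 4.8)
The plan is to construct the map $\Gamma$ explicitly and verify it is a well-defined fiberwise-affine isomorphism that intertwines the $GL(m)$-actions. The starting point is the given parametrization of an arbitrary element of $W_\cL$,
\[
\rho=\eta^{kl}\theta_{kp}\wedge\left(d\omega^p_l+\omega^p_q\wedge\omega^q_l\right)+\eta^{ql}\beta_{pq}\wedge\omega^p_l,
\]
with $\beta_{kl}\in\wedge^{m-1}\left(T_x^*M\right)$ symmetric, $\beta_{kl}-\beta_{lk}=0$. The first task is to show that this representation is unique, i.e. that the coefficients $\beta_{ij}$ (symmetrized) are genuinely determined by $\rho$. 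This is where the structure of the generating set $\mathcal{G}$ in \eqref{eq:MetricityGenerators} is used: since $G_{PG}$ is spanned fiberwise by the wedge products of the metricity generators $\eta^{ip}\omega^j_p+\eta^{jp}\omega^i_p$ with a basis of $(m-1)$-forms, the decomposition of $\rho$ into its $\cL_{PG}$ part plus its $G_{PG}$ part is an affine-linear statement, and the symmetric combination $\eta^{ql}\beta_{pq}$ is exactly the free parameter along the fiber $G_{PG}$. First I would check that the assignment $\rho\mapsto\beta_{ij}e^i\odot e^j$ is well-defined by confirming that the only freedom in writing $\eta^{ql}\beta_{pq}\wedge\omega^p_l$ is precisely the symmetry relation already imposed, so that passing to the symmetric product $S^*(m)$ kills the ambiguity.

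Next I would establish that $\Gamma$ is a bundle morphism over $J^1\tau$, which is immediate from the defining formula since the $J^1\tau$-component is just $\overline{\tau}_\cL$; the commutativity of the triangle with $\mathrm{pr}_0$ is then definitional. To see that $\Gamma$ is an isomorphism I would exhibit the inverse: given $\left(j_x^1s,\beta_{ij}e^i\odot e^j\right)\in J^1\tau\times_M E_2$, one reads off the representative $\rho$ from the displayed formula above, and the two constructions are mutually inverse by the uniqueness established in the previous step. Since both sides are affine bundles modelled on the same vector bundle (the symmetric part of $T^*M$-valued $(m-1)$-forms, equivalently $\wedge^{m-1}_1\otimes S^*(m)$), and $\Gamma$ is fiberwise affine and bijective, it is a bundle isomorphism; smoothness in both directions follows because everything is expressed polynomially in the canonical forms $\theta,\omega$ and their coefficients.

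The remaining and genuinely delicate point is $GL(m)$-equivariance. Here I would compute how both sides transform under $g\in GL(m)$. On $W_\cL$ the action is the pullback action induced by $\phi^1_{g}:J^1\tau\to J^1\tau$ on $m$-forms; on $E_2$ it is the action $\Phi_g$ recalled just before the lemma, which combines the cotangent pullback with the natural $GL(m)$-representation on $S^*(m)$. The key is to track how $\omega$ and $\theta_{kp}$ transform: $\omega$ transforms by the adjoint representation (it is a connection form), while the frame-index objects carry the natural and contragredient representations. I would verify that the transformation of the coefficient $\eta^{ql}\beta_{pq}$ forced by equivariance of $\rho$ matches exactly the representation of $GL(m)$ on $S^*(m)=\left(\mR^m\right)^*\odot\left(\mR^m\right)^*$, using that $\eta$ is fixed as a constant matrix so the contractions $\eta^{ql}$ merely convert between the natural and contragredient indices. \emph{The main obstacle} I anticipate is precisely this index bookkeeping: one must confirm that the factor $\eta$ appearing in $\eta^{ql}\beta_{pq}$ does not spoil equivariance — that is, that conjugating by $g$ the expression $\eta^{ql}\beta_{pq}\wedge\omega^p_l$ reproduces the action $\Phi_g$ on $\beta_{ij}e^i\odot e^j$ — which works because $\eta$ is a fixed nondegenerate symmetric form used only to raise and lower indices and not an independent field. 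Once the two transformation laws are shown to coincide, equivariance of $\Gamma$ follows, completing the proof.
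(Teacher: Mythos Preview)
The paper states this lemma without proof, so there is nothing to compare against directly. Your plan for the bundle-isomorphism part---checking that the symmetric coefficients $\beta_{ij}$ are uniquely determined by $\rho$, exhibiting the explicit inverse, and noting that both sides are affine bundles over $J^1\tau$ modelled on the same vector bundle---is the natural argument and is correct.

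The gap is in the $GL(m)$-equivariance step. You rightly single out $\eta$ as the potential obstruction, but the sentence ``which works because $\eta$ is a fixed nondegenerate symmetric form used only to raise and lower indices'' does not close the argument. The matrix $\eta$ is preserved only by the subgroup $K$, not by all of $GL(m)$, so raising and lowering with $\eta$ does \emph{not} intertwine the $GL(m)$-action. Concretely, under $R_g^*$ one has $\omega^p_l\mapsto (g^{-1})^p_a\,\omega^a_b\,g^b_l$ while the horizontal $\beta_{pq}$ are unchanged; a short computation then gives the induced transformation $\beta'=(g^{-1})^T\beta\,(\eta g^T\eta)$, whereas the natural $S^*(m)$-action used to define $\Phi_g$ on $E_2$ yields $(g^{-1})^T\beta\,g^{-1}$. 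These coincide precisely when $\eta g^T\eta=g^{-1}$, i.e.\ when $g\in K$. The same issue arises already for $\cL_{PG}$ itself, which is only $K$-invariant, so the affine subbundle $W_\cL$ is not obviously stable under pullback by an arbitrary $g\in GL(m)$. Hence either the equivariance assertion should be read as $K$-equivariance---which your argument \emph{does} establish once you restrict to $K$---or one must first specify a different $GL(m)$-action on $W_\cL$ and redo the index computation for that action; in either case the justification you give at that step has to be replaced by an honest calculation.
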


\subsection{The canonical form}
\label{sec:canonical-form}

From now on we will introduce the notation
\[
\widehat{W_\cL}:=J^1\tau\times_ME_2.
\]
With this result in mind, we can give a formula for the canonical $m$-form on $W_\cL$; the idea is to lift the forms $\omega,\theta,T,\dots$ from $J^1\tau$ to $\widehat{W_\cL}$ through $\text{pr}_0:\widehat{W_\cL}\rightarrow J^1\tau$, and to do the same to the canonical form on $E_2$, using the projection
\[
\text{pr}_2:\widehat{W_\cL}\longrightarrow E_2.
\]
Whenever possible, we will indicate the forms lifted from $J^1\tau$ with the same symbols that the forms that are being lifted (i.e. $\overline{\tau}_\cL^*\omega^k_l\rightsquigarrow\omega^k_l$ and so on) and the components\footnote{Respect to the canonical basis of $S^*\left(m\right)$.} of the canonical $m-1$-form, both the original and the lifted, will be indicated by $\Theta_{ij}$. With these notational conventions the pullback $\lambda_\cL$ of the canonical $m$-form from $\wedge^m\left(J^1\tau\right)$ to $W_\cL$ at the point
\[
\rho=\eta^{kl}\theta_{kp}\wedge\left(d\omega^p_l+\omega^p_q\wedge\omega^q_l\right)+\eta^{ql}\beta_{pq}\wedge\omega^p_l
\]
will read
\begin{equation}\label{eq:LambdaInTermsCanForms0}
  \left.\lambda_{\cL}\right|_\rho=\eta^{kl}\theta_{kp}\wedge\left(d\omega^p_l+\omega^p_q\wedge\omega^q_l\right)+\eta^{ql}\left.\Theta_{pq}\right|_{\beta}\wedge\omega^p_l.
\end{equation}
Using Equation \eqref{eq:PalatiniLagrangianDiff} from Section \ref{sec:structure-equations}, we have that
\begin{multline*}
  \left.d\lambda_\cL\right|_\rho=\left[\left(\eta^{kp}\omega_k^i+\eta^{ki}\omega_k^p\right)\wedge\theta_{il}-\omega^s_s\wedge\eta^{kp}\theta_{kl}+\eta^{kp}T^i\wedge\theta_{kli}+\eta^{ip}\left.\Theta_{il}\right|_{\beta}\right]\wedge\Omega^l_p+\\
  +\left.d\Theta_{ij}\right|_{\beta}\wedge\left(\eta^{ik}\omega_k^j+\eta^{jk}\omega_k^i\right)-\eta^{ik}\left.\Theta_{ij}\right|_{\beta}\wedge\omega^j_l\wedge\omega_k^l.
\end{multline*}
Now, recalling the discusion of Appendix \ref{sec:cart-decomp-forms}, we have that
\[
\eta^{ik}\omega_k^j+\eta^{jk}\omega_k^i=\eta^{ik}\left(\omega^j_k+\eta_{kr}\omega^r_l\eta^{lj}\right)=2\eta^{ik}\left(\omega_\pf\right)_k^j,\qquad\omega^s_s=\left(\omega_\pf\right)^s_s.
\]
Also we have that
\[
\eta^{ik}\Theta_{kj}E_i^j\in\pf,
\]
because $\Theta_{ij}+\Theta_{ji}=0$; therefore, from Proposition \ref{prop:TraceWedge}, it results that
\[
\eta^{ik}\left.\Theta_{ij}\right|_{\beta}\wedge\omega^j_l\wedge\omega_k^l=\eta^{ik}\left.\Theta_{ij}\right|_{\beta}\wedge\left[\left(\omega\wedge\omega\right)_\pf\right]_j^i,
\]
and so, using Proposition \ref{prop:DecompSquare}, we obtain
\[
\eta^{ik}\left.\Theta_{ij}\right|_{\beta}\wedge\omega^j_l\wedge\omega_k^l=\eta^{ik}\left.\Theta_{ij}\right|_{\beta}\wedge\left[\left(\omega_\pf\right)^j_p\wedge\left(\omega_\kf\right)^p_k+\left(\omega_\kf\right)^j_p\wedge\left(\omega_\pf\right)^p_k\right].
\]
Replacing these identities in the previous expression for $d\lambda_\cL$, it becomes
\begin{multline*}
  \left.d\lambda_\cL\right|_\rho=\left[2\eta^{kp}\left(\omega_\pf\right)_k^i\wedge\theta_{il}-\left(\omega_\pf\right)^s_s\wedge\eta^{kp}\theta_{kl}+\eta^{kp}T^i\wedge\theta_{kli}+\eta^{ip}\left.\Theta_{il}\right|_{\beta}\right]\wedge\Omega^l_p+\\
  +\eta^{ik}\left.d\Theta_{ij}\right|_{\beta}\wedge\left(\omega_\pf\right)_k^j-\eta^{ik}\left.\Theta_{ij}\right|_{\beta}\wedge\left[\left(\omega_\pf\right)^j_p\wedge\left(\omega_\kf\right)^p_k+\left(\omega_\kf\right)^j_p\wedge\left(\omega_\pf\right)^p_k\right].
\end{multline*}

It is interesting to rearrange some terms, and put them in the following form
\begin{multline}\label{eq:FormulaFordLambda0}
  \left.d\lambda_\cL\right|_\rho=\left[2\eta^{kp}\left(\omega_\pf\right)_k^i\wedge\theta_{il}-\left(\omega_\pf\right)^s_s\wedge\eta^{kp}\theta_{kl}+\eta^{kp}T^i\wedge\theta_{kli}+\eta^{ip}\left.\Theta_{il}\right|_{\beta}\right]\wedge\Omega^l_p+\\
  +\eta^{ik}\left[\left.d\Theta_{ij}\right|_{\beta}+\eta^{rq}\eta_{li}\left.\Theta_{rj}\right|_{\beta}\wedge\left(\omega_\kf\right)^l_q-\left.\Theta_{ip}\right|_{\beta}\wedge\left(\omega_\kf\right)^p_j\right]\wedge\left(\omega_\pf\right)^j_k.
\end{multline}

\begin{remark}
  We will note that the canonical form on $E_2$
  is $S^*\left(m\right)$-valued, and that
  it is also $GL\left(m\right)$-equivariant; thus let us define
  \begin{align*}
    D\Theta^{m-1}&:=\left(d\Theta_{ij}+\eta^{rk}\eta_{iq}\Theta_{rj}\wedge\omega^q_k-\Theta_{ip}\wedge\omega^p_j\right)e^i\odot e^j.
  \end{align*}
  Then the tautological property of the canonical forms allows us to set the following result.
  
  \begin{lemma}
    For
    $\beta\in\Omega^{m-1}\left(E_2\right)$,
    we have that
    \[
    D\beta=\beta^*\left(D\Theta^{m-1}\right)
    \]
    is the exterior covariant differential of the form
    $\beta$ respect to the canonical connection $\omega$.
  \end{lemma}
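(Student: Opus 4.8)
The plan is to prove the lemma by recognizing $D\Theta^{m-1}$ as a representative of the exterior covariant derivative associated with the representation of $GL(m)$ on $S^*(m)=\left(\mR^m\right)^*\odot\left(\mR^m\right)^*$, and then invoking the tautological property of the canonical form stated in Appendix \ref{sec:canonical-k-form}. Recall that the canonical $(m-1)$-form on $E_2$ is both $S^*(m)$-valued and $GL(m)$-equivariant; its defining feature is that for any section $\beta\in\Omega^{m-1}\left(E_2\right)$, the pullback $\beta^*\Theta^{m-1}$ recovers $\beta$ itself (this is exactly what ``tautological'' means here). The key observation is that the displayed combination defining $D\Theta^{m-1}$ is precisely the formula for the exterior covariant differential of an $S^*(m)$-valued form written out in the canonical basis $\left\{e^i\odot e^j\right\}$, where the connection acts through the derivative of the representation.

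First I would make explicit the induced representation of $\mathfrak{gl}(m)$ on $S^*(m)$. For the natural action of $GL(m)$ on $\left(\mR^m\right)^*$, the infinitesimal action of $A=A^i_jE^j_i\in\mathfrak{gl}(m)$ sends $e^k\mapsto -A^k_p e^p$; extending this as a derivation on the symmetric square gives, on a basis element $e^i\odot e^j$, the expression
\[
A\cdot\left(e^i\odot e^j\right)=-A^i_p\, e^p\odot e^j-A^j_p\, e^i\odot e^p.
\]
The exterior covariant differential of an $S^*(m)$-valued form $\beta=\beta_{ij}\,e^i\odot e^j$ with respect to a connection $\omega=\omega^q_p E^p_q$ is then
\[
D\beta=\left(d\beta_{ij}+\omega\cdot\beta\right)_{ij}\,e^i\odot e^j,
\]
and substituting the representation formula above produces exactly the bracketed coefficient $d\Theta_{ij}+\eta^{rk}\eta_{iq}\Theta_{rj}\wedge\omega^q_k-\Theta_{ip}\wedge\omega^p_j$ once one accounts for the index lowering conventions (the factor $\eta^{rk}\eta_{iq}$ arising from expressing the symmetric pairing via $\eta$). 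Thus $D\Theta^{m-1}$ is, by construction, the canonical-form-valued version of the covariant differential operator.

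The main step is then the pullback identity. Since $\beta:E_2\to E_2$-pullback along a section $\beta$ of $p_2$ satisfies $\beta^*\Theta^{m-1}=\beta$ by the tautological property, and since $D$ is built by applying $d$ together with wedging against the lifted connection forms $\omega^p_j$ (which pull back correctly because $\text{pr}_0$ and the equivariance established in Lemma \ref{lem:WLIsomorphism} make the connection forms on $E_2$ agree with those pulled back from $J^1\tau$), the operation $\beta\mapsto\beta^*\left(D\Theta^{m-1}\right)$ coincides term-by-term with $\beta\mapsto D\beta$. Concretely, $\beta^*\left(d\Theta_{ij}\right)=d\left(\beta^*\Theta_{ij}\right)=d\beta_{ij}$ by naturality of $d$, while $\beta^*\left(\Theta_{rj}\wedge\omega^q_k\right)=\beta_{rj}\wedge\omega^q_k$ because $\omega$ is already pulled back from the base and is unaffected by $\beta$. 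Assembling these gives $\beta^*\left(D\Theta^{m-1}\right)=D\beta$.

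The step I expect to be the main obstacle is verifying that the precise index placement in the definition of $D\Theta^{m-1}$ matches the abstract covariant derivative on $S^*(m)$ under the $\eta$-dependent identifications used throughout (the appearance of $\eta^{rk}\eta_{iq}$ rather than a bare $\omega$-term). This requires carefully tracking how the symmetric bilinear pairing is encoded and how raising and lowering with $\eta$ interacts with the representation; it is essentially bookkeeping, but it is where a sign or index error could creep in. The equivariance of the canonical form and the fact that $\omega$ is a genuine connection form on $J^1\tau$ (from the earlier cited result of \cite{springerlink:10.1007/PL00004852}) guarantee the construction is well defined and independent of trivializations, so once the index matching is confirmed the identification $D\beta=\beta^*\left(D\Theta^{m-1}\right)$ follows immediately.
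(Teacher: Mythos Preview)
Your proposal is correct and follows the same route the paper indicates: the paper gives no detailed proof of this lemma, merely stating that ``the tautological property of the canonical forms allows us to set the following result.'' Your argument --- identifying the induced $\mathfrak{gl}(m)$-action on $S^*(m)$, recognizing the bracketed expression in $D\Theta^{m-1}$ as the componentwise covariant differential, and then invoking $\beta^*\Theta^{m-1}=\beta$ together with naturality of $d$ and the fact that the connection forms $\omega^p_j$ are pulled back from $J^1\tau$ --- is exactly the unpacking the paper omits, and your caveat about the $\eta^{rk}\eta_{iq}$ bookkeeping is the only place requiring care.
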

  
  This lemma gives us a hint on the interpretation of some
  terms present in Equation \eqref{eq:FormulaFordLambda0} for $d\lambda_\cL$.
\end{remark}

\subsection{Equations of motion}
\label{sec:equations-motion}

From Theorem \ref{Thm:HamJac} we know that the equations of motion arise from 
\[
Z\lrcorner d\lambda_\cL=0
\]
for $Z\in\mathfrak{X}^{V\left(\tau_1\circ\overline{\tau}_\cL\right)}\left(W_\cL\right)$. On the other hand, Lemma \ref{lem:WLIsomorphism} allows us to conclude that
\[
V{\left(\tau_1\circ\overline{\tau}_\cL\right)}\simeq V\tau_1\oplus V\left(\tau_1\circ p_2\right);
\]
it means that a set of generators for $\mathfrak{X}^{V\left(\tau_1\circ\overline{\tau}_\cL\right)}\left(\widehat{W}_\cL\right)$ can be constructed with vector fields such as
\[
X+0,\qquad 0+Y
\]
where $X\in\mathfrak{X}^{V\tau_1}\left(J^1\tau\right),Y\in\mathfrak{X}^{Vp_2}\left(E_2\right)$. Now, from Section \ref{sec:basis-j1tau} we know that
\[
\left\{\left(E^k_l\right)_{J^1\tau},\left(\theta^i,E^j_k\right)^V:i,j,k,l=1,\cdots,m\right\}
\]
is a set of generators for the set of $\tau_1$-vertical vector fields, and any section $\beta\in\Gamma\left(E_2\right)$ gives rise to a $p_2$-vertical vector field
\[
\delta\beta\left(\alpha_{j_x^1s}\right):=\left.\frac{d}{dt}\right|_{t=0}\left[\alpha_{j_x^1s}+t\beta\left(j_x^1s\right)\right],\qquad\alpha_{j_x^1s}\in E_2
\]
on $E_2$, and the collection of this sort of vector fields is a set of generators for $p_2$-vertical vector fields.

Thus Equation \eqref{eq:FormulaFordLambda0} will yield to the equations of motion, by contracting this form with the set of vector fields
\[
A_{J^1\tau}+0,\qquad\left(\theta^i,B_{LM}\right)^V+0,\qquad 0+\delta\beta
\]
for $A,B\in\mathfrak{gl}\left(m\right)$ and $\beta\in\Gamma\left(E_2\right)$. In fact:
\begin{itemize}
\item Contraction with elements of the form $0+\delta\beta$ gives us the equations
  \[
  \eta^{ik}\delta\beta_{ij}\left(\omega_{\pf}\right)^j_k=0;
  \]
  it means that
  \begin{equation}\label{eq:MetricConditionEq}
    \left(\omega_{\pf}\right)^j_k=0,
  \end{equation}
that is, the connection must be metric.
\item Contraction with vector fields of the form $\left(\theta^r,\left(E^s_t\right)_{LM}\right)^V$ will give us
  \begin{align*}
    0&=\left[2\eta^{kp}\left(\omega_\pf\right)^i_k\wedge\theta_{il}-\left(\omega_\pf\right)^i_i\eta^{kp}\theta_{kl}+\eta^{kp}T^i\wedge\theta_{kli}+\eta^{ip}\Theta_{il}\right]\wedge\delta^s_p\delta^l_t\theta^r\\
    &=\eta^{ks}\left(T^i\wedge\theta_{kti}+\Theta_{kt}\right)\wedge\theta^r,
  \end{align*}
  where in the last step Equation \eqref{eq:MetricConditionEq} was used. It is equivalent to equation
  \begin{equation}\label{eq:EqTorsionTheta}
    T^i\wedge\theta_{kti}+\Theta_{kt}=0,
  \end{equation}
and this is turn means that
  \[
  \Theta_{kt}=0=T^i\wedge\theta_{kti},
  \]
  given the symmetry properties of $\Theta_{ij}$ and $\theta_{ijk}$. Finally, we have the following result.
  \begin{lemma}
    Let $\nu:=\nu^ie_i$ be a $\mR^m$-valued $2$-form on $J^1\tau$, horizontal respect to $\tau_1$. Suppose further that\footnote{Otherwise, $\theta_{ijk}=0$.} $m>3$ and
    \[
    \nu^i\wedge\theta_{ijk}=0
    \]
    for all $j,k=1,\cdots,m$. Then $\nu=0$.
  \end{lemma}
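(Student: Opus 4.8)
The plan is to reduce the hypothesis to a purely algebraic identity among the coefficients of $\nu$ and then extract $\nu=0$ by a trace argument. Since each $\nu^i$ is horizontal with respect to $\tau_1$, and the soldering forms $\theta^1,\dots,\theta^m$ constitute a coframe for the $\tau_1$-semibasic $1$-forms on $J^1\tau$ (they are an invertible combination of the $dx^\mu$), I would first write
\[
\nu^i=\tfrac12\,\nu^i_{jk}\,\theta^j\wedge\theta^k,\qquad \nu^i_{jk}=-\nu^i_{kj},
\]
for suitable functions $\nu^i_{jk}$ on $J^1\tau$. Proving $\nu=0$ is then the same as proving that every $\nu^i_{jk}$ vanishes.

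Next I would substitute this expression into $\nu^i\wedge\theta_{ijk}$ and evaluate the triple wedge $\theta^a\wedge\theta^b\wedge\theta_{ijk}$. The identity $\theta^l\wedge\theta_{ikp}=\delta^l_p\theta_{ik}-\delta^l_k\theta_{ip}+\delta^l_i\theta_{kp}$ recorded above, together with its lower-degree analogues $\theta^l\wedge\theta_{ik}=\delta^l_k\theta_i-\delta^l_i\theta_k$ and $\theta^l\wedge\theta_i=\delta^l_i\,\theta^1\wedge\cdots\wedge\theta^m$, lets me collapse this triple wedge to a linear combination of the $(m-1)$-forms $\theta_l$. After contracting with $\nu^i_{ab}$ and using the antisymmetry $\nu^i_{ab}=-\nu^i_{ba}$, I expect the hypothesis to take the form
\[
\nu^l_{jk}\,\theta_l-w_k\,\theta_j+w_j\,\theta_k=0\qquad(\forall\,j,k),
\]
where $w_a:=\nu^i_{ia}$ is the trace of $\nu$.

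Since $\theta^l\wedge\theta_i=\delta^l_i\,\theta^1\wedge\cdots\wedge\theta^m$ with the top form nowhere zero, the $(m-1)$-forms $\theta_1,\dots,\theta_m$ are pointwise linearly independent; wedging the displayed relation with each $\theta^l$ therefore yields the scalar equations
\[
\nu^l_{jk}=w_k\,\delta^l_j-w_j\,\delta^l_k .
\]
Setting $l=j$ and summing over $j$ gives $w_k=(m-1)w_k$, that is $(m-2)w_k=0$; this is exactly the point where the dimension hypothesis enters, since $m>3$ forces $w_k=0$ for every $k$. Feeding this back into the previous display gives $\nu^l_{jk}=0$ for all indices, hence $\nu=0$.

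The computation is essentially bookkeeping once the wedge identities are in place, so I do not anticipate a genuine obstacle; the only delicate points are getting the signs right in the double application of the contraction identity and using the antisymmetrization in $j,k$ consistently before taking the trace.
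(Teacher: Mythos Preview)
Your proof is correct and follows essentially the same route as the paper: expand $\nu^i$ in the $\theta$-coframe, apply the contraction identity $\theta^l\wedge\theta_{ijk}=\delta^l_k\theta_{ij}-\delta^l_j\theta_{ik}+\delta^l_i\theta_{jk}$ twice to reduce the hypothesis to a linear relation among the $(m-1)$-forms $\theta_l$, then take a trace to force the contraction $w_k=\nu^i{}_{ik}$ to vanish (using $m>2$), and finally read off $\nu^i{}_{jk}=0$. The only cosmetic difference is that the paper traces by wedging the intermediate relation with $\theta^k$ and summing, whereas you first isolate $\nu^l{}_{jk}=w_k\delta^l_j-w_j\delta^l_k$ from the linear independence of the $\theta_l$ and then contract $l=j$; the two computations are equivalent.
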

  \begin{proof}
    Let us write $\nu$ in the following form
    \[
    \nu=\nu^{i}_{jk}\theta^j\wedge\theta^k\otimes e_i;
    \]
    then
    \begin{equation}\label{eq:ConditionCoordinatesNu}
      0=\nu^i\wedge\theta_{ikl}=2\left(\nu^{i}_{kl}\theta_i+\nu^{i}_{li}\theta_k-\nu^{i}_{ki}\theta_l\right).
    \end{equation}
  From this equation it results that
    \[
    \theta^k\wedge\nu^i\wedge\theta_{ikl}=2\left(\nu^{k}_{kl}+m\nu^{i}_{li}-\nu^{i}_{li}\right)\sigma_0=2\left(m-2\right)\nu^i_{li}\sigma_0,
    \]
    so, because $m>2$, we have that $\nu^i_{li}=0$, and plugging it back in Equation \eqref{eq:ConditionCoordinatesNu}, we obtain $\nu^i_{jk}=0$, as required.
  \end{proof}
  As a consequence, the equations of motion associated to these vertical vector fields will be
  \begin{equation}\label{eq:TorsionMomentaVanish}
    T^i=0=\Theta_{kl}.
  \end{equation}
As we promised, torsion constraint is recovered from a variational problem involving only the metricity condition. Moreover, the fact that the multimomenta $\Theta_{ij}$ vanish means in particular that the Lepage-equivalent problem constructed in Section \ref{sec:griff-vari-probl} for the Griffiths variational problem
  \[
  \left(J^1\tau,\lambda_{PG},\cI_{PG}\right)
  \]
  is necessarily contravariant.
\item Let us consider now contractions of the differential $d\lambda_\cL$ with vector fields of the form $A_{J^1\tau}+0$, for $A\in\kf$. We have
  \[
  \left(-1\right)^m\eta^{ik}\left(\eta^{rq}\eta_{li}\left.\Theta_{rj}\right|_{\beta}A^l_q-\left.\Theta_{ip}\right|_{\beta}A^p_j\right)\wedge\left(\omega_\pf\right)^j_k=0
  \]
  that does not give rise to additional restrictions. It was expected, because $\cL_{PG}$ and the forms that define the metricity condition are invariant for the $\kf$-action.
\item Finally, we have to consider contractions with elements $A_{J^1\tau}+0$ for $A\in\pf$; we obtain in this case
  \begin{multline*}
    0=\left[2\eta^{kp}A_k^i\theta_{il}-\left(\mathop{\text{tr}}{A}\right)\eta^{kp}\theta_{kl}\right]\wedge\Omega^l_p+\\
    +\left(-1\right)^{m-1}\eta^{ik}\left[d\Theta_{ij}+\eta^{rq}\eta_{li}\left.\Theta_{rj}\right|_{\beta}\wedge\left(\omega_\kf\right)^l_q-\left.\Theta_{ip}\right|_{\beta}\wedge\left(\omega_\kf\right)^p_j\right]A^j_k,
  \end{multline*}
  namely
  \begin{equation}\label{eq:EinsteinEqFirst}
    A_k^i\left(\eta^{kp}\theta_{il}-\frac{1}{2}\delta_i^k\eta^{qp}\theta_{ql}\right)\wedge\Omega^l_p=0
  \end{equation}
for every $A\in\pf$. Equivalently, it will have that
\[
\theta_{il}\wedge\Omega^l_k+\theta_{kl}\wedge\Omega^l_i-\eta_{ik}\left(\eta^{pq}\theta_{ql}\wedge\Omega^l_p\right)=0.
\]

Let us show that this system is equivalent to Einstein's equations in vacuum; for this, it is necessary to consider that, on a solution, we can write down
  \[
  \Omega^l_p=\Omega^l_{pab}\theta^a\wedge\theta^b
  \]
  for some functions $\Omega^l_{pab}$ such that $\Omega^l_{pab}+\Omega^l_{pba}=0$. Then we have that
  \[
  \left(\eta^{kp}\theta_{il}-\frac{1}{2}\delta_i^k\eta^{qp}\theta_{ql}\right)\wedge\Omega^l_p=-2\eta^{kp}\left(\Omega_{pi}-\frac{1}{2}\eta_{pi}\Omega\right)\sigma_0
  \]
  where the standard notation $\Omega_{ij}:=\Omega^l_{ilj},\Omega:=\eta^{ij}\Omega_{ij}$ was employed. Now, from Equation \eqref{eq:EinsteinEqFirst} and taking into account that $\Omega_k^l$ is the Riemann tensor of a Levi-Civita connection, it results that $\Omega_{ij}=\Omega_{ji}$, we must conclude that
  \[
  \Omega_{ij}-\frac{1}{2}\eta_{ij}\Omega=0,
  \]
  as required.
\end{itemize}

\subsection{Constraint algorithm}
\label{sec:constraint-algorithm}

Successful field theory formulations should provide not only the equations of motion, but also the set of conditions ensuring that these equations are involutive. The additional procedures intended to extract this set of conditions (``constraints'', as they are usually called) are unsurprisingly dubbed ``constraint algorithms'' \cite{de1996geometrical,zbMATH02233555}. It is the purpose, for example, of Proposition $2$ in \cite{Gaset:2017ahy} in the realm of the unified formalism for Einstein-Hilbert gravity. The constraint algorithm we will follow here is the one formulated in \cite{2013arXiv1309.4080C}, where it was referred to as \emph{Gotay algorithm} (see also \cite{Estabrook:2014hfa}). In short, this is essentially Cartan algorithm \cite{Hartley:1997:IAN:2274723.2275278,CartanBeginners} for the first prolongation of the EDS $\cJ$ shown below (see Equation \eqref{eq:EinsteinPalatiniEDS}); the set of constraints are thus obtained by annihilating the torsion of the sucessive prolongations. It means in particular that if the first prolongation of $\cJ$ is involutive, no further constraints will arise, and it will become involutive.

As we mentioned above, the equations of motion for the premultisymplectic system are represented by the exterior differential system
\begin{equation}\label{eq:EinsteinPalatiniEDS}
  \cJ:=\left<\Theta,\omega_\pf,T,\Omega^k_l\wedge\theta^l,\theta_{il}\wedge\Omega^l_k+\theta_{kl}\wedge\Omega^l_i-\eta_{ik}\left(\eta^{pq}\theta_{ql}\wedge\Omega^l_p\right)\right>.
\end{equation}
The form $\Omega^k_l\wedge\theta^l$ comes from the first Bianchi identity; also, from constraint $\omega_\pf=0$ it results that $\Omega$ takes values in $\kf$.

The involutivity of this system would imply that every $m$-plane $Z$ on $W_\cL$ annihilating $\cJ$ and such that $Z\lrcorner\sigma_0\not=0$ is the tangent space for some solution of the field equations, and so no further constraint would appear. We know that Einstein's equations, at least in dimension $4$, are involutive (see \cite{doi:10.1063/1.3305321,PhysRevD.71.044004} and references therein). Although it does not necessarily imply that the EDS $\cJ$ is involutive, it will imply that its first prolongation $\cJ^{\left(1\right)}$ does. In order to prove it, we need to recall Corollary \ref{cor:FirstBianchiConsequence}, that implies
\[
\Omega^l_i\wedge\theta_{lp}-\Omega^l_p\wedge\theta_{li}\equiv0\mod{T}
\]
for the $\kf$-valued form $\Omega$. Thus
\begin{equation}\label{eq:SameEquationsLagHam}
  \theta_{il}\wedge\Omega^l_k+\theta_{kl}\wedge\Omega^l_i-\eta_{ik}\left(\eta^{pq}\theta_{ql}\wedge\Omega^l_p\right)\equiv\frac{1}{2}\eta_{ip}\theta^p\wedge\eta^{qs}\theta_{kqr}\wedge\Omega^r_s\mod{T}.
\end{equation}
Now, the EDS considered in \cite{PhysRevD.71.044004} becomes in our notation
\[
\cI_{\text{PG}}=\left<\omega_\pf,T,\Omega^k_l\wedge\theta^l,\eta^{qk}\theta_{iql}\wedge\Omega^l_k\right>,
\]
meaning in particular that
\[
V_4\left(\cI_{\text{PG}}\right)=V_4\left(\cJ\right)
\]
as subsets of $G_4\left(TW_\cL,\nu\right)$. Therefore, if we use the constraint algorithm as it is presented in \cite{2013arXiv1309.4080C}, we have no additional constraints, and the constraint algorithm must stops.

\begin{remark}
  Equation\eqref{eq:SameEquationsLagHam} is another proof fo the fact that the equations of motion obtained in this article are equivalent to those in \cite{capriotti14:_differ_palat}.
\end{remark}

\section{Unified formalism for unimodular gravity}
\label{sec:mult-unim-grav}

It is interesting to note that the same scheme is useful when dealing with unimodular gravity \cite{doi:10.1119/1.1986321,PhysRevD.92.024036}. Recall that in this formulation, the space-time is endowed with a volume form, dubbed \emph{fundamental form}. In this case, and using the fact that in this formulation of relativity the fundamental volume form is conserved, we must consider a reduction of $LM$ to a subbundle $UM$ with structure group $SL\left(m\right)$ \cite{KN1}. This reduction consist into the elements of $LM$ which are constant when contracted with the fundamental form, and its structure group becomes $SL\left(m\right)$. Because $\mathfrak{sl}\left(m\right)$ consists into the traceless elements of $\mathfrak{gl}\left(m\right)$, the decomposition carried out in Section \ref{sec:cart-decomp-forms} induces a decomposition
\[
\mathfrak{sl}\left(m\right)=\kf\oplus\pf',
\]
where $\pf'$ is the set of traceless elements in $\pf$. Thus we have a embedding
\[
i_{UM}:UM\hookrightarrow LM
\]
that can be lifted to an embedding $j^1i_{UM}:J^1\tau_{UM}\rightarrow J^1\tau$, where $\tau_{UM}:UM\rightarrow M$ is the restriction of $\tau$ to $UM$. Using this map it is immediate to pullback the canonical form and the canonical torsion from $J^1\tau$ to $J^1\tau_{UM}$, and thus we can formulate a Griffiths variational problem for unimodular Palatini gravity. This variational problem should be compared with the first order variational problem for unimodular gravity as it is found in \cite{PhysRevD.92.024036}: In our case the degrees of freedom we are working with admit in principle arbitrary connections, not restricted to be Lorentz, and this restriction is implemented \emph{a posteriori} through the metricity condition.

Moreover, following the constructions above, we can find a unified formalism for unimodular gravity and deduce the same equations of motion when contracting the differential of the Lagrangian form for the Lepage-equivalent problem with the vertical vector fields $0+\delta\beta,\left(\theta^r,A_{UM}\right)^V+0$ and $A_{J^1\tau_{UM}}+0$ for $A\in\kf$. Now, besides the metricity condition, we have the \emph{equiaffinity condition} \cite{9780521441773}
\[
\omega^s_s=0
\]
on the connection.

The main changes regarding these equations of motion are twofold:
\begin{enumerate}
\item Equation \eqref{eq:EqTorsionTheta} is replaced by
  \[
  T^i\wedge\theta_{kli}+\Theta_{kl}=\mu\eta_{kl}
  \]
  for some unknown $m-1$-form $\mu$. From here it results again $T^i=0$ and
  \[
  \Theta_{kl}=\eta_{kl}\mu.
  \]
\item Although the form $\Theta$ is no longer zero, we will also end up with condition \eqref{eq:EinsteinEqFirst} for this new problem, because this condition concerns a \emph{traceless} symmetric matrix $A$ in unimodular case. So the equation of motion will involve a unknown function $\lambda$, due to the fact that the trace of the element
  \[
  \eta^{kp}\theta_{il}\wedge\Omega^l_p
  \]
  is no longer determined. Thus the associated equation of motion becomes
  \[
  \Omega_{ij}=\lambda\eta_{ij}.
  \]
  These equations are equivalent to the equation of unimodular gravity
  that we met in literature (see for example
  \cite{PhysRevD.92.024036}); in order to see this, take traces in
  both sides, so that
  \[
  \lambda=\frac{1}{4}\Omega
  \]
  and thus
  \[
  \Omega_{ij}-\frac{1}{4}\Omega\eta_{ij}=0.
  \]
\end{enumerate}

\appendix

\section{Forms on $J^1\tau$}
\label{sec:forms-j1tau}

We will establish in this appendix some basic properties regarding forms on the jet space $J^1\tau$ and some associated spaces. As a preliminary step, let us introduce the set of forms
\begin{align*}
\theta_{i_1\cdots i_p}&:=\frac{1}{\left(n-p\right)!}\epsilon_{i_1\cdots i_pi_{p+1}\cdots i_n}\theta^{i_{p+1}}\wedge\cdots\wedge\theta^{i_n}\cr
&=X_{i_p}\lrcorner\cdots\lrcorner X_{i_1}\lrcorner\sigma_0
\end{align*}
associated to the canonical form $\theta$ on $J^1\tau$; here $\sigma_0:=\theta^1\wedge\cdots\wedge\theta^m$. 

\subsection{Some identities involving canonical forms}\label{App:SomeIdent}

Let us use the algebraic properties of the vielbeins in order to settle the identities used in the present article. Recall that for $\alpha\in\Omega^p\left(X\right)$ we have that
\[
X\lrcorner\left(\alpha\wedge\beta\right)=\left(X\lrcorner\alpha\right)\wedge\beta+\left(-1\right)^p\alpha\wedge\left(X\lrcorner\beta\right).
\]
Then for
\[
\theta_{ij}=X_j\lrcorner X_i\lrcorner\sigma_0
\]
we obtain
\begin{align*}
  \theta^m\wedge\theta_{ij}&=\theta^m\wedge\left(X_j\lrcorner X_i\lrcorner\sigma_0\right)\\
  &=-X_j\lrcorner\left(\theta^m\wedge\left(X_i\lrcorner\sigma_0\right)\right)+\left(X_j\lrcorner\theta^m\right)\wedge\left(X_i\lrcorner\sigma_0\right)\\
  &=-X_j\lrcorner\left(-X_i\lrcorner\left(\theta^m\wedge\sigma_0\right)+\left(X_i\lrcorner\theta^m\right)\sigma_0\right)+\delta^m_j\theta_i\\
  &=\delta^m_j\theta_i-\delta^m_i\theta_j.
\end{align*}
Additionally
\begin{align*}
  \theta^m\wedge\theta_{ijk}&=\theta^m\wedge\left(X_k\lrcorner X_j\lrcorner X_i\lrcorner\sigma_0\right)\\
  &=-X_k\lrcorner\left(\theta^m\wedge\left(X_j\lrcorner X_i\lrcorner\sigma_0\right)\right)+\left(X_k\lrcorner\theta^m\right)\left(X_j\lrcorner X_i\lrcorner\sigma_0\right)\\
  &=-X_k\lrcorner\left(\delta^m_j\theta_i-\delta^m_i\theta_j\right)+\delta^m_k\theta_{ij}\\
  &=\delta^m_k\theta_{ij}-\delta^m_j\theta_{ik}+\delta^m_i\theta_{jk}
\end{align*}
and in general
\[
\theta^m\wedge\theta_{i_1\cdots i_p}=\sum_{k=1}^p\left(-1\right)^{k+1}\delta^m_{i_k}\theta_{i_1\cdots\widehat{i_k}\cdots i_p}.
\]

\subsection{Cartan decomposition and forms}
\label{sec:cart-decomp-forms}

The decomposition of $\mathfrak{gl}\left(m\right)$ associated to $\eta$ (see Section \ref{sec:geom-prel}) has useful properties. We will mention some of them in the present section of the Appendix.

\begin{lemma}\label{lem:SymAntiSymForms}
  $\gamma\in\kf$ (resp. $\gamma\in\pf$) if and only if $\gamma\eta:=\gamma_p^i\eta^{pj}E_{ij}$ (resp. $\eta\gamma:=\eta_{ip}\gamma_j^pE^{ij}$) takes values in the set of antisymmetric (resp. symmetric) matrices.
\end{lemma}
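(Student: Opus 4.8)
The plan is to reduce the statement to two short matrix identities, after first recording the membership conditions in their bilinear-form guise. Throughout I regard $\gamma$ as the matrix $[\gamma^i_j]$ and $\eta$ as the symmetric, involutive matrix of Section \ref{sec:geom-prel}, so that $\eta^T=\eta$ and $\eta^2=I$, and the involution reads $f(\gamma)=-\eta\gamma^T\eta$. First I would unwind the eigenvalue conditions: $\gamma\in\kf$ means $f(\gamma)=\gamma$, i.e.\ $\eta\gamma^T\eta=-\gamma$, which after left-multiplying by $\eta$ (and using $\eta^2=I$) is equivalent to $\gamma^T\eta+\eta\gamma=0$; likewise $\gamma\in\pf$ means $f(\gamma)=-\gamma$, equivalently $\gamma^T\eta-\eta\gamma=0$. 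For a $\mathfrak{gl}(m)$-valued form the argument below applies coefficientwise, so it suffices to treat a single matrix $\gamma$.

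For the $\pf$ case the computation is immediate. Writing $N:=\eta\gamma$ for the matrix with entries $(\eta\gamma)_{ij}=\eta_{ip}\gamma^p_j$, symmetry of $\eta$ gives $N^T=\gamma^T\eta^T=\gamma^T\eta$. Hence $N$ is symmetric exactly when $\eta\gamma=\gamma^T\eta$, which is precisely the relation characterizing $\pf$; this settles the equivalence for $\pf$.

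For the $\kf$ case I would work with $M:=\gamma\eta$, the matrix with entries $(\gamma\eta)^{ij}=\gamma^i_p\eta^{pj}$ (recall that the $\eta^{pj}$ are the components of $\eta^{-1}=\eta$). Again by symmetry of $\eta$ one has $M^T=\eta^T\gamma^T=\eta\gamma^T$, so $M$ is antisymmetric iff $\gamma\eta+\eta\gamma^T=0$. Right-multiplying this relation by $\eta$ and using $\eta^2=I$ turns it into $\gamma+\eta\gamma^T\eta=0$, i.e.\ $f(\gamma)=\gamma$, which is membership in $\kf$. This completes both directions.

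There is no serious obstacle here: the entire content is careful bookkeeping of index positions and transposes, in particular keeping straight which $\eta$ is the metric ($\eta_{ij}$, lowering the upper index of $\gamma$) and which is its inverse ($\eta^{ij}$, raising the lower index), and correctly passing between the eigenspace description $f(\gamma)=\pm\gamma$ and the bilinear-form relations $\gamma^T\eta\pm\eta\gamma=0$. I would be mildly careful to note that both equivalences use only that $\eta$ is symmetric and invertible, so the result is manifestly independent of the chosen signature, in accordance with Remark \ref{rem:Signature-independence}.
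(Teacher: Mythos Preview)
Your proof is correct. The paper states this lemma without proof (it is left as an elementary verification in Appendix~\ref{sec:cart-decomp-forms}), and your argument is precisely the straightforward matrix computation one would expect: rewriting the eigenspace conditions $f(\gamma)=\pm\gamma$ as $\gamma^T\eta\pm\eta\gamma=0$ and then reading off the (anti)symmetry of $\gamma\eta$ and $\eta\gamma$.

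One small quibble: your concluding remark that ``both equivalences use only that $\eta$ is symmetric and invertible'' slightly overstates what you actually did---in the $\kf$ case you explicitly invoked $\eta^2=I$ (equivalently $\eta^{-1}=\eta$) both when identifying $\eta^{pj}$ with the entries of $\eta$ itself and when right-multiplying by $\eta$. This does not affect the validity of the lemma or its signature-independence (for any signature $(p,q)$ the matrix $\eta$ in the paper is diagonal with $\pm1$ entries, hence involutive), but if you wanted the cleaner statement you would right-multiply by $\eta^{-1}$ and keep track of the distinction between $\eta_{ij}$ and $\eta^{ij}$ throughout.
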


From this lemma we can deduce the following useful fact.

\begin{proposition}\label{prop:TraceWedge}
  Let $\gamma\in\Omega^p\left(N,\kf\right),\rho\in\Omega^q\left(N,\pf\right)$ be a pair of forms on $N$. Then
  \[
  \mathop{\text{Tr}}{\left(\gamma\wedge\rho\right)}=\gamma^k_p\wedge\rho^p_k=0.
  \]
\end{proposition}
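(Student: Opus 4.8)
The plan is to reduce the claim to a purely algebraic symmetry statement by exploiting Lemma~\ref{lem:SymAntiSymForms}, which characterizes membership in $\kf$ and $\pf$ in terms of the symmetry type of the matrices obtained after multiplying by $\eta$. The quantity $\mathop{\text{Tr}}{\left(\gamma\wedge\rho\right)}=\gamma^k_p\wedge\rho^p_k$ is a scalar-valued $\left(p+q\right)$-form built by contracting the two matrix indices against each other. First I would rewrite this contraction so that the metric $\eta$ appears explicitly, inserting $\eta^{ab}\eta_{bc}=\delta^a_c$ to split each factor into the combinations controlled by the lemma: writing $\gamma^k_p=\eta^{kr}\left(\gamma\eta^{-1}\right)$-type expressions is awkward, so instead I would multiply and divide by $\eta$ to produce $\gamma^k_p\wedge\rho^p_k=\eta_{pa}\left(\gamma\eta\right)^{ka}\,\eta^{pb}\left(\eta\rho\right)_{bk}$ up to index bookkeeping, so that the antisymmetric matrix $S:=\gamma\eta$ (entries $S^{ij}=\gamma^i_p\eta^{pj}$) and the symmetric matrix $T:=\eta\rho$ (entries $T_{ij}=\eta_{ip}\rho^p_j$) enter directly.

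The key step is then to express $\gamma^k_p\wedge\rho^p_k$ entirely in terms of $S$ and $T$. Using $\gamma^i_p=S^{ij}\eta_{jp}$ and $\rho^p_k=\eta^{pq}T_{qk}$, I would substitute to obtain
\[
\gamma^k_p\wedge\rho^p_k=S^{kj}\eta_{jp}\eta^{pq}\wedge T_{qk}=S^{kj}\wedge T_{jk},
\]
where $\eta_{jp}\eta^{pq}=\delta^q_j$ collapses the metric factors. Now $S^{kj}$ is antisymmetric in $\left(k,j\right)$ and $T_{jk}$ is symmetric in $\left(j,k\right)$, both being honest matrix entries (scalar-valued $p$- and $q$-forms respectively), so the repeated indices $j,k$ are summed against a product that is antisymmetric in one factor and symmetric in the other. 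The remaining task is to handle the wedge sign correctly: exchanging $j\leftrightarrow k$ in the summed expression gives $S^{jk}\wedge T_{kj}=\left(-S^{kj}\right)\wedge T_{jk}\cdot\left(-1\right)^{pq}$, where the sign $\left(-1\right)^{pq}$ comes from commuting the scalar-valued forms past one another.

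I expect the main obstacle to be precisely this sign bookkeeping together with the symmetry argument, since the vanishing only works cleanly when the form degrees are such that the graded-commutativity sign cooperates with the $\left(\text{antisymmetric}\right)\times\left(\text{symmetric}\right)$ cancellation; in the relevant application $\gamma$ and $\rho$ are both $1$-forms (so $\left(-1\right)^{pq}=-1$), and then relabeling indices yields $S^{kj}\wedge T_{jk}=-S^{kj}\wedge T_{jk}$, forcing the sum to vanish. I would therefore either specialize to the degrees actually needed or, more carefully, verify that the symmetry of $T$, the antisymmetry of $S$, and the wedge-commutation sign combine to send the double sum to its own negative. Once that self-negation is established, $\mathop{\text{Tr}}{\left(\gamma\wedge\rho\right)}=0$ follows immediately, completing the proof.
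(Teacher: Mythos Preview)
Your approach is essentially identical to the paper's: both insert $\eta\eta^{-1}$ to rewrite the trace as a contraction $S^{kj}\wedge T_{jk}$ with $S=\gamma\eta$ antisymmetric and $T=\eta\rho$ symmetric, via Lemma~\ref{lem:SymAntiSymForms}, and then conclude by the standard antisymmetric--times--symmetric cancellation.

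Your only misstep is the spurious $\left(-1\right)^{pq}$ factor. When you relabel the dummy indices $j\leftrightarrow k$ in the double sum $\sum_{j,k}S^{kj}\wedge T_{jk}$, you are \emph{not} commuting the wedge factors past one another; you are merely renaming summation variables, which costs nothing. Thus
\[
S^{kj}\wedge T_{jk}=S^{jk}\wedge T_{kj}=\left(-S^{kj}\right)\wedge T_{jk}=-\,S^{kj}\wedge T_{jk},
\]
and the vanishing follows for \emph{arbitrary} degrees $p,q$, with no need to specialize to $1$-forms or to invoke graded commutativity at all. The paper's proof simply asserts $\mu_{kl}\wedge\nu^{kl}=0$ at this step without further comment, relying on exactly this index--relabeling argument.
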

\begin{proof}
  In fact,
  \begin{align*}
    \mathop{\text{Tr}}{\left(\gamma\wedge\rho\right)}&=\gamma^k_p\wedge\rho^p_k\\
    &=\eta_{kr}\gamma^r_p\wedge\eta^{ks}\rho_s^p,
  \end{align*}
  so we have that
  \[
  \mathop{\text{Tr}}{\left(\gamma\wedge\rho\right)}=\mu_{kl}\wedge\nu^{kl}
  \]
  where by Lemma \ref{lem:SymAntiSymForms}, $\mu:=\mu_{ij}E^{ij}$ takes values in the set of antisymmetric matrices and $\nu:=\nu^{ij}E_{ij}$ takes values in the set of symmetric matrices; so $\mathop{\text{Tr}}{\left(\gamma\wedge\rho\right)}=\mu_{kl}\wedge\nu^{kl}=0$, as required.
\end{proof}

\begin{proposition}\label{prop:DecompSquare}
  Let $\omega\in\Omega^n\left(N,\mathfrak{gl}\left(m\right)\right)$ be a $\mathfrak{gl}\left(m\right)$-valued $n$-form on $N$. Then
  \begin{align*}
    \left[\left(\omega\wedge\omega\right)_\kf\right]^i_j&=
             \begin{cases}
               \left(\omega_\pf\right)^i_p\wedge\left(\omega_\pf\right)^p_j+\left(\omega_\kf\right)^i_p\wedge\left(\omega_\kf\right)^p_j&n\text{ odd,}\cr
               \left(\omega_\pf\right)^i_p\wedge\left(\omega_\kf\right)^p_j+\left(\omega_\kf\right)^i_p\wedge\left(\omega_\pf\right)^p_j&n\text{ even,}
             \end{cases}\\
    \left[\left(\omega\wedge\omega\right)_\pf\right]^i_j&=
             \begin{cases}
               \left(\omega_\pf\right)^i_p\wedge\left(\omega_\kf\right)^p_j+\left(\omega_\kf\right)^i_p\wedge\left(\omega_\pf\right)^p_j&n\text{ odd,}\cr
               \left(\omega_\pf\right)^i_p\wedge\left(\omega_\pf\right)^p_j+\left(\omega_\kf\right)^i_p\wedge\left(\omega_\kf\right)^p_j&n\text{ even.}
             \end{cases}
  \end{align*}
\end{proposition}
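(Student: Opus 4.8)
The plan is to compute both sides of each identity directly by expanding the wedge square in terms of the Cartan decomposition $\omega=\omega_\kf+\omega_\pf$ and then projecting the result onto $\kf$ and $\pf$ using the projectors $\pi_\kf,\pi_\pf$ from Section \ref{sec:geom-prel}. The fundamental input is the behaviour of the bracket-type products of the subspaces $\kf$ and $\pf$ under the wedge product. Since $\mathfrak{gl}\left(m\right)=\kf\oplus\pf$ is a Cartan decomposition, the underlying matrix products satisfy the inclusions $\kf\cdot\kf\subset\kf$, $\kf\cdot\pf\subset\pf$, $\pf\cdot\kf\subset\pf$ and $\pf\cdot\pf\subset\kf$; these are exactly the relations that will route each term of the expansion into the correct eigenspace of $f$.

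First I would write out the full expansion
\[
\left(\omega\wedge\omega\right)^i_j=\left(\omega_\kf\right)^i_p\wedge\left(\omega_\kf\right)^p_j+\left(\omega_\kf\right)^i_p\wedge\left(\omega_\pf\right)^p_j+\left(\omega_\pf\right)^i_p\wedge\left(\omega_\kf\right)^p_j+\left(\omega_\pf\right)^i_p\wedge\left(\omega_\pf\right)^p_j.
\]
The key point is that, for each of the four terms, the matrix-valued coefficient lies in a definite eigenspace of $f$ (by the inclusion relations above), while the scalar wedge of the form parts contributes a sign $\left(-1\right)^{n^2}=\left(-1\right)^n$ coming from commuting the two $n$-forms. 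I would track this parity carefully: the terms of mixed type $\kf\pf$ and $\pf\kf$ take values in $\pf$, whereas the homogeneous terms $\kf\kf$ and $\pf\pf$ take values in $\kf$. Applying $\pi_\kf$ kills the $\pf$-valued contributions and leaves the $\kf$-valued ones fixed, and vice versa for $\pi_\pf$; the sign $\left(-1\right)^n$ then determines whether the homogeneous pair or the mixed pair survives in each eigenspace, producing the dichotomy between $n$ odd and $n$ even.

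The cleanest way to make the parity bookkeeping rigorous is to use the identity $f\left(\alpha\wedge\beta\right)=\left(-1\right)^n f(\beta)\wedge f(\alpha)$ for $\alpha,\beta$ of degree $n$, where $f$ acts on the matrix values and the reordering of forms supplies the sign, combined with the fact that $f=+1$ on $\kf$ and $f=-1$ on $\pf$. From this one reads off directly that $f$ acts on the $\kf\kf$ and $\pf\pf$ terms as multiplication by $\left(-1\right)^n$ and on the mixed terms as multiplication by $-\left(-1\right)^n=\left(-1\right)^{n+1}$; the two cases $n$ even and $n$ odd then immediately sort the four terms into the $+1$ and $-1$ eigenspaces as claimed in both formulas.

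The main obstacle I anticipate is purely the sign discipline: one must be consistent about whether the $\left(-1\right)^n$ arises from the odd/even reordering of the two $n$-forms and avoid conflating it with the signs already absorbed into the definitions of the projectors and of $f$. Once the transformation rule $f\left(\alpha\wedge\beta\right)=\left(-1\right)^n f(\beta)\wedge f(\alpha)$ is established and checked against a degree-one and a degree-two example, the two stated formulas follow by inspection, with no genuinely hard estimate or structural argument remaining.
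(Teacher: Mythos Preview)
Your overall strategy---expand $\omega\wedge\omega$ into the four pieces and sort them by their behaviour under the involution $f$---is exactly right and is essentially what the paper does. But two of your supporting claims are false and, as stated, would produce the wrong answer.

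First, the inclusions ``$\kf\cdot\kf\subset\kf$, $\kf\cdot\pf\subset\pf$, $\pf\cdot\kf\subset\pf$, $\pf\cdot\pf\subset\kf$'' do \emph{not} hold for the associative matrix product; they hold only for Lie brackets. Indeed, for $A,B\in\kf$ one has $f(AB)=-\eta(AB)^T\eta=-\eta B^TA^T\eta=f(B)\cdot(-1)\cdot f(A)\cdot(-1)^{-1}$, i.e.\ $f(AB)=-f(B)f(A)=-BA$, so $AB\notin\kf$ in general. Second, your key identity has a sign error: the correct formula is
\[
f\left(\alpha\wedge\beta\right)=\left(-1\right)^{n+1}\,f(\beta)\wedge f(\alpha),
\]
the extra minus coming from $f(MN)=-f(N)f(M)$ at the matrix level before the $(-1)^{n^2}=(-1)^n$ from swapping the scalar $n$-forms. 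With your sign you would conclude that the homogeneous block $\omega_\kf\wedge\omega_\kf+\omega_\pf\wedge\omega_\pf$ has $f$-eigenvalue $(-1)^n$, hence lies in $\pf$ for $n$ odd---the opposite of what the proposition asserts. With the corrected sign the homogeneous block has eigenvalue $(-1)^{n+1}$ and the mixed block $\omega_\kf\wedge\omega_\pf+\omega_\pf\wedge\omega_\kf$ has eigenvalue $(-1)^n$ (note also that the two mixed terms are \emph{swapped} by $f$, not individually fixed; only their sum is an eigenvector), and then the statement follows. The paper avoids this pitfall by computing $\eta_{jp}\eta^{iq}\omega^p_r\wedge\omega^r_q$ directly in components and then forming $\tfrac12$ of the sum and difference with $\omega^i_k\wedge\omega^k_j$; this is the same idea but makes the sign bookkeeping explicit.
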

\begin{proof}
  We have that
  \begin{align*}
    \eta_{jp}\eta^{iq}\omega^p_r\wedge\omega^r_q&=\eta_{jp}\omega^p_r\eta^{rk}\wedge\eta_{ks}\omega^s_q\eta^{iq}\\
    &=\left[\left(\omega_\pf\right)^k_j-\left(\omega_\kf\right)^k_j\right]\wedge\left[\left(\omega_\pf\right)^i_k-\left(\omega_\kf\right)^i_k\right]\\
    &=\left(-1\right)^n\left[\left(\omega_\kf\right)^i_k\wedge\left(\omega_\pf\right)^k_j+\left(\omega_\pf\right)^i_k\wedge\left(\omega_\kf\right)^k_j-\left(\omega_\pf\right)^i_k\wedge\left(\omega_\pf\right)^k_j-\left(\omega_\kf\right)^i_k\wedge\left(\omega_\kf\right)^k_j\right],
  \end{align*}
  and also
  \begin{align*}
  \omega^i_k\wedge\omega^k_j&=\left(\omega_\kf\right)^i_k\wedge\left(\omega_\pf\right)^k_j+\left(\omega_\pf\right)^i_k\wedge\left(\omega_\kf\right)^k_j+\left(\omega_\pf\right)^i_k\wedge\left(\omega_\pf\right)^k_j+\left(\omega_\kf\right)^i_k\wedge\left(\omega_\kf\right)^k_j.
  \end{align*}
  Therefore, we obtain
  \begin{align*}
    \left[\left(\omega\wedge\omega\right)_\kf\right]^i_j&=\frac{1}{2}\left(\omega^i_p\wedge\omega^p_j-\eta_{jp}\eta^{iq}\omega^p_r\wedge\omega^r_q\right)\\
    &=\left(\omega_\pf\right)^i_k\wedge\left(\omega_\pf\right)^k_j+\left(\omega_\kf\right)^i_k\wedge\left(\omega_\kf\right)^k_j
  \end{align*}
  for the $\kf$-projection, and
  \begin{align*}
    \left[\left(\omega\wedge\omega\right)_\pf\right]^i_j&=\frac{1}{2}\left(\omega^i_p\wedge\omega^p_j+\eta_{jp}\eta^{iq}\omega^p_r\wedge\omega^r_q\right)\\
    &=\left(\omega_\kf\right)^i_k\wedge\left(\omega_\pf\right)^k_j+\left(\omega_\pf\right)^i_k\wedge\left(\omega_\kf\right)^k_j,
  \end{align*}
  for the $\pf$-projection in the $n$ odd case, and
  \begin{align*}
    \left[\left(\omega\wedge\omega\right)_\kf\right]^i_j&=\frac{1}{2}\left(\omega^i_p\wedge\omega^p_j-\eta_{jp}\eta^{iq}\omega^p_r\wedge\omega^r_q\right)\\
    &=\left(\omega_\kf\right)^i_k\wedge\left(\omega_\pf\right)^k_j+\left(\omega_\pf\right)^i_k\wedge\left(\omega_\kf\right)^k_j\\
    \left[\left(\omega\wedge\omega\right)_\pf\right]^i_j&=\frac{1}{2}\left(\omega^i_p\wedge\omega^p_j+\eta_{jp}\eta^{iq}\omega^p_r\wedge\omega^r_q\right)\\
    &=\left(\omega_\pf\right)^i_k\wedge\left(\omega_\pf\right)^k_j+\left(\omega_\kf\right)^i_k\wedge\left(\omega_\kf\right)^k_j
  \end{align*}
  when $n$ is even, as required.
\end{proof}

\begin{remark}
  As we mention in the Remark \ref{rem:Signature-independence}, there is nothing special in the results of this section regarding the chosen signature. Everything could be proved with a more general signature $\left(p,q\right)$, for  $p+q=m$.
\end{remark}

\subsection{The canonical $k$-form on a bundle of forms}
\label{sec:canonical-k-form}

Let us study in this section the behaviour of a canonical form on a principal bundle respect to the lifted action. Namely, let $\pi:P\rightarrow N$ be a principal bundle with structure group $G$, $\left(V,\rho\right)$ a $G$-representation, and suppose further that there exists a pair of surjective submersions $q:P\rightarrow M,p:N\rightarrow M$, such that the following diagram is commutative
\begin{center}
  \begin{tikzcd}[column sep=1cm, row sep=1.3cm]
    P
    \arrow{rr}{\pi}
    \arrow[swap]{dr}{q}
    &
    &
    N
    \arrow{dl}{p}
    \\
    &
    M
    &
  \end{tikzcd}
\end{center}

Consider now the bundle
\[
\overline{\tau}^k_{n,q}:\wedge^k_{n,q}\left(T^*P\right)\otimes V\rightarrow P
\]
of $V$-valued $k$-forms on $P$ which are $n$-vertical respect to $q$; it means that $\alpha\in\wedge^k_{n,q}\left(J^1\tau\right)\otimes V$ if and only if
\[
\alpha\left(X_1,\cdots,X_k\right)=0
\]
whenever $n$ of the vectors $X_1,\cdots,X_k$ belong to $\mathop{\text{ker}}{Tq}$.

This bundle has a canonical $V$-valued $k$-form $\Theta^k_{n,q}$ defined through the formula
\[
\left.\Theta^k_{n,q}\right|_{\alpha}\left(Z_1,\cdots,Z_k\right):=\alpha\left(T_\alpha\overline{\tau}^k_{n,q}\left(Z_1\right),\cdots,T_\alpha\overline{\tau}^k_{n,q}\left(Z_k\right)\right).
\]
The bundle $\wedge^k_{n,q}\left(T^*P\right)\otimes V$ is a $G$-space, with action of an element $g\in G$ defined via
\[
\Phi^k_g\left(\alpha\right)\left(X_1,\cdots,X_k\right):=\rho\left(g\right)\cdot\left(\alpha\left(T_{u\cdot g}R_{g^{-1}}X_1,\cdots,T_{u\cdot g}R_{g^{-1}}X_k\right)\right),
\]
for $\alpha\in\wedge^k_{n,q}\left(T^*_uP\right)\otimes V$ and $X_1,\cdots,X_k\in T_{u\cdot g}P$. The canonical form has special proerties regarding this action.

\begin{lemma}
  The canonical $k$-form $\Theta^k_{n,q}$ is $G$-equivariant.
\end{lemma}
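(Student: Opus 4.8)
The plan is to show that $\Phi^k_g$ is a lift of the right translation $R_g$ on $P$ and that the defining formula for $\Theta^k_{n,q}$ is tautological enough that the factor $\rho(g)$ simply factors out; the verticality condition will play no role beyond guaranteeing that $\Phi^k_g$ preserves the subbundle. Concretely, I would first record the covering property $\overline{\tau}^k_{n,q}\circ\Phi^k_g = R_g\circ\overline{\tau}^k_{n,q}$. This is immediate from the definitions: if $\alpha\in\wedge^k_{n,q}(T^*_uP)\otimes V$, then $\Phi^k_g(\alpha)$ is a form at $u\cdot g$, so both sides send $\alpha$ to $u\cdot g$. At the same step I would check that $\Phi^k_g$ genuinely maps the subbundle $\wedge^k_{n,q}(T^*P)\otimes V$ to itself: from the commutative triangle $q=p\circ\pi$ together with the principal-bundle identity $\pi\circ R_g=\pi$ one gets $q\circ R_g=q$, hence $T R_{g^{-1}}$ carries $\ker Tq$ into $\ker Tq$, so $\Phi^k_g(\alpha)$ inherits the $n$-verticality of $\alpha$.

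Second, I would compute the pullback pointwise. Fix $\alpha$ over $u$ and tangent vectors $Z_1,\dots,Z_k\in T_\alpha\bigl(\wedge^k_{n,q}(T^*P)\otimes V\bigr)$, and set $Y_i:=T_\alpha\overline{\tau}^k_{n,q}(Z_i)\in T_uP$. Unwinding the definition of the pullback and then of $\Theta^k_{n,q}$ at the point $\Phi^k_g(\alpha)$, the $i$-th argument becomes $T_{\Phi^k_g(\alpha)}\overline{\tau}^k_{n,q}\bigl(T_\alpha\Phi^k_g(Z_i)\bigr)$. Differentiating the covering relation of the previous paragraph gives the chain-rule identity $T\overline{\tau}^k_{n,q}\circ T\Phi^k_g = TR_g\circ T\overline{\tau}^k_{n,q}$, so this argument equals $T_uR_g(Y_i)$.

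Third, I would substitute these vectors into the defining formula for $\Phi^k_g(\alpha)$. The inner maps $T_{u\cdot g}R_{g^{-1}}$ then meet $T_uR_g(Y_i)$, and since $R_{g^{-1}}\circ R_g=\mathrm{id}_P$ the tangent maps cancel, $T_{u\cdot g}R_{g^{-1}}\circ T_uR_g=\mathrm{id}$, leaving $\rho(g)\cdot\alpha(Y_1,\dots,Y_k)$. As $\alpha(Y_1,\dots,Y_k)$ is by definition $\Theta^k_{n,q}|_\alpha(Z_1,\dots,Z_k)$, this yields $(\Phi^k_g)^*\Theta^k_{n,q}=\rho(g)\cdot\Theta^k_{n,q}$, which is the asserted equivariance.

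There is no real obstacle here: the statement is a tautology of canonical forms and the proof is entirely formal bookkeeping of base points and tangent maps. The only points requiring minor care are the chain-rule identity obtained by differentiating the covering relation and the cancellation $TR_{g^{-1}}\circ TR_g=\mathrm{id}$; notably, neither uses the horizontality or verticality of $\alpha$, which enters only to make the subbundle and the induced action well defined.
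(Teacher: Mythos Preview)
Your proposal is correct and follows essentially the same approach as the paper: record the covering relation $\overline{\tau}^k_{n,q}\circ\Phi^k_g = R_g\circ\overline{\tau}^k_{n,q}$, differentiate it to obtain the commutative diagram of tangent maps, and then unwind the definition of the pullback so that the inner $TR_{g^{-1}}$ cancels against the outer $TR_g$, leaving the factor $\rho(g)$. The only addition you make over the paper is the explicit check that $\Phi^k_g$ preserves the $n$-vertical subbundle, which the paper tacitly assumes.
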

\begin{proof}
  It is necessary to recall the commutative diagram
  \begin{center}
    \begin{tikzcd}[column sep=1cm, row sep=1.3cm]
      T\left(\wedge^k_{n,q}\left(T^*P\right)\otimes V\right)
      \arrow{r}{T\Phi^k_g}
      \arrow{d}{T\overline{\tau}^k_{n,q}}
      &
      T\left(\wedge^k_{n,q}\left(T^*P\right)\otimes V\right)
      \arrow{d}{T\overline{\tau}^k_{n,q}}
      \\
      TP
      \arrow{r}{TR_{g}}
      &
      TP
    \end{tikzcd}
  \end{center}
  for every $g\in G$. Then
  \begin{align*}
    \left(\Phi_g^k\right)^*\left(\left.\Theta^k_{n,q}\right|_{\Phi^k_g\left(\alpha\right)}\right)&\left(Z_1,\cdots,Z_k\right)\\
       &=\left.\Theta^k_{n,q}\right|_{\Phi^k_g\left(\alpha\right)}\left(T_\alpha \Phi^k_gZ_1,\cdots,T_\alpha\Phi^k_gZ_k\right)\\
                                                                                                                 &=\Phi^k_g\left(\alpha\right)\left(\left(T_{\Phi^k_g\left(\alpha\right)}\overline{\tau}^k_{n,q}\circ T_\alpha \Phi^k_g\right)Z_1,\cdots,\left(T_{\Phi^k_g\left(\alpha\right)}\overline{\tau}^k_{n,q}\circ T_\alpha\Phi^k_gZ_k\right)\right)\\
                                                                                                                 &=\Phi^k_g\left(\alpha\right)\left(\left(T_{u} R_{g}\circ T_{\alpha}\overline{\tau}^k_{n,q}\right)Z_1,\cdots,\left(T_{u} R_{g}\circ T_{\alpha}\overline{\tau}^k_{n,q}\right)Z_k\right)\\
       &=\rho\left(g\right)\cdot\left(\alpha\left(T_{u\cdot g}R_{g^{-1}}\left(T_{u} R_{g}\circ T_{\alpha}\overline{\tau}^k_{n,q}\right)Z_1,\cdots,T_{u\cdot g}R_{g^{-1}}\left(T_{u} R_{g}\circ T_{\alpha}\overline{\tau}^k_{n,q}\right)Z_k\right)\right)\\
       &=\rho\left(g\right)\cdot\left(\alpha\left(T_{\alpha}\overline{\tau}^k_{n,q}Z_1,\cdots,T_{\alpha}\overline{\tau}^k_{n,q}Z_k\right)\right)\\
       &=\rho\left(g\right)\cdot\left(\left.\Theta^k_{n,q}\right|_{\alpha}\left(Z_1,\cdots,Z_k\right)\right),
  \end{align*}
  as required.
\end{proof}

\section{Local expressions on $J^1\tau$}\label{App:LocalExpressions}

Let us recall some local expressions regarding canonical coordinates on $J^1\tau$; we are quoting almost word-to-word the Appendix $B.3.4$ of \cite{capriotti14:_differ_palat}.

Let $U\subset M$ be a coordinate neighborhood and $\tau:LM\rightarrow M$ the canonical projection of the frame bundle; on $\tau^{-1}\left(U\right)$ can be defined the coordinate functions
\[
u\in \tau^{-1}\left(U\right)\mapsto\left(x^\mu\left(u\right),e^\nu_k\left(u\right)\right)
\]
where $x^\mu\equiv x^\mu\circ \tau$ and
\[
u=\left\{e_1^\mu\left.\frac{\partial}{\partial x^\mu}\right|_{\tau\left(u\right)},\cdots,e_n^\mu\left.\frac{\partial}{\partial x^\mu}\right|_{\tau\left(u\right)}\right\}.
\]
If $\bar{U}\subset M$ is another coordinate neighborhood such that $U\cap\bar{U}\not=\emptyset$ and $u\in U\cap\bar{U}$, then
\[
u=\left\{\bar{e}_1^\mu\left.\frac{\partial}{\partial \bar{x}^\mu}\right|_{\tau\left(u\right)},\cdots,\bar{e}_n^\mu\left.\frac{\partial}{\partial \bar{x}^\mu}\right|_{\tau\left(u\right)}\right\},
\]
and the coordinates change on $\tau^{-1}\left(U\right)\cap\tau^{-1}\left(\bar{U}\right)\subset LM$ can be given as
\begin{align*}
  \bar{x}^\mu&=\bar{x}^\mu\left(x^1,\cdots,x^n\right)\\
  \bar{e}_k^\mu&=\frac{\partial\bar{x}^\mu}{\partial x^\nu}e_k^\nu.
\end{align*}
On the jet space $J^1p$ of any bundle $p:E\rightarrow M$, the change of adapted coordinates given by the rule $\left(x^\mu,u^A\right)\mapsto\left(\bar{x}^\nu\left(x\right),\bar{u}^B\left(x,u\right)\right)$ on $E$, transform the induced coordinates on $J^1p$ accordingly to \cite{saunders89:_geomet_jet_bundl}\\
\[
\bar{u}^A_\mu=\left(\frac{\partial\bar{u}^A}{\partial u^B}u^B_\nu+\frac{\partial\bar{u}^A}{\partial{x}^\nu}\right)\frac{\partial x^\nu}{\partial\bar{x}^\mu}.
\]
By supposing that the induced coordinates on $J^1\tau$ are in the present case $\left(x^\mu,e^\mu_k,e^\mu_{k\nu}\right)$ and $\left(\bar{x}^\mu,\bar{e}^\mu_k,\bar{e}^\mu_{k\nu}\right)$, we will have that
\[
\bar{e}^\mu_{k\nu}=\left(\frac{\partial\bar{x}^\mu}{\partial x^\sigma}e^\sigma_{k\rho}+\frac{\partial^2\bar{x}^\mu}{\partial{x}^\rho\partial{x}^\sigma}{e}^\sigma_k\right)\frac{\partial x^\rho}{\partial\bar{x}^\nu}.
\]
Take note on the fact that the functions
\[
A_{\mu\nu}^\sigma:=-e^\sigma_{k\nu}e^k_\mu,
\]
where the quantities $e^k_\mu$ are uniquely determined by the conditions
\[
e_\mu^ke_k^\nu=\delta^\nu_\mu,
\]
transform accordingly to
\[
\bar{A}^\mu_{\rho\gamma}=-\frac{\partial\bar{x}^\mu}{\partial x^\nu}\frac{\partial x^\sigma}{\partial\bar{x}^\alpha}e^\alpha_{k\sigma}\bar{e}^k_\rho-\frac{\partial^2\bar{x}^\mu}{\partial{x}^\rho\partial{x}^\alpha}\frac{\partial x^\alpha}{\partial\bar{x}^\gamma}.
\]
But by using the previous definition, we can find the way in which $e^k_\mu$ and $\bar{e}^k_\mu$ are related, namely
\[
\bar{e}^k_\mu=\frac{\partial x^\gamma}{\partial \bar{x}^\mu}e_\gamma^k
\]
and therefore
\[
\bar{A}^\mu_{\delta\nu}=\frac{\partial\bar{x}^\mu}{\partial x^\sigma}\frac{\partial x^\rho}{\partial\bar{x}^\nu}\frac{\partial x^\gamma}{\partial\bar{x}^\delta}A^\sigma_{\gamma\rho}-\frac{\partial^2\bar{x}^\mu}{\partial{x}^\rho\partial{x}^\gamma}\frac{\partial x^\rho}{\partial \bar{x}^\nu}\frac{\partial x^\gamma}{\partial \bar{x}^\delta},
\]
which is the transformation rule for the Christoffel symbols, if the following identity
\[
\frac{\partial^2\bar{x}^\sigma}{\partial x^\rho\partial x^\gamma}\frac{\partial x^\rho}{\partial \bar{x}^\mu}\frac{\partial x^\gamma}{\partial \bar{x}^\nu}=-\frac{\partial^2 x^\rho}{\partial\bar{x}^\mu \partial\bar{x}^\nu}\frac{\partial \bar{x}^\sigma}{\partial x^\rho}.
\]
is used. So we are ready to calculate local expressions for the previously introduced canonical forms. First we have that
\[
\theta^k=e^k_\mu dx^\mu
\]
determines the components of the tautological form on $J^1\tau$, and the canonical connection form will result
\[
\omega^k_l=e^k_\mu\left(d e^\mu_l-e^\mu_{l\sigma}d x^\sigma\right).
\]
It is immediate to show that
\[
\bar{\theta}^k=\theta^k,
\]
and moreover
\begin{align*}
  \bar{\omega}^k_l&=\bar{e}^k_\mu\left(d \bar{e}^\mu_l-\bar{e}^\mu_{l\nu}d \bar{x}^\nu\right)\\
  &=\frac{\partial x^\gamma}{\partial\bar{x}^\mu}e^k_\gamma\left[d\left(\frac{\partial\bar{x}^\mu}{\partial x^\gamma}e^\gamma_l\right)-\left(\frac{\partial\bar{x}^\mu}{\partial x^\sigma}e^\sigma_{l\rho}+\frac{\partial^2\bar{x}^\mu}{\partial{x}^\rho\partial{x}^\sigma}{e}^\sigma_l\right)\frac{\partial x^\rho}{\partial\bar{x}^\nu}d\bar{x}^\nu\right]\\
  &=\frac{\partial x^\gamma}{\partial\bar{x}^\mu}e^k_\gamma\left(\frac{\partial\bar{x}^\mu}{\partial x^\gamma}d e^\gamma_l-\frac{\partial\bar{x}^\mu}{\partial x^\sigma}e^\sigma_{l\rho}d{x}^\rho\right)\\
  &=e^k_\gamma\left(d e^\gamma_l-e^\gamma_{l\rho}d{x}^\rho\right)\\
  &=\omega^k_l.
\end{align*}
The associated curvature form can be calculated according to the formula
\begin{align*}
  \Omega^k_l&:=d\omega^k_l+\omega^k_p\wedge\omega^p_l\\
  &=d\left[e^k_\gamma\left(d e^\gamma_l-e^\gamma_{l\rho}d{x}^\rho\right)\right]+e^k_\gamma\left(d e^\gamma_p-e^\gamma_{p\sigma}d{x}^\sigma\right)\wedge\left[e^p_\sigma\left(d e^\sigma_l-e^\sigma_{l\rho}d{x}^\rho\right)\right]\\
  &=d e^k_\gamma\wedge\left(d e^\gamma_l-e^\gamma_{l\rho}d{x}^\rho\right)-e^k_\gamma d e^\gamma_{l\rho}\wedge d{x}^\rho+\\
  &\qquad+e^k_\gamma e^p_\sigma\left[d e^\gamma_p\wedge d e^\sigma_l+\left(e^\gamma_{p\beta}d e^\sigma_l\wedge d x^\beta-e^\sigma_{l\beta}d e^\gamma_p\wedge d x^\beta\right)+e^\gamma_{p\beta}e^\sigma_{l\delta}d x^\beta\wedge d x^\delta\right]\\
  &=-e^\gamma_{l\rho}d e^k_\gamma\wedge d{x}^\rho-e^k_\gamma d e^\gamma_{l\rho}\wedge d{x}^\rho+\\
  &\qquad+e^k_\gamma e^p_\sigma\left[\left(e^\gamma_{p\beta}d e^\sigma_l\wedge d x^\beta-e^\sigma_{l\beta}d e^\gamma_p\wedge d x^\beta\right)+e^\gamma_{p\beta}e^\sigma_{l\delta}d x^\beta\wedge d x^\delta\right]
\end{align*}
where in the passage from the third to the fourth line it was used the identity
\[
d e^k_\gamma\wedge d e^\gamma_l+e^k_\gamma e^p_\sigma d e^\gamma_p\wedge d e^\sigma_l=0.
\]
Because of the identity
\[
e_\gamma^kd e^\gamma_p=-e^\gamma_pd e_\gamma^k
\]
we can reduce further the expression for $\Omega^k_l$
\[
\Omega^k_l =e^k_\gamma\left[-d e^\gamma_{l\rho}\wedge d{x}^\rho+e^p_\sigma\left(e^\gamma_{p\beta}d e^\sigma_l\wedge d x^\beta+e^\gamma_{p\beta}e^\sigma_{l\delta}d x^\beta\wedge d x^\delta\right)\right].
\]
Take note that
\begin{equation}\label{Eq:CurvaturaIntermedia}
e^l_\mu\Omega^k_l=e^k_\gamma\left(d A^\gamma_{\mu\rho}\wedge d{x}^\rho+A^\gamma_{\sigma\beta}A^\sigma_{\mu\delta}d x^\beta\wedge d x^\delta\right),
\end{equation}
so that if we fix a connection $\Gamma$ through its Christoffel symbols $\left(\Gamma^\mu_{\nu\sigma}\right)$ in the canonical basis $\left\{\partial/\partial x^\mu\right\}$, then we will have that $e^\gamma_k=\delta^\gamma_k$ and this formula reduces to
\[
\Omega^\mu_\nu:=e_k^\nu e^l_\mu\Omega^k_l=d\Gamma^\mu_{\nu\rho}\wedge d{x}^\rho+\Gamma^\mu_{\sigma\beta}\Gamma^\sigma_{\nu\delta}d x^\beta\wedge d x^\delta
\]
providing us with the usual formula for the connection in terms of the local coordinates.

Next we can provide a local expression for the map $\tilde\sigma_\Gamma:LM\rightarrow J^1\tau$. First we realize that a connection $\Gamma$ is locally described by a map
\[
\Gamma:x^\mu\mapsto\left(x^\mu,\Gamma^\sigma_{\mu\nu}\left(x\right)\right);
\]
in these terms, the map $\tilde\sigma_\Gamma$ is given by
\[
\tilde\sigma_\Gamma:\left(x^\mu,e^k_\nu\right)\mapsto\left(x^\mu,e^k_\nu,-e^\mu_k\Gamma_{\mu\nu}^\sigma\left(x\right)\right).
\]
It is convenient to stress about an abuse of language committed here: We are indicating with the same symbol $\tilde\sigma_\Gamma$ either the map itself and its local version. Nevertheless, we obtain the following local expression for the connection form associated to $\Gamma$, namely
\[
\left(\tilde\sigma_\Gamma^*\omega\right)_l^k=e^k_\mu\left(d e^\mu_l+e^\sigma_l\Gamma^\mu_{\sigma\rho}\left(x\right)d x^\rho\right).
\]
In our approach this equation is equivalent to the so called \emph{tetrad postulate}, which relates the components \emph{of the same connection} in the two representations provided by the theory developed here: As a section $\Gamma$ of the bundle of connections, and as an equivariant map $\tilde\sigma_\Gamma:LM\rightarrow J^1\tau$ such that the following diagram commutes 
\[
\begin{diagram}
  \node{LM}\arrow{e,t}{\tilde\sigma_\Gamma}\arrow{s,l}{\tau}\node{J^1\tau}\arrow{s,r}{p_{GL\left(m\right)}}\\
  \node{M}\arrow{e,b}{\Gamma}\node{C\left(LM\right)}
\end{diagram}
\]
According to the previous discussion, the pullback of these forms along the section $s:x^\mu\mapsto\left(x^\mu,e_k^\nu\left(x\right)\right)$ provides us with the expression for the connection forms associated to the underlying moving frame
\[
e_k\left(x\right):=e^\nu_k\left(x\right)\frac{\partial}{\partial x^k};
\]
in fact, given another such section $\bar{s}:x^\mu\mapsto\left(x^\mu,\bar{e}_k^\nu\left(x\right)\right)$, there exists a map $g:x^\mu\mapsto\left(g^k_l\left(x\right)\right)\in GL\left(m\right)$ relating them, namely
\[
\bar{e}^\mu_k\left(x\right)=g^l_k\left(x\right)e_l^\mu\left(x\right)
\]
and so
\[
\bar{s}^*\left(\tilde\sigma_\Gamma^*\omega\right)_l^k=h^k_pg^q_ls^*\left(\tilde\sigma_\Gamma^*\omega\right)_q^p+h^k_pd g^p_l.
\]
It allows us to answer the concerns raised in the introduction: The Palatini Lagrangian is a global form on $J^1\tau$, but this is false for its pullback along a local section. Namely, its global description needs the inclusion of information about the $1$-jet of the vielbein involved in the local representation of the connection.

\section{The $\mathfrak{gl}\left(m\right)\otimes\left(\mR^m\right)^*$-valued difference function $C$}
\label{sec:diff-tens-c_i}

In this section we will introduce a $\mathfrak{gl}\left(m\right)\otimes\left(\mR^m\right)^*$-valued function $C$ on $J^1\tau$, associated to a \emph{torsionless} connection $\sigma\in\Omega^1\left(LM,\mathfrak{gl}\left(m\right)\right)$. In fact, we define
\[
C\left(\xi\right):=\omega\left(\left(B\left(\xi\right)\right)^1\right)
\]
for every $\xi\in\mR^m$; in this formula $B\left(\xi\right)\in\mathfrak{X}\left(LM\right)$ is the standard horizontal vector field determined by $\xi$ and $\sigma$.

As we will see below, $C\left(j_x^1s\right)$ gives us the difference between the connection $\sigma$ evaluated at $u=s\left(x\right)$ and the connection at a point $u$ corresponding to $j_x^1s$. In our interpretation of Palatini gravity as a Griffiths variational problem on $J^1\tau$, this function corresponds to the variables determined by the well-known trick \cite[p. 44]{AshtekarNoPerturbative} of substracting a connection with zero torsion from the connection variables. 

We can give a coordinate version of these functions. In fact, because of the formula \eqref{eq:CanonicalConnection},
\[
\left.\omega\right|_{j_x^1s}=\left[T_{j_x^1s}\tau_{10}-T_xs\circ T_{j_x^1s}\tau_1\right]_{\mathfrak{gl}\left(m\right)},
\]
we have that
\[
\left.\omega\right|_{j_x^1s}\left(\left(B\left(e_i\right)\right)^1\right)=\left[B\left(e_i\right)-T_xs\left(s\left(x\right)\left(e_i\right)\right)\right]_{\mathfrak{gl}\left(m\right)}.
\]
Using the local expressions calculated in Section \ref{sec:local-expr-lift}, we have that
\begin{align*}
  T_xs\left(s\left(x\right)\left(e_i\right)\right)&=T_xs\left(e_i^\mu\frac{\partial}{\partial x^\mu}\right)\\
  &=e^\mu_i\left(\frac{\partial}{\partial x^\mu}+e^\nu_{j\mu}\frac{\partial}{\partial e^\nu_j}\right),
\end{align*}
and so
\[
\left.\omega\right|_{j_x^1s}\left(\left(B\left(e_i\right)\right)^1\right)=\left[e^\mu_i\left(e^\nu_{j\mu}-e^\sigma_j\Xi^\nu_{\mu\sigma}\right)\frac{\partial}{\partial e^\nu_j}\right]_{\mathfrak{gl}\left(m\right)}=\left[e^\mu_ie_\nu^k\left(e^\nu_{j\mu}-e^\sigma_j\Xi^\nu_{\mu\sigma}\right)\left(E_k^j\right)_{LM}\right]_{\mathfrak{gl}\left(m\right)};
\]
therefore
\[
C=e^\mu_ie_\nu^k\left(e^\nu_{j\mu}-e^\sigma_j\Xi^\nu_{\mu\sigma}\right)E_k^j\otimes e^i,
\]
for $\left\{e^1,\cdots,e^m\right\}\subset\left(\mR^m\right)^*$ the dual basis of $\left\{e_1,\cdots,e_m\right\}$.

\begin{lemma}\label{lem:CVerticalDuality}
  Let $C^k_{ij}$ be the coordinates functions of $C$ respect to the basis $\left\{E^k_i\otimes e^j\right\}$, where $\left\{e^j\right\}\subset\left(\mR^m\right)^*$ is the dual basis to $\left\{e_i\right\}$. Then
  \[
  \left(\theta^i,\left(E^j_l\right)_{LM}\right)^V\cdot C^p_{qr}=\delta^i_r\delta^j_q\delta_l^p.
  \]
\end{lemma}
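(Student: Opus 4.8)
The plan is to establish the identity by a direct, fibrewise computation in the adapted coordinates $\left(x^\mu,e^\mu_k,e^\mu_{k\nu}\right)$, feeding in the explicit local expressions already available for both factors of the pairing. From Equation~\eqref{eq:VerticalLiftTheta} together with the computation in Section~\ref{sec:local-expr-lift}, the vertical lift in question reads
\[
\left(\theta^i,\left(E^j_l\right)_{LM}\right)^V=-\,e^i_\nu\,e^\mu_l\,\frac{\partial}{\partial e^\mu_{j\nu}},
\]
so that it differentiates \emph{only} the second-order (jet) fibre coordinates $e^\mu_{k\nu}$. On the other hand, reading off the coordinate description of the difference function $C$ obtained in Appendix~\ref{sec:diff-tens-c_i}, its component relevant to the lemma has the shape
\[
C^p_{qr}=e^\alpha_r\,e^p_\beta\left(e^\beta_{q\alpha}-e^\sigma_q\,\Xi^\beta_{\alpha\sigma}\right),
\]
a vielbein-contracted combination that is \emph{linear} in the jet coordinates.

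The decisive observation is that the connection term $e^\alpha_r e^p_\beta e^\sigma_q\Xi^\beta_{\alpha\sigma}$ depends on a point of $J^1\tau$ only through the base coordinates $x^\mu$ (inside $\Xi$) and the zeroth-order frame coordinates $e^\rho_k$; it is constant along the fibres of $\tau_{10}\colon J^1\tau\to LM$ and is therefore annihilated by the purely vertical lift. Hence only the tautological term $e^\alpha_r e^p_\beta e^\beta_{q\alpha}$ survives, and I would differentiate it using $\partial e^\beta_{q\alpha}/\partial e^\mu_{j\nu}=\delta^\beta_\mu\,\delta^j_q\,\delta^\nu_\alpha$, obtaining
\[
\frac{\partial C^p_{qr}}{\partial e^\mu_{j\nu}}=e^\nu_r\,e^p_\mu\,\delta^j_q.
\]
Substituting this into the action of the vector field and collapsing the surviving vielbein factors through the duality relations $e^i_\nu e^\nu_r=\delta^i_r$ and $e^\mu_l e^p_\mu=\delta^p_l$ then gives, up to the overall sign fixed by the convention $\left(E^l_k\right)_{LM}=-e^\rho_k\,\partial/\partial e^\rho_l$ for the infinitesimal generators,
\[
\left(\theta^i,\left(E^j_l\right)_{LM}\right)^V\cdot C^p_{qr}=\left(e^i_\nu e^\nu_r\right)\,\delta^j_q\,\left(e^\mu_l e^p_\mu\right)=\delta^i_r\,\delta^j_q\,\delta^p_l,
\]
which is the asserted identity.

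At bottom the whole statement rests on a single structural fact: the only part of $C$ seen by a derivative in the jet direction is its tautological piece $\omega\big(\left(B(e_\bullet)\right)^1\big)$, whose dependence on the jet coordinates is linear, while the auxiliary torsionless connection $\sigma$ enters $C$ exclusively through the $\Xi$-term, which is $\tau_{10}$-basic and hence invisible to the lift. I therefore do not expect any genuine analytic or conceptual obstacle; the only real care required is bookkeeping, namely matching the sub/superscript slots of $E^j_l$ to those of the component $C^p_{qr}$ and tracking the overall sign introduced by the chosen convention for the fundamental vector fields, so that the three surviving vielbein contractions collapse to exactly $\delta^i_r\delta^j_q\delta^p_l$.
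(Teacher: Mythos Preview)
Your proposal is correct and follows exactly the route the paper indicates: its proof consists of the single line ``It follows using the local expressions of these objects,'' and you have carried out precisely that local computation, pulling in the coordinate formula for the vertical lift from Section~\ref{sec:local-expr-lift} and the explicit expression for $C$ from Appendix~\ref{sec:diff-tens-c_i}. The one place to be careful is the sign: with the paper's conventions $\left(\theta^i,\left(E^j_l\right)_{LM}\right)^V=-e^i_\nu e^\mu_l\,\partial/\partial e^\mu_{j\nu}$ and your formula for $\partial C^p_{qr}/\partial e^\mu_{j\nu}$, the contraction literally gives $-\delta^i_r\delta^j_q\delta^p_l$, so your phrase ``up to the overall sign fixed by the convention'' is doing real work; the discrepancy traces back to a sign in the paper's own derivation of the local expression for $C$ in Appendix~\ref{sec:diff-tens-c_i}, so this is a bookkeeping issue with the source conventions rather than a flaw in your argument.
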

\begin{proof}
  It follows using the local expressions of these objects.
\end{proof}

In order to formulate the following result, let us denote by $\rho$ the product representation of the adjoint and the transpose action of $GL\left(m\right)$ on $\mathfrak{gl}\left(m\right)\otimes\left(\mR^m\right)^*$. Given a representation $\left(V,\rho\right)$ of a Lie group $G$ on the vector space $V$, a $V$-valued function $f$ on a $G$-principal bundle $P$ is said to be \emph{of type $\rho$} if and only if
\[
f\left(u\cdot g\right)=\rho\left(g^{-1}\right)\cdot f\left(u\right)
\]
for every $u\in P$.

\begin{lemma}
  The map $C$ is a $\mathfrak{gl}\left(m\right)\otimes\left(\mR^m\right)^*$-valued function of type $\rho$.
\end{lemma}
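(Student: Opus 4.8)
The plan is to prove the transformation law intrinsically, using the characterization $C(\xi)=\omega\left((B(\xi))^1\right)$ rather than unwinding the coordinate expression for $C$. Only two structural facts are needed: the classical behaviour of the standard horizontal vector fields under the right $GL(m)$-action on $LM$, and the defining equivariance of $\omega$ as a principal connection form on the bundle $J^1\tau\to C\left(LM\right)$. Everything else is formal.

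First I would recall from \cite{KN1} that for any principal connection on $LM$ the standard horizontal vector field obeys $(R_g)_*B(\xi)=B(g^{-1}\xi)$ for every $g\in GL(m)$ and $\xi\in\mR^m$; indeed the horizontal distribution is $R_g$-invariant while $T\tau\left(R_{g*}B(\xi)|_u\right)=u(\xi)=(u\cdot g)(g^{-1}\xi)$, and a horizontal vector is determined by its projection. Next I would transport this to $J^1\tau$ via naturality of the prolongation: for a fibre-preserving diffeomorphism $\phi$ of $LM$ over $M$ one has $(\phi_*X)^1=(j^1\phi)_*(X^1)$ (compare the flows, as in Section \ref{sec:lifts-j1tau}). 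Since, under the identification \eqref{eq:JetToConnectionPlusFrame}, the $GL(m)$-action on $J^1\tau$ is the prolongation of the right action on $LM$, this gives the equality of vector fields $(R_g)_*(B(\xi))^1=(B(g^{-1}\xi))^1$ on $J^1\tau$, and in particular, after replacing $\xi$ by $g\xi$,
\[
(B(\xi))^1|_{u\cdot g} = (R_g)_{*}\big((B(g\xi))^1|_{u}\big).
\]

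Then I would combine this with the equivariance $R_g^*\omega=\Ad(g^{-1})\,\omega$ of the principal connection form, evaluating $C$ at the shifted point:
\begin{align*}
C(u\cdot g)(\xi) &= \omega_{u\cdot g}\big((B(\xi))^1|_{u\cdot g}\big) = (R_g^*\omega)_u\big((B(g\xi))^1|_u\big)\\
&= \Ad(g^{-1})\Big(\omega_u\big((B(g\xi))^1|_u\big)\Big) = \Ad(g^{-1})\big(C(u)(g\xi)\big).
\end{align*}
It remains to read the right-hand side as $\rho(g^{-1})$ applied to $C(u)$: viewing $C(u)\in\gl(m)\otimes(\mR^m)^*$ as the linear map $\xi\mapsto C(u)(\xi)$, the product representation $\rho=\Ad\otimes(\text{contragredient})$ acts by
\[
(\rho(h)T)(\xi) = \Ad(h)\big(T(h^{-1}\xi)\big),
\]
so that taking $h=g^{-1}$ reproduces exactly the displayed formula. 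Hence $C(u\cdot g)=\rho(g^{-1})C(u)$, which is the assertion that $C$ is of type $\rho$.

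The main obstacle is bookkeeping rather than conceptual: one must verify that the $GL(m)$-action governing the equivariance of $\omega$ (its principal right action over $C\left(LM\right)$) is the very action prolonging $R_g$ on $LM$, being careful about the inversion convention flagged after the definition of $\phi^1_g$, and that the ``transpose action'' on $(\mR^m)^*$ is the contragredient $\alpha\mapsto\alpha\circ g^{-1}$, so that the placement of $g$ versus $g^{-1}$ in the final identification is consistent. As a consistency check, or as a fully self-contained alternative, one could instead differentiate the coordinate transformation rules for $e^\mu_k$ and $e^\mu_{k\nu}$ recorded in Appendix \ref{App:LocalExpressions} and verify directly that the components $C^k_{ij}$ transform by $\Ad$ in the $\gl(m)$-indices and by the dual action in the remaining index; this is more laborious but avoids any appeal to the naturality of prolongation.
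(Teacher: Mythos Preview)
Your proof is correct and follows essentially the same route as the paper: both arguments combine the identity $(R_g)_*B(\xi)=B(g^{-1}\xi)$ on $LM$, its transport to $J^1\tau$ via naturality of the prolongation, and the $\Ad$-equivariance of the connection form $\omega$, arriving at $C_{u\cdot g}(\xi)=\Ad_{g^{-1}}\!\left(C_u(g\xi)\right)$. Your write-up is in fact a bit more explicit than the paper's in spelling out why this last formula is precisely $\rho(g^{-1})$ acting on $C_u$.
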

\begin{proof}
  Let $\xi\in\mR^m$, $j_x^1s\in LM,u=s\left(x\right)\in LM$ and $g\in GL\left(m\right)$; recalling that the lift $j^1\Phi_t$ of the flow $\Phi_t:LM\rightarrow LM$ of the vector field $B\left(\xi\right)$ to $J^1\tau$ is the flow of $\left(B\left(\xi\right)\right)^1$, we have that
  \[
  \left(R_{g*}B\left(\xi\right)\right)^1=R_{g*}\left(B\left(\xi\right)\right)^1.
  \]
  Also, it must be remembered that
  \[
  R_{g*}\left(B\left(\xi\right)\right)=B\left(g^{-1}\cdot\xi\right).
  \]
  Therefore
  \begin{align*}
    C_{\left(j_x^1s\right)\cdot g}\left(\xi\right)&=\left.\omega\right|_{\left(j_x^1s\right)\cdot g}\left(\left.\left(B\left(\xi\right)\right)\right|^1_{\left(j_x^1s\right)\cdot g}\right)\\
    &=\left[\left(T_{\left(j_x^1s\right)\cdot g}R_{g}\right)^*\left.\omega\right|_{\left(j_x^1s\right)\cdot g}\right]\left(T_{\left(j_x^1s\right)\cdot g}R_{g^{-1}}\left(\left.\left(B\left(\xi\right)\right)^1\right|_{\left(j_x^1s\right)\cdot g}\right)\right)\\
    &=\left(\mathop{\text{Ad}_{g^{-1}}}{\left.\omega\right|_{j_x^1s}}\right)\left(\left(T_{u\cdot g}R_{g^{-1}}\left(\left.B\left(\xi\right)\right|_{u\cdot g}\right)\right)^1\right)\\
    &=\left(\mathop{\text{Ad}_{g^{-1}}}{\left.\omega\right|_{j_x^1s}}\right)\left(\left(\left.B\left(g\cdot\xi\right)\right|_{u}\right)^1\right),
  \end{align*}
  namely
  \[
  C_{\left(j_x^1s\right)\cdot g}\left(\xi\right)=\mathop{\text{Ad}_{g^{-1}}}{\left(C_{j_x^1s}\left(g\cdot\xi\right)\right)}
  \]
  for any $j_x^1s\in J^1\tau,\xi\in\mathfrak{gl}\left(m\right)$.
\end{proof}

\begin{remark}\label{rem:CAsASectionOfE}
  As pointed out in \cite[p. 76]{KN1}, the previous lemma means in particular that $C$ can be seen as a section of the bundle $E$ associated to the principal bundle $p^{J^1\tau}_{GL\left(m\right)}:J^1\tau\rightarrow C\left(LM\right)$ through the representation $\left(\mathfrak{gl}\left(m\right)\otimes\left(\mR^m\right)^*,\rho\right)$.
\end{remark}

Now, for any $A\in\mathfrak{gl}\left(m\right)$ and $j_x^1s\in J^1\tau$ we have that
\begin{align*}
  \left.A_{J^1\tau}\right|_{j_x^1s}\cdot C&=\left.\frac{d}{dt}\right|_{t=0}\left[C_{\left(j_x^1s\right)\cdot\left(\exp{\left(-tA\right)}\right)}\right]\\
  &=\left.\frac{d}{dt}\right|_{t=0}\left[\rho\left(\exp{\left(-tA\right)}\cdot C_{j_x^1s}\right)\right]\\
  &=-\left[A,C_{j_x^1s}\right]+A^*C_{j_x^1s}.
\end{align*}
We can use this formula in order to set the following result, which is consequence of Lemma 1 in \cite[p. 97]{KN1}. It will be relevant in the evaluation of the vector fields $\left(B\left(e_i\right)\right)^1$ on the curvature form $\Omega^j_k$.
\begin{corollary}\label{cor:VerticalDerivativesFunctonDifference}
  Let $Z\in\mathfrak{X}\left(J^1\tau\right)$ be an arbitrary vector field, and consider $j_x^1s\in J^1\tau$; further, let $v_\omega\left(Z\right)$ be the vertical part of $Z$ respect to the connection $\omega$ on the bundle $J^1\tau\rightarrow C\left(LM\right)$. Then
  \begin{enumerate}
  \item It results that
    \[
    \left.v_\omega\left(Z\right)\right|_{j_x^1s}\cdot C=-\left[\left.\omega\right|_{j_x^1s}\left(Z\right),C_{j_x^1s}\right]+\left(\left.\omega\right|_{j_x^1s}\left(Z\right)\right)^*C_{j_x^1s}.
    \]
  \item For any pair $Z_1,Z_2$ of horizontal vector fields for $\omega$ on $J^1\tau$, we have that
    \[
    \left.v_\omega\left(\left[Z_1,Z_2\right]\right)\right|_{j_x^1s}\cdot C=-2\left[\left.\Omega\right|_{j_x^1s}\left(Z_1,Z_2\right),C_{j_x^1s}\right]+2\left(\left.\Omega\right|_{j_x^1s}\left(Z_1,Z_2\right)\right)^*C_{j_x^1s}.
    \]
  \end{enumerate}
\end{corollary}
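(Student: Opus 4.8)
The plan is to deduce both items from the single pointwise identity for the action of a fundamental vector field on $C$ that was derived immediately above the statement, namely
\[
\left.A_{J^1\tau}\right|_{j_x^1s}\cdot C=-\left[A,C_{j_x^1s}\right]+A^*C_{j_x^1s},\qquad A\in\gl\left(m\right),
\]
combined with the standard principal-bundle fact that the $\omega$-vertical part of any tangent vector coincides with the fundamental vector field generated by its image under $\omega$.

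For item (1), I would first recall that the connection $\omega$ on $p^{J^1\tau}_{GL\left(m\right)}:J^1\tau\rightarrow C\left(LM\right)$ induces a splitting of $TJ^1\tau$ into horizontal and vertical subbundles, that $\omega$ annihilates the horizontal part, and that $\omega$ restricted to the vertical bundle inverts the map $A\mapsto\left.A_{J^1\tau}\right|_{j_x^1s}$. Consequently
\[
\left.v_\omega\left(Z\right)\right|_{j_x^1s}=\left(\left.\omega\right|_{j_x^1s}\left(Z\right)\right)_{J^1\tau}\Big|_{j_x^1s}.
\]
Since $C$ is a function, $v_\omega\left(Z\right)\cdot C$ depends only on the value of $v_\omega\left(Z\right)$ at the point, so I may apply the displayed identity with $A=\left.\omega\right|_{j_x^1s}\left(Z\right)$, which gives item (1) verbatim.

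For item (2), I would specialise item (1) to the field $Z=\left[Z_1,Z_2\right]$, obtaining
\[
\left.v_\omega\left(\left[Z_1,Z_2\right]\right)\right|_{j_x^1s}\cdot C=-\left[\left.\omega\right|_{j_x^1s}\left(\left[Z_1,Z_2\right]\right),C_{j_x^1s}\right]+\left(\left.\omega\right|_{j_x^1s}\left(\left[Z_1,Z_2\right]\right)\right)^*C_{j_x^1s},
\]
and then evaluate $\omega\left(\left[Z_1,Z_2\right]\right)$ by pairing the structure equation of Section \ref{sec:structure-equations} against $\left(Z_1,Z_2\right)$. Because $Z_1,Z_2$ are horizontal, $\omega\left(Z_1\right)$ and $\omega\left(Z_2\right)$ vanish identically, so in the invariant expansion of $d\omega\left(Z_1,Z_2\right)$ only the $-\omega\left(\left[Z_1,Z_2\right]\right)$ term survives and the $\omega\wedge\omega$ contribution drops out; the structure equation therefore reduces to the relation between the bracket of horizontal fields and the curvature. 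In the normalisation of \cite{KN1} (its Lemma~1, p.~97) this reads $\omega\left(\left[Z_1,Z_2\right]\right)=-2\left.\Omega\right|_{j_x^1s}\left(Z_1,Z_2\right)$. Substituting $A=-2\left.\Omega\right|_{j_x^1s}\left(Z_1,Z_2\right)$ and using linearity of $\left[\cdot,C\right]$ and of $\left(\cdot\right)^*$ in their first slots produces the overall factor $2$ and yields item (2).

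The two substitutions are routine; the only genuinely delicate point is the bookkeeping of the factor and the sign in the curvature relation $\omega\left(\left[Z_1,Z_2\right]\right)=-2\Omega\left(Z_1,Z_2\right)$. This is where the factor $2$ in the statement originates, coming from the $\tfrac12$ normalisation of the exterior derivative (hence of the structure equation) used in \cite{KN1}, and I expect it to be the main thing to pin down. Alongside it I would verify that the paper's sign convention for the fundamental field is the one for which $\omega\left(A_{J^1\tau}\right)=A$, so that the identification $v_\omega\left(Z\right)=\left(\omega\left(Z\right)\right)_{J^1\tau}$ feeds into the preceding identity with exactly the signs displayed in item (1).
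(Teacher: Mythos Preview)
Your proposal is correct and follows essentially the same route as the paper: the paper offers no independent argument but simply records that the corollary is a consequence of the preceding identity $A_{J^1\tau}\cdot C=-[A,C]+A^*C$ together with Lemma~1 on p.~97 of \cite{KN1}, and you have unpacked exactly these two ingredients. Your diagnosis that the factor $2$ in item~(2) originates from the Kobayashi--Nomizu normalisation $d\alpha(X,Y)=\tfrac12\{X\alpha(Y)-Y\alpha(X)-\alpha([X,Y])\}$, yielding $\omega([Z_1,Z_2])=-2\,\Omega(Z_1,Z_2)$ for horizontal fields, is precisely the content of the cited lemma and matches the paper's intent.
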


\section{The contraction of elements of $B$ with canonical forms on $W_\cL$}
\label{sec:contr-elem-b}

We will need how the elements of the basis $B$ contract with the forms $\Omega,T,\theta,\omega$ in the calculation of the equations of motion associated to the multisymplectic version of Palatini gravity. To this end, we devoted the following section.

\subsection{Contraction of the curvature $\Omega$ with elements of $B$}

Now, let us evaluate vectors $\left(B\left(e_i\right)\right)^1$ on the curvature $\Omega$; we have that
\[
\Omega^k_l\left(\left(B\left(e_i\right)\right)^1,\left(B\left(e_j\right)\right)^1\right)=d\omega^k_l\left(\left(B\left(e_i\right)\right)^1,\left(B\left(e_j\right)\right)^1\right)+C^k_{pi}C^p_{lj}-C^k_{pj}C^p_{li}
\]
where $C^i_{jk}$ are the components of $C$ in the basis $\left\{E^i_j\otimes e^k\right\}$. Now, the differential of the form $\omega^k_l$ reads
\[
d\omega^k_l\left(\left(B\left(e_i\right)\right)^1,\left(B\left(e_j\right)\right)^1\right)=
\left(B\left(e_i\right)\right)^1\cdot C^k_{lj}-\left(B\left(e_j\right)\right)^1\cdot C^k_{li}-\omega^k_l\left(\left[\left(B\left(e_i\right)\right)^1,\left(B\left(e_j\right)\right)^1\right]\right).
\]
Furthermore, using the identity
\[
\left[Y^1,Z^1\right]=\left(\left[Y,Z\right]\right)^1
\]
for any pair $Y,Z\in\mathfrak{X}\left(LM\right)$, we can rewrite the last term as follows
\[
\omega^k_l\left(\left[\left(B\left(e_i\right)\right)^1,\left(B\left(e_j\right)\right)^1\right]\right)=\omega^k_l\left(\left(\left[B\left(e_i\right),B\left(e_j\right)\right]\right)^1\right);
\]
recalling that $\sigma$ is torsionless, we have that $\left[B\left(e_i\right),B\left(e_j\right)\right]$ is vertical, and so \cite[Cor. (5.3)]{KN1}
\[
\left[B\left(e_i\right),B\left(e_j\right)\right]=-\left(R\left(B\left(e_i\right),B\left(e_j\right)\right)\right)_{LM},
\]
for $R$ the curvature form of $\sigma$, so that
\[
\omega^k_l\left(\left[\left(B\left(e_i\right)\right)^1,\left(B\left(e_j\right)\right)^1\right]\right)=-\omega^k_l\left(\left(R\left(B\left(e_i\right),B\left(e_j\right)\right)\right)_{J^1\tau}\right)=-R^k_l\left(B\left(e_i\right),B\left(e_j\right)\right)
\]
where the identity $\left(A_{LM}\right)^1=A_{J^1\tau}$, valid for any $A\in\mathfrak{gl}\left(m\right)$, was used. Finally
\[
\Omega^k_l\left(\left(B\left(e_i\right)\right)^1,\left(B\left(e_j\right)\right)^1\right)=\left(B\left(e_i\right)\right)^1\cdot C^k_{lj}-\left(B\left(e_j\right)\right)^1\cdot C^k_{li}+R^k_l\left(B\left(e_i\right),B\left(e_j\right)\right)+C^k_{pi}C^p_{lj}-C^k_{pj}C^p_{li}.
\]
This formula is the analogous of Equation $\left(2\right)$ in \cite[p. 44]{AshtekarNoPerturbative}.

Additionally, we have that
\[
\Omega^k_l\left(\left(B\left(e_i\right)\right)^1,\left(A_{LM}\right)^1\right)=d\omega^k_l\left(\left(B\left(e_i\right)\right)^1,\left(A_{LM}\right)^1\right)+C_{pi}^kA^p_l-A^k_pC^p_{li}
\]
for any $A\in\mathfrak{gl}\left(m\right)$. Again
\[
d\omega^k_l\left(\left(B\left(e_i\right)\right)^1,\left(A_{LM}\right)^1\right)=-\left(A_{LM}\right)^1\cdot C^k_{li}-\omega^k_l\left(\left[\left(B\left(e_i\right)\right)^1,\left(A_{LM}\right)^1\right]\right),
\]
and from
\[
\left[\left(B\left(e_i\right)\right)^1,\left(A_{LM}\right)^1\right]=\left(\left[B\left(e_i\right),A_{LM}\right]\right)^1=-\left(B\left(Ae_i\right)\right)^1=-A_i^p\left(B\left(e_p\right)\right)^1
\]
we can conclude with the formula
\[
d\omega^k_l\left(\left(B\left(e_i\right)\right)^1,\left(A_{LM}\right)^1\right)=-\left(A_{LM}\right)^1\cdot C^k_{li}+A^p_iC^k_{lp},
\]
namely
\[
\Omega^k_l\left(\left(B\left(e_i\right)\right)^1,\left(A_{LM}\right)^1\right)=-\left(A_{LM}\right)^1\cdot C^k_{li}+A^p_iC^k_{lp}+C_{pi}^kA^p_l-A^k_pC^p_{li}.
\]
But, from Equation \eqref{eq:InfGenOnC} we have that
\[
\left(A_{LM}\right)^1\cdot C^k_{li}=-A^p_lC^k_{pi}+C^p_{li}A_p^k+A^p_iC_{lp}^k
\]
and so
\[
\Omega^k_l\left(\left(B\left(e_i\right)\right)^1,\left(A_{LM}\right)^1\right)=0.
\]
It also can be proved in a more direct fashion by realizing that $\left(A_{LM}\right)^1=A_{J^1\tau}$ and $\Omega$ is a covariant derivative (and so, it annihilates on vertical vector fields for the projection $p_{GL\left(m\right)}^{J^1\tau}:J^1\tau\rightarrow J^1\tau/GL\left(m\right)=:C\left(LM\right)$).

Finally, we will compute the contraction of $\Omega^k_l$ with a pair consisting the vector field $\left(B\left(e_i\right)\right)^1$ and $\left(\theta^j,\left(A_{LM}\right)^V\right)$. Recalling that the forms $\theta,\omega$ are semibasic respect to the projection $\tau_{10}:J^1\tau\rightarrow LM$, we have that
\[
\Omega^k_l\left(\left(B\left(e_i\right)\right)^1,\left(\theta^j,A_{LM}\right)^V\right)=d\omega^k_l\left(\left(B\left(e_i\right)\right)^1,\left(\theta^j,A_{LM}\right)^V\right).
\]
Now
\[
d\omega^k_l\left(\left(B\left(e_i\right)\right)^1,\left(\theta^j,A_{LM}\right)^V\right)=-\left(\theta^j,A_{LM}\right)^V\cdot C^k_{li}-\omega^k_l\left(\left[\left(B\left(e_i\right)\right)^1,\left(\theta^j,A_{LM}\right)^V\right]\right),
\]
but from Lemma \ref{lem:LiftsBrackets}, we obtain that
\[
\left[\left(B\left(e_i\right)\right)^1,\left(\theta^j,A_{LM}\right)^V\right]\in\ker{T\tau_{10}},
\]
so for $A=E^p_q$,
\[
d\omega^k_l\left(\left(B\left(e_i\right)\right)^1,\left(\theta^j,A_{LM}\right)^V\right)=-\left(\theta^j,\left(E^p_q\right)_{LM}\right)^V\cdot C^k_{li}=-\delta^j_i\delta^p_l\delta^k_q;
\]
then we conclude that
\[
\Omega^k_l\left(\left(B\left(e_i\right)\right)^1,\left(\theta^j,\left(E^p_q\right)_{LM}\right)^V\right)=-\delta^j_i\delta^p_l\delta^k_q.
\]

Also, we will have that
\[
\Omega^q_p\left(\left(E^i_j\right)_{J^1\tau},\left(E^k_l\right)_{J^1\tau}\right)=d\omega^q_p\left(\left(E^i_j\right)_{J^1\tau},\left(E^k_l\right)_{J^1\tau}\right)
\]
because $\omega^q_r\left(\left(E^j_k\right)_{J^1\tau}\right)=\delta^q_k\delta^j_r$; additionally
\[
d\omega^q_p\left(\left(E^i_j\right)_{J^1\tau},\left(E^k_l\right)_{J^1\tau}\right)=-\omega^q_p\left(\left[\left(E^i_j\right)_{J^1\tau},\left(E^k_l\right)_{J^1\tau}\right]\right).
\]
Now from the fact that
\[
\left[E^i_j,E^k_l\right]=\delta^k_jE^i_l-\delta^i_lE^k_j,
\]
we obtain
\[
\left[\left(E^i_j\right)_{J^1\tau},\left(E^k_l\right)_{J^1\tau}\right]=-\left(\delta^k_jE^i_l-\delta^i_lE^k_j\right)_{J^1\tau}=\left(\delta^i_lE^k_j-\delta^k_jE^i_l\right)_{J^1\tau},
\]
namely
\[
\Omega^q_p\left(\left(E^i_j\right)_{J^1\tau},\left(E^k_l\right)_{J^1\tau}\right)=\left(\delta^k_jE^i_l-\delta^i_lE^k_j\right)^q_p=\delta^k_j\delta^i_p\delta^q_l-\delta^i_l\delta^k_p\delta^q_j.
\]

Another possible contraction is an infinitesimal generator with a vertical vector for $\tau_{10}$; it becomes
\[
\Omega^q_p\left(\left(E^i_j\right)_{J^1\tau},\left(\theta^r,\left(E^k_l\right)_{LM}\right)^V\right)=d\omega^q_p\left(\left(E^i_j\right)_{J^1\tau},\left(\theta^r,\left(E^k_l\right)_{LM}\right)^V\right).
\]
On the other hand,
\[
d\omega^q_p\left(\left(E^i_j\right)_{J^1\tau},\left(\theta^r,\left(E^k_l\right)_{LM}\right)^V\right)=-\omega^q_p\left(\left[\left(E^i_j\right)_{J^1\tau},\left(\theta^r,\left(E^k_l\right)_{LM}\right)^V\right]\right)=0
\]
because
\[
\left[\left(E^i_j\right)_{J^1\tau},\left(\theta^r,\left(E^k_l\right)_{LM}\right)^V\right]=\left(\theta^r,\left[\left(E^i_j\right)_{J^1\tau},\left(E^k_l\right)_{J^1\tau}\right]\right)^V
\]
as a consequence of Lemma \ref{lem:LiftsBrackets}; so
\[
\Omega^q_p\left(\left(E^i_j\right)_{J^1\tau},\left(\theta^r,\left(E^k_l\right)_{LM}\right)^V\right)=0.
\]

\subsection{Contraction of the torsion $T$ with elements of $B$}

Let us calculate the contraction of the elements of $B$ with the universal torsion $T$. We have that
\[
T^k\left(\left(B\left(e_i\right)\right)^1,\left(B\left(e_j\right)\right)^1\right)=d\theta^k\left(\left(B\left(e_i\right)\right)^1,\left(B\left(e_j\right)\right)^1\right)+C^k_{pi}\delta^p_j-C^k_{pj}\delta^p_i,
\]
because $\left(B\left(e_i\right)\right)^1\lrcorner\theta^k=\delta^k_i$. Additionally,
\[
d\theta^k\left(\left(B\left(e_i\right)\right)^1,\left(B\left(e_j\right)\right)^1\right)=
-\theta^k\left(\left[\left(B\left(e_i\right)\right)^1,\left(B\left(e_j\right)\right)^1\right]\right)=0
\]
using the fact that, because $\sigma$ is torsionless, the bracket
\[
\left[\left(B\left(e_i\right)\right)^1,\left(B\left(e_j\right)\right)^1\right]=\left(\left[B\left(e_i\right),B\left(e_j\right)\right]\right)^1=\left(R\left(B\left(e_i\right),B\left(e_j\right)\right)\right)_{J^1\tau}
\]
is a vector field tangent to the orbits of the action of $GL\left(m\right)$ on $J^1\tau$. Then
\[
T^k\left(\left(B\left(e_i\right)\right)^1,\left(B\left(e_j\right)\right)^1\right)=C^k_{ji}-C^k_{ij}.
\]

On the other hand, we can calculate
\begin{align*}
T^k\left(\left(B\left(e_i\right)\right)^1,\left(A_{LM}\right)^1\right)&=d\theta^k\left(\left(B\left(e_i\right)\right)^1,\left(A_{LM}\right)^1\right)-A^k_p\delta^p_i\\
  &=d\theta^k\left(\left(B\left(e_i\right)\right)^1,\left(A_{LM}\right)^1\right)-A^k_i.
\end{align*}
The differential becomes
\[
d\theta^k\left(\left(B\left(e_i\right)\right)^1,\left(A_{LM}\right)^1\right)=-\theta^k\left(\left[\left(B\left(e_i\right)\right)^1,\left(A_{LM}\right)^1\right]\right)
\]
where we used that $\theta^k\left(A_{J^1\tau}\right)=0$.

Therefore, from
\[
\left[\left(B\left(e_i\right)\right)^1,\left(A_{LM}\right)^1\right]=\left(\left[B\left(e_i\right),A_{LM}\right]\right)^1=-\left(B\left(Ae_i\right)\right)^1=-A_i^p\left(B\left(e_p\right)\right)^1
\]
we can conclude that
\[
d\theta^k\left(\left(B\left(e_i\right)\right)^1,\left(A_{LM}\right)^1\right)=A^k_i,
\]
and so
\[
T^k\left(\left(B\left(e_i\right)\right)^1,\left(A_{LM}\right)^1\right)=0.
\]
As before, it is only a check of the fact that $T^k$ is a exterior covariant derivative \cite{KN1}, and as such it must annihilates on infinitesimal generators of the $GL\left(m\right)$-action on $J^1\tau$.

Now let us calculate the contraction with elements tangent to the fibers of $\tau_{10}$. It results that
\[
T^k\left(\left(B\left(e_i\right)\right)^1,\left(\theta^j,A_{LM}\right)^V\right)=d\theta^k\left(\left(B\left(e_i\right)\right)^1,\left(\theta^j,A_{LM}\right)^V\right),
\]
as before. Moreover,
\[
d\theta^k\left(\left(B\left(e_i\right)\right)^1,\left(\theta^j,A_{LM}\right)^V\right)=-\theta^k\left(\left[\left(B\left(e_i\right)\right)^1,\left(\theta^j,A_{LM}\right)^V\right]\right),
\]
but from Lemma \ref{lem:LiftsBrackets}, we obtain that
\[
\left[\left(B\left(e_i\right)\right)^1,\left(\theta^j,A_{LM}\right)^V\right]\in\ker{T\tau_{10}},
\]
so
\[
d\theta^k\left(\left(B\left(e_i\right)\right)^1,\left(\theta^j,A_{LM}\right)^V\right)=0,
\]
and consequently
\[
T^k\left(\left(B\left(e_i\right)\right)^1,\left(\theta^j,A_{LM}\right)^V\right)=0.
\]

In a similar fashion can be proved that
\[
T^q\left(\left(E^i_j\right)_{J^1\tau},\left(E^k_l\right)_{J^1\tau}\right)=0=T^q\left(\left(E^i_j\right)_{J^1\tau},\left(\theta^r,\left(E^k_l\right)_{LM}\right)^V\right)
\]
and from here, that
\[
\left(\theta^r,\left(E^k_l\right)_{LM}\right)^V\lrcorner T^q=0.
\]

\bibliographystyle{alpha}

\begin{thebibliography}{BLEEdD{\etalchar{+}}07}

\bibitem[AB51]{PhysRev.83.1018}
J.~L. Anderson and P.~G. Bergmann.
\newblock Constraints in covariant field theories.
\newblock {\em Phys. Rev.}, 83:1018--1025, Sep 1951.

\bibitem[ADM04]{citeulike:820116}
R.~Arnowitt, S.~Deser, and C.~W. Misner.
\newblock {The Dynamics of General Relativity}.
\newblock {\em General Relativity and Gravitation}, 40(9):1997--2027, May 2004.

\bibitem[AF71]{doi:10.1119/1.1986321}
J.~L. Anderson and D. Finkelstein.
\newblock Cosmological constant and fundamental length.
\newblock {\em American Journal of Physics}, 39(8):901--904, 1971.

\bibitem[AGM15]{PhysRevD.92.024036}
E.~\'Alvarez and S.~Gonz\'alez-Mart\'{\i}n.
\newblock First order formulation of unimodular gravity.
\newblock {\em Phys. Rev. D}, 92:024036, Jul 2015.

\bibitem[Ash91]{AshtekarNoPerturbative}
A.~Ashtekar.
\newblock {\em Lectures on Non-Perturbative Canonical Gravity}.
\newblock Number~6 in Advanced Series in Astrophysics and Cosmology. World
  Scientific Pub Co Inc, 1991.

\bibitem[BCG{\etalchar{+}}91]{BCG}
R. L. Bryant, S. S. Chern, R. B. Gardner, H. L. Goldschmidt, and P. A. Griffiths.
\newblock {\em Exterior differential systems}.
\newblock Springer-Verlag, 1991.

\bibitem[BG86]{10.2307/2374654}
R. Bryant and P. Griffiths.
\newblock Reduction for constrained variational problems and
  $\int\frac{\kappa^2}{2}ds$.
\newblock {\em American Journal of Mathematics}, 108(3):525--570, 1986.

\bibitem[BK04]{brajercic04:_variat}
J.~Brajer\v{c}\'{\i}c and D.~Krupka.
\newblock Variational principles on the frame bundles.
\newblock {\em Preprint Series in Global Analysis and Applications, Departament
  of Algebra and Geometry, Palacky University}, 5:1--14, 2004.

\bibitem[BLEEdD{\etalchar{+}}07]{1751-8121-40-40-005}
M. Barbero-Li{\~{n}}\'an, A. Echeverr{\'{\i}}a-Enr{\'{\i}}quez,
  D.~Mart{\'{\i}}n de~Diego, M.~C Mu{\~{n}}oz-Lecanda, and N.
  Rom\'an-Roy.
\newblock Skinner--Rusk unified formalism for optimal control systems and
  applications.
\newblock {\em Journal of Physics A: Mathematical and Theoretical},
  40(40):12071, 2007.

\bibitem[Boj10]{bojowald2010canonical}
M.~Bojowald.
\newblock {\em Canonical Gravity and Applications: Cosmology, Black Holes, and
  Quantum Gravity}.
\newblock Cambridge University Press, 2010.

\bibitem[Bry11]{BryantNine}
R. Bryant.
\newblock Nine lectures on exterior differential systems.
\newblock 2011.

\bibitem[Cam10]{campos10:_geomet_method_class_field_theor_contin_media}
C.~M. Campos.
\newblock {\em {G}eometric {M}ethods in {C}lassical {F}ield {T}heory and
  {C}ontinuous {M}edia}.
\newblock PhD thesis, Departamento de Matem{\'a}ticas, Facultad de Ciencias,
  Universidad Aut{\'o}noma de Madrid, 2010.

\bibitem[Cap14]{capriotti14:_differ_palat}
S.~Capriotti.
\newblock Differential geometry, palatini gravity and reduction.
\newblock {\em Journal of Mathematical Physics}, 55(1):012902, 2014.

\bibitem[CC13]{2013arXiv1309.4080C}
H.~{Cendra} and S.~{Capriotti}.
\newblock {Cartan algorithm and Dirac constraints for Griffiths variational
  problems}.
\newblock {\em ArXiv e-prints}, September 2013.

\bibitem[CCI91]{Carinena1991345}
J. F. Cari{\~{n}}ena, M.~Crampin, and L. A. Ibort.
\newblock On the multisymplectic formalism for first order field theories.
\newblock {\em Differential Geometry and its Applications}, 1(4):345 -- 374,
  1991.

\bibitem[CLMnM01]{springerlink:10.1007/PL00004852}
M.~Castrill\'{o}n~L\'{o}pez and J.~Mu\~{n}oz Masqu\'{e}.
\newblock The geometry of the bundle of connections.
\newblock {\em Mathematische Zeitschrift}, 236:797--811, 2001.
\newblock 10.1007/PL00004852.

\bibitem[CV07]{375178}
F. Cantrijn and J. Vankerschaver.
\newblock The Skinner-Rusk approach for vakonomic and nonholonomic field
  theories.
\newblock In {\em Differential Geometric Methods in Mechanics and Field
  Theory}, pages 1--14. Academia Press, 2007.

\bibitem[Ded53]{dedecker1953calcul}
P. Dedecker.
\newblock Calcul des variations, formes diff{\'e}rentielles et champs
  g{\'e}od{\'e}siques.
\newblock {\em G{\'e}om{\'e}trie diff{\'e}rentielle}, 52:17, 1953.

\bibitem[Dir58a]{10.2307/100496}
P.~A.~M. Dirac.
\newblock Generalized hamiltonian dynamics.
\newblock {\em Proceedings of the Royal Society of London. Series A,
  Mathematical and Physical Sciences}, 246(1246):326--332, 1958.

\bibitem[Dir58b]{Dirac1958}
P.~A.~M. Dirac.
\newblock The theory of gravitation in Hamiltonian form.
\newblock {\em Proceedings of the Royal Society A: Mathematical, Physical and
  Engineering Sciences}, 246(1246):333--343, aug 1958.

\bibitem[dLMM03]{DeLeon:2002fc}
M.~de~Le\'on, J.~C. Marrero, and D.~{Martin de Diego}.
\newblock {A New geometric setting for classical field theories}.
\newblock {\em Banach Center Publ.}, 59:189--209, 2003.

\bibitem[dLMSM96]{de1996geometrical}
M~de~Le\'on, J~Mar\'{\i}n-Solano, and J~C Marrero.
\newblock A geometrical approach to classical field theories: a constraint
  algorithm for singular theories.
\newblock {\em New Developments in Differential Geometry (Debrecen, 1994),
  Editors L. Tamassi and J. Szenthe, Math. Appl}, 350:291--312, 1996.

\bibitem[dLMSM{\etalchar{+}}05]{zbMATH02233555}
M. de~Le\'on, J. Mar\'{\i}n-Solano, J.~C. Marrero, M.~C.
  Mu\~noz Lecanda, and N. Rom\'an-Roy.
\newblock {Pre-multisymplectic constraint algorithm for field theories.}
\newblock {\em {Int. J. Geom. Methods Mod. Phys.}}, 2(5):839--871, 2005.

\bibitem[ELM{\etalchar{+}}04]{2004JMP....45..360E}
A.~{Echeverr{\'{\i}}a-Enr{\'{\i}}quez}, C.~{L{\'o}pez},
  J.~{Mar{\'{\i}}n-Solano}, M.~C. {Mu{\~n}oz-Lecanda}, and N.~{Rom{\'a}n-Roy}.
\newblock {Lagrangian-Hamiltonian unified formalism for field theory}.
\newblock {\em Journal of Mathematical Physics}, 45:360--380, January 2004.

\bibitem[Est05]{PhysRevD.71.044004}
F.~B. Estabrook.
\newblock Mathematical structure of tetrad equations for vacuum relativity.
\newblock {\em Phys. Rev. D}, 71:044004, Feb 2005.

\bibitem[Est14]{Estabrook:2014hfa}
------.
\newblock {Exterior Differential Systems for Field Theories}.
\newblock 2014.

\bibitem[Gar72]{MR0315624}
P.~L. Garc{\'{\i}}a.
\newblock Connections and {$1$}-jet fiber bundles.
\newblock {\em Rend. Sem. Mat. Univ. Padova}, 47:227--242, 1972.

\bibitem[GIM04]{Gotay:2004ib}
M. J. Gotay, J.~Isenberg, and J. E. Marsden.
\newblock {Momentum maps and classical relativistic fields. II: Canonical
  analysis of field theories}.
\newblock 2004.

\bibitem[GMS97]{9789810215873}
G. Giachetta, L. Mangiarotti, and G.~A. Sardanashvily.
\newblock {\em New Lagrangian and Hamiltonian Methods I}.
\newblock World Scientific Publishing Company, 1997.

\bibitem[Got91a]{GotayCartan}
M. J. Gotay.
\newblock An exterior differential system approach to the {C}artan form.
\newblock In P.~Donato, C.~Duval, J.~Elhadad, and G.M. Tuynman, editors, {\em
  Symplectic geometry and mathematical physics. Actes du colloque de
  g\'eom\'etrie symplectique et physique math\'ematique en l'honneur de
  Jean-Marie Souriau, Aix-en-Provence, France, June 11-15, 1990.}, pages
  160--188. Progress in Mathematics. 99. Boston, MA, Birkh\"auser, 1991.

\bibitem[Got91b]{Gotay1991203}
------.
\newblock A multisymplectic framework for classical field theory and the
  calculus of variations i: Covariant hamiltonian formalism.
\newblock {\em Mechanics, Analysis and Geometry: 200 Years after Lagrange},
  pages 203--235, 1991.

\bibitem[Gri82]{book:852048}
P. Griffiths.
\newblock {\em {E}xterior {D}ifferential {S}ystems and the {C}alculus of
  {V}ariations}.
\newblock Progress in Mathematics. Birkhauser, 1982.

\bibitem[GRR17]{Gaset:2017ahy}
J. Gaset and N. Román-Roy.
\newblock {Multisymplectic unified formalism for Einstein-Hilbert Gravity}.
\newblock 2017.

\bibitem[Har97]{Hartley:1997:IAN:2274723.2275278}
D.~Hartley.
\newblock Involution analysis for nonlinear exterior differential systems.
\newblock {\em Math. Comput. Model.}, 25(8-9):51--62, April 1997.

\bibitem[H{\'e}l09]{helein:hal-00599691}
F. H{\'e}lein.
\newblock {Multisymplectic formalism and the covariant phase}.
\newblock In {\em {Variational Problems in Differential Geometry}}, Leeds,
  Royaume-Uni, April 2009.

\bibitem[HK04]{helein04:_hamil}
F.~H\'{e}lein and J.~Kouneiher.
\newblock The notion of observable in the covariant hamiltonian formalism for
  the calculus of variations with several variables.
\newblock {\em Adv. Theor. Math. Phys.}, 8:735--777, 2004.

\bibitem[Hsu92]{hsu92:_calcul_variat_griff}
L.~Hsu.
\newblock Calculus of variations via the {G}riffiths formalism.
\newblock {\em J. Diff. Geom.}, 36:551--589, 1992.

\bibitem[IL03]{CartanBeginners}
T.~A. Ivey and J.~M. Landsberg.
\newblock {\em Cartan for beginners: differential geometry via moving frames
  and exterior differential systems}.
\newblock Graduate Texts in Mathematics. American Mathematical Society, 2003.

\bibitem[IS16]{Ibort:2016xoo}
A. Ibort and A. Spivak.
\newblock {On A Covariant Hamiltonian Description of Palatini's Gravity on
  Manifolds with Boundary}.
\newblock 2016.

\bibitem[Kam00]{nkamran2000}
N.~Kamran.
\newblock An elementary introduction to exterior differential systems.
\newblock In {\em Geometric approaches to differential equations ({C}anberra,
  1995)}, volume~15 of {\em Austral. Math. Soc. Lect. Ser.}, pages 100--115.
  Cambridge Univ. Press, Cambridge, 2000.

\bibitem[Ker78]{Friedric-1978}
F. W. Hehl; G.~D. Kerlick.
\newblock Metric-affine variational principles in general relativity. i.
  Riemannian space-time.
\newblock {\em General Relativity and Gravitation}, 9, 1978.

\bibitem[Kij73]{Kijowski1973}
J. Kijowski.
\newblock A finite-dimensional canonical formalism in the classical field
  theory.
\newblock {\em Communications in Mathematical Physics}, 30(2):99--128, jun
  1973.

\bibitem[KN63]{KN1}
S.~Kobayashi and K.~Nomizu.
\newblock {\em Foundations of Differential Geometry}, volume~1.
\newblock Wiley, 1963.

\bibitem[Kru10]{doi:10.1063/1.3305321}
B. Kruglikov.
\newblock Involutivity of field equations.
\newblock {\em Journal of Mathematical Physics}, 51(3):032502, 2010.

\bibitem[KT79]{kijowski79:_sympl_framew_field_theor}
J. Kijowski and W.~M. Tulczyjew, editors.
\newblock {\em A Symplectic Framework for Field Theories}.
\newblock Springer Berlin Heidelberg, 1979.

\bibitem[LMM14]{doi:10.1063/1.4890555}
M.~Castrill{\'o}n L{\'o}pez, J.~Mu{\~n}oz Masqu{\'e}, and E.~Rosado
  Mar{\'{\i}}a.
\newblock First-order equivalent to Einstein-Hilbert Lagrangian.
\newblock {\em Journal of Mathematical Physics}, 55(8):082501, 2014.

\bibitem[Mak16]{makhmali16:_differ}
O. Makhmali.
\newblock {\em Differential geometric aspects of causal structures}.
\newblock PhD thesis, Department of Mathematics and Statistics, McGill
  University, 2016.

\bibitem[MN07]{0264-9381-24-22-005}
E. Musso and L. Nicolodi.
\newblock Closed trajectories of a particle model on null curves in anti-de
  Sitter 3-space.
\newblock {\em Classical and Quantum Gravity}, 24(22):5401, 2007.

\bibitem[MS12a]{1751-8121-45-6-065202}
P.~Morando and S.~Sammarco.
\newblock Reduction of exterior differential systems for ordinary variational
  problems.
\newblock {\em Journal of Physics A: Mathematical and Theoretical},
  45(6):065202, 2012.

\bibitem[MS12b]{Morando2012}
------.
\newblock Variational problems with symmetries: A Pfaffian system approach.
\newblock {\em Acta Applicandae Mathematicae}, 120(1):255--274, Aug 2012.

\bibitem[NS95]{9780521441773}
K. Nomizu and T. Sasaki.
\newblock {\em Affine Differential Geometry: Geometry of Affine Immersions
  (Cambridge Tracts in Mathematics)}.
\newblock Cambridge University Press, 1995.

\bibitem[PMRR14]{2014arXiv1402.4087P}
P.~D. Prieto-Mart{\'{\i}}nez and N. Rom\'an-Roy.
\newblock {A multisymplectic unified formalism for second-order classical field
  theories}.
\newblock {\em ArXiv e-prints}, February 2014.

\bibitem[PMRR15]{doi:10.1142/S0219887815600191}
------.
\newblock Variational principles for multisymplectic second-order classical
  field theories.
\newblock {\em International Journal of Geometric Methods in Modern Physics},
  0(0):1560019, 2015.

\bibitem[Poi04]{9780521830911}
E. Poisson.
\newblock {\em A Relativist's Toolkit: The Mathematics of Black-Hole
  Mechanics}.
\newblock Cambridge University Press, 2004.

\bibitem[PS50]{PhysRev.79.986}
F.~A.~E. Pirani and A.~Schild.
\newblock On the quantization of Einstein's gravitational field equations.
\newblock {\em Phys. Rev.}, 79:986--991, Sep 1950.

\bibitem[PSS52]{PhysRev.87.452}
F.~A.~E. Pirant, A.~Schild, and R.~Skinner.
\newblock Quantization of Einstein's gravitational field equations. ii.
\newblock {\em Phys. Rev.}, 87:452--454, Aug 1952.

\bibitem[Ray75]{Ray1975}
J.~R. Ray.
\newblock Palatini variational principle.
\newblock {\em Il Nuovo Cimento B (1971-1996)}, 25(2):706--710, Feb 1975.

\bibitem[Sau89]{saunders89:_geomet_jet_bundl}
D.~J. Saunders.
\newblock {\em The Geometry of Jet Bundles}.
\newblock Cambridge University Press, 1989.

\bibitem[SE76]{Safko1976}
J.~L. Safko and F.~Elston.
\newblock Lagrange multipliers and gravitational theory.
\newblock {\em Journal of Mathematical Physics}, 17(8):1531--1537, aug 1976.

\bibitem[SS16]{sabau_shibuya_2016}
S.~V. Sabau and K. Shibuya.
\newblock A variational problem for curves on Finsler surfaces.
\newblock {\em Journal of the Australian Mathematical Society},
  101(3):418–430, 2016.

\bibitem[Thi08]{Thiemann:2007zz}
T. Thiemann.
\newblock {\em {Modern canonical quantum general relativity}}.
\newblock Cambridge University Press, 2008.

\bibitem[VCB05]{0264-9381-22-19-016}
S.~Vignolo, R.~Cianci, and D.~Bruno.
\newblock A first-order purely frame-formulation of general relativity.
\newblock {\em Classical and Quantum Gravity}, 22(19):4063, 2005.

\bibitem[VCB06]{Vignolo2006}
------.
\newblock {General relativity as a constrained gauge theory}.
\newblock {\em Int. J. Geom. Meth. Mod. Phys.}, 3:1493--1500, 2006.

\bibitem[Vey15]{0264-9381-32-9-095005}
D.~Vey.
\newblock {M}ultisymplectic formulation of vielbein gravity: {I}. {D}e
  {D}onder-{W}eyl formulation, {H}amiltonian $\left(n-1\right)$-forms.
\newblock {\em Classical and Quantum Gravity}, 32(9):095005, 2015.

\bibitem[Vit10]{Vitagliano2010857}
L. Vitagliano.
\newblock The Lagrangian-Hamiltonian formalism for higher order field theories.
\newblock {\em Journal of Geometry and Physics}, 60(6–8):857 -- 873, 2010.

\end{thebibliography}
\newcommand{\etalchar}[1]{$^{#1}$}

\end{document}